\documentclass{article}

\usepackage[english]{babel}

\usepackage[letterpaper,top=2cm,bottom=2cm,left=3cm,right=3cm,marginparwidth=1.75cm]{geometry}

\usepackage{amsfonts,amsmath,amssymb,amsthm}
\usepackage{algorithm}
\usepackage{algpseudocode}
\usepackage{graphicx}
\usepackage{svg}
\usepackage{float}
\usepackage[colorlinks=true, allcolors=blue]{hyperref}
\usepackage[style=nature]{biblatex}
\usepackage{cleveref}
\usepackage{hyperref}
\usepackage{parskip}
\usepackage{booktabs}
\usepackage{mathtools}
\usepackage{enumerate}
\usepackage{authblk}
\usepackage{abstract}
\usepackage{titlesec}
\usepackage[font=small,labelfont=bf]{caption}
\usepackage{lineno}

\titleformat{\section}{\large\bfseries}{\thesection}{1em}{}
\titleformat{\subsection}{\normalsize\bfseries}{\thesubsection}{1em}{}

\providecommand{\keywords}[1]{
  \noindent\small
  \textbf{\textit{Keywords}} #1
}

\newtheoremstyle{mystyle}%
  {\parskip}%
  {}%
  {\itshape}%
  {}%
  {\bfseries}%
  {.}%
  { }%
  {\thmname{#1}\thmnumber{ #2}\thmnote{ (#3)}}%

\theoremstyle{mystyle}
\newtheorem{theorem}{Theorem}

\newtheorem{corollary}{Corollary}[theorem]
\newtheorem{lemma}{Lemma}
\newtheorem{definition}{Definition}[section]
\newtheorem{proposition}{Proposition}[section]

\theoremstyle{remark}

\usepackage{xpatch}
\xpretocmd{\eqref}{Eq.~}{}{}

\newcommand{\abs}[1]{\left\lvert #1 \right\rvert}
\newcommand{\norm}[1]{\left\lVert #1 \right\rVert}
\newcommand{\KL}[2]{\operatorname{KL}(#1 \Vert #2)}

\title{
\textbf{\Large On the consistent and scalable detection of spatial patterns}
}
\author[1,2,3\#]{Jiayu Su}
\author[1,2]{Jun Hou Fung}
\author[2,4]{Haoyu Wang}
\author[2,4]{Dian Yang}
\author[2,3,5\#]{David A. Knowles}
\author[1,2,6\#]{Raul Rabadan}

\affil[1]{Program for Mathematical Genomics, Columbia University, New York, NY, USA}
\affil[2]{Department of Systems Biology, Columbia University, New York, NY, USA}
\affil[3]{New York Genome Center, New York, NY, USA}
\affil[4]{Department of Molecular Pharmacology and Therapeutics, Columbia University, New York, NY, USA}
\affil[5]{Department of Computer Science, Columbia University, New York, NY, USA}
\affil[6]{Department of Biomedical Informatics, Columbia University, New York, NY, USA}

\date{}

\begin{document}
\begin{refsection}

\maketitle

\begingroup
\renewcommand\thefootnote{\#}
\footnotetext{Correspondence: js5756@cumc.columbia.edu; dak2173@columbia.edu; rr2579@cumc.columbia.edu}
\endgroup

\begin{abstract}
Detecting spatial patterns is fundamental to scientific discovery, yet current methods lack statistical consensus and face computational barriers when applied to large-scale spatial omics datasets. We unify major approaches through a single quadratic form and derive general consistency conditions. We reveal that several widely used methods, including Moran's I, are inconsistent, and propose scalable corrections. The resulting test enables robust pattern detection across millions of spatial locations and single-cell lineage-tracing datasets.

\keywords{Spatial statistics, Spatial variability, Moran's I, Single-cell lineage tracing}

\end{abstract}

\section{Main}

Since Hooke's first microscopic glimpses of cellular structure in 1665, scientific progress has relied on the recognition of spatial patterns. The rapid rise of spatial omics\autocite{carstens2024spatial} has given this task new urgency: platforms now measure tens of thousands of molecular features simultaneously, rendering manual inspection infeasible and fueling an explosion of computational methods for detecting spatially variable genes (SVGs)\autocite{svensson2018spatialde, edsgard2018identification, sun2020statistical, detomaso2021hotspot, zhu2021spark, chang2024graph} and beyond\autocite{su2026mapping}. Yet benchmarking studies\autocite{charitakis2023disparities, chen2024evaluating, yan2025categorization, chen2025benchmarking} reveal a fragmented landscape where methods disagree and performance varies unpredictably, with no principled explanation for these discrepancies. Furthermore, current tools struggle to keep pace with technological maturation; few, if any, can scale to modern datasets comprising millions of locations. In this work, we address these challenges through theoretical unification and general algorithmic acceleration. We demonstrate that virtually all major SVG detection approaches—including Moran’s I\autocite{moran1948interpretation}, parametric models, and non-parametric tests—are mathematically equivalent instances of a single quadratic-form test (\textit{Q-test}), differing primarily in their choice of spatial kernel. This recasts spatial variability detection as a problem of kernel spectrum design, revealing general consistency conditions that explain when and why methods succeed or fail and stripping the computational requirements down to their first-principles essentials.

\begin{figure}[t!]
    \centering
    \renewcommand{\thefigure}{1} %
    \includegraphics[width=\textwidth]{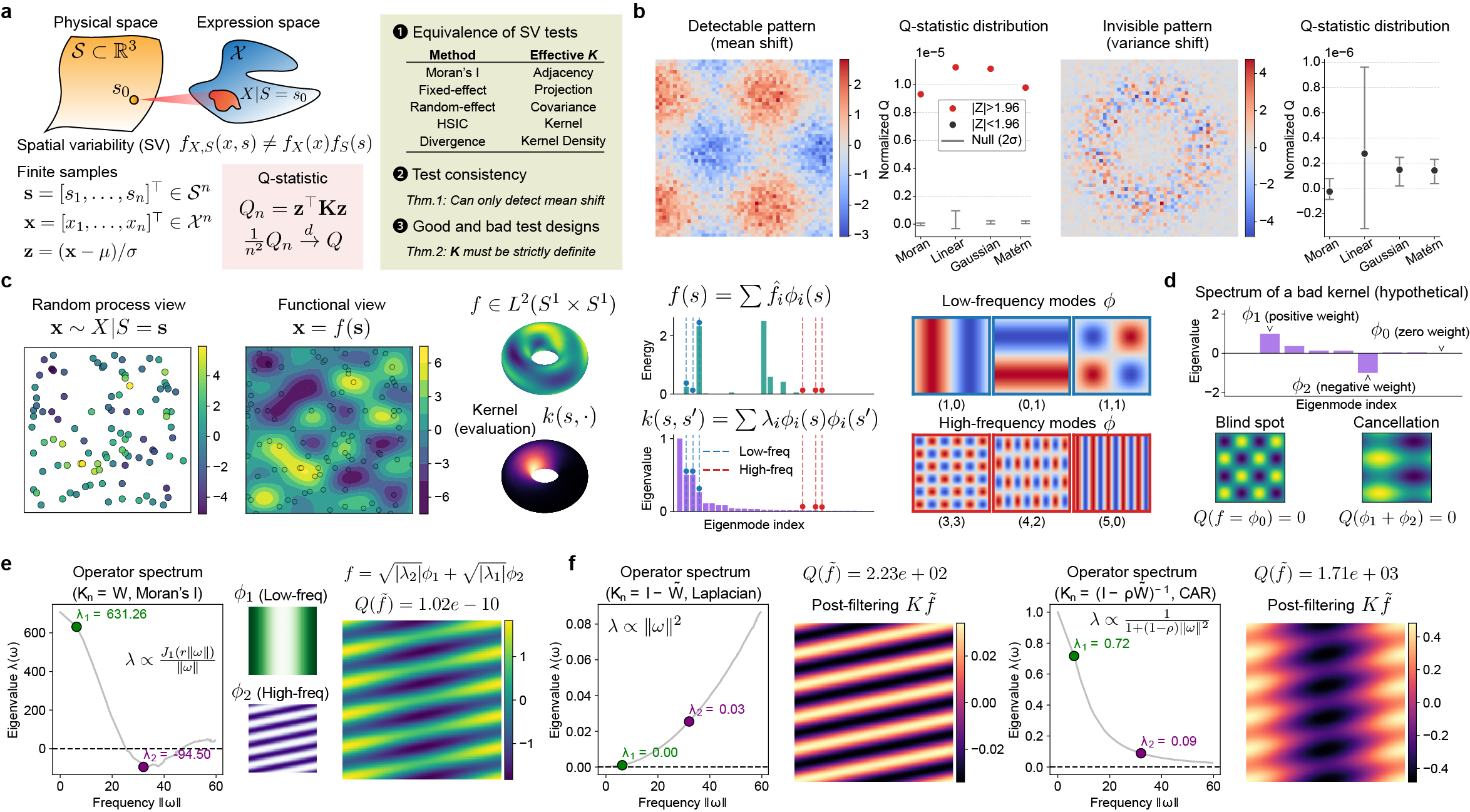}
    \caption{\textbf{Theoretical analysis of spatial variability tests.} \textbf{(a)} Summary of main theoretical results. \textbf{(b)} Q-test can only detect mean-shift patterns (left); variance shifts (right) remain invisible. \textbf{(c)} Spatial patterns can be modeled as deterministic functions in a Hilbert space decomposable into a frequency basis. The discrete kernel matrix converges to an integral operator with a corresponding spectral decomposition. \textbf{(d)} A hypothetical kernel spectrum illustrating two sources of false negatives: ``blind spots" (signal aligns with the kernel's null space) and ``cancellation" (signal energy is split between positive and negative eigenvalues summing to zero). \textbf{(e)} Spectral cancellation of Moran's I. The adjacency kernel with a fixed distance cutoff has an indefinite and oscillatory spectrum. \textbf{(f)} Graph Laplacian and its inverse (CAR) enforce strict definiteness, resolving cancellation but restricting sensitivity to high or low frequency regimes.}
    \label{fig:1}
\end{figure}

We begin by formulating pattern detection as a general dependency testing problem (Figure \ref{fig:1}a). Under the null hypothesis of independent and identically distributed (i.i.d.) measurements, we derive a test statistic $Q_n = \mathbf{z}^\top \mathbf{K} \mathbf{z}$, where $\mathbf{z}$ denotes standardized observations and $\mathbf{K}$ encodes spatial structure (Supplementary Note \ref{supnote:2}). By evaluating score statistics at the null, we show that common regression-based SVG tests reduce asymptotically to this same form. We further link $Q_n$ to the distributional distance between the joint and factorized marginal distributions, connecting it to dependence measures such as the Hilbert-Schmidt Independence Criterion (HSIC)\autocite{gretton2005measuring} and Kullback-Leibler divergence. Importantly, our analysis challenges a previous categorization\autocite{yan2025categorization} where methods labeled ``kernel-free" are not kernel-free in a mathematical sense (Supplementary Note \ref{supnote:1}). Graph-based statistics such as Moran’s I rely on the adjacency matrix, while dependence-based approaches typically introduce kernels implicitly through density estimation.

This unification allows us to rigorously diagnose test consistency—that is, the ability to reliably detect any pattern as the number of profiled locations increases. We prove that all Q-tests are limited to detecting mean-shift patterns (Theorem 1; Figure \ref{fig:1}b). While higher-order dependencies are theoretically undetectable by $Q_n$, they are empirically indistinguishable from pure mean shifts due to the single-observation-per-location constraint of spatial omics. We therefore adopt a functional perspective, treating spatial patterns as deterministic elements of a Hilbert space that can be expanded in a frequency basis (Supplementary Note \ref{supnote:3}; Figure \ref{fig:1}c). Under this view, we prove a test is consistent if and only if the kernel is strictly definite (Theorem 2; Figure \ref{fig:1}d). Indefinite kernels, including the adjacency matrix in Moran’s I, suffer from spectral cancellation: signals aligned with eigenmodes of opposite sign offset one another, producing false negatives even for strong signals (Figure \ref{fig:1}e).

Among consistent Q-tests, no single kernel is universally optimal. Because scaling is irrelevant, prioritizing one frequency naturally necessitates trivializing others. For instance, the Graph Laplacian\autocite{govek2019clustering} targets sharp, localized changes of high frequencies, while its inverse (the conditional autoregressive, or CAR\autocite{su2023smoother}, kernel) targets smooth, broad-scale trends of low frequencies (Figure \ref{fig:1}f). Combining the two can broaden sensitivity but inevitably dilutes power at the extremes. In practice, it is common to favor smooth spatial patterns while maintaining a heavy spectral tail to capture potentially interesting local variation, motivating the polynomial decay of Matérn and CAR kernels\autocite{lindgren2011explicit}. Spectra can also be learned from data implicitly through neural networks, though this requires care to avoid overfitting and does not guarantee superior detection (Supplementary Note \ref{supnote:3.data_driven_kernel}). Finally, not all kernels are computationally efficient for datasets scaling to millions of locations. We introduce two accelerations to address this (\nameref{sec:methods}). For general geometries, we optimize the CAR kernel via implicit matrix–vector products and Hutchinson’s trace estimator\autocite{hutchinson1989stochastic} to avoid dense kernel matrices (Figure \ref{fig:s1}a–b). For grids, we leverage the Toeplitz structure to compute the quadratic form via fast Fourier transforms, reducing complexity from $O(N^3)$ to $O(N \log N)$ for all kernels while accurately approximating the spectrum (Figure \ref{fig:s1}c; Figure \ref{fig:s2}a). These techniques enable SVG detection on one million spots in under one second—orders of magnitude faster and more memory-efficient than existing implementations (Figure \ref{fig:s2}b).

Since biological variation is rarely random, theoretical limitations in spatial pattern detection may not always manifest in real data. To assess the practical implications, we designed a gene-ranking experiment using noise-corrupted patterns and found Moran’s I to be the least robust, particularly at low expression levels (Figure \ref{fig:s2}c). Similarly, in a mouse small intestine Visium HD dataset\autocite{10x_visium_hd_intestine}, Moran’s I diverged from other kernel methods in the top variable genes, a discrepancy driven primarily by low-expression features (Figure \ref{fig:s2}d–e). Leveraging the scalability of our implementation, we also illustrate the value of multi-resolution comparison. For instance, the ``Swiss roll" pattern introduced during sample preparation was preferentially detected at 8 µm resolution due to signal dampening at the 16 µm Nyquist limit (Figure \ref{fig:s2}f–g). Finally, we evaluated Moran’s R\autocite{wartenberg1985multivariate}, the bivariate counterpart to Moran’s I often used for spatial co-expression\autocite{li2023spatialdm} (\nameref{sec:methods}). Despite sharing similar theoretical vulnerabilities, these limitations primarily reduce detection power through magnitude attenuation while preserving gene-pair rankings (Figure \ref{fig:s2}h). The resilience likely stems from the fact that high-frequency patterns are rare in real spatial transcriptomics and typically lack the magnitude required to fully cancel out low-frequency biological signals.

\begin{figure}[t!]
    \centering
    \renewcommand{\thefigure}{2} %
    \includegraphics[width=\textwidth]{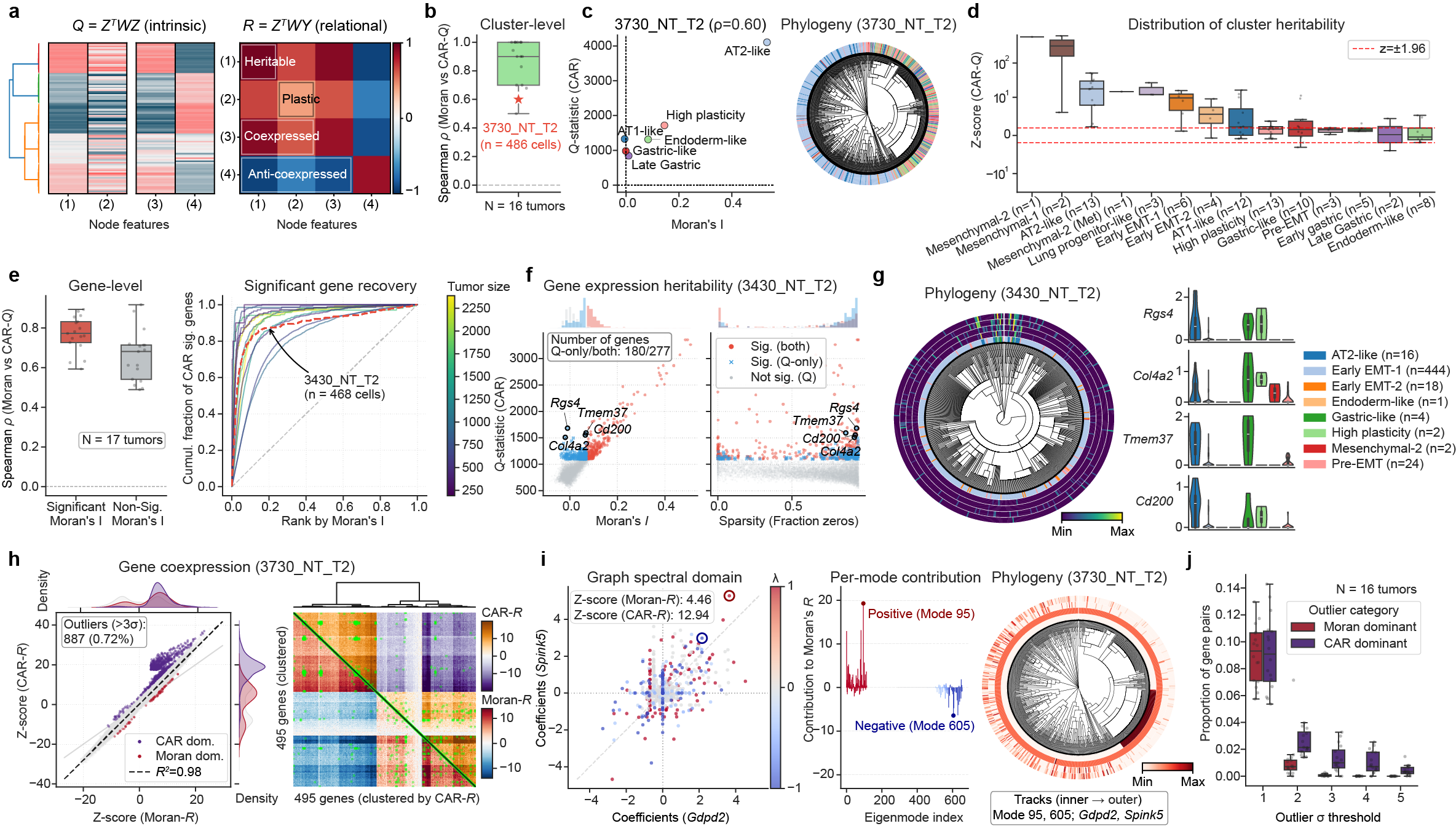}
    \caption{\textbf{Detecting heritability and plasticity in tumor lineage tracing data.} \textbf{(a)} Schematic of intrinsic ($Q$) and relational ($R$) pattern detection showing four hypothetical genes: (1) heritable (phylogeny-aligned); (2) plastic (unaligned); (3) co-expressed with (1); and (4) anti-coexpressed with (1).
    \textbf{(b)} Cluster-level heritability concordance (Spearman $\rho$ between Moran’s I and CAR-$Q$) across 16 tumors.
    \textbf{(c)} Cell-state heritability in tumor 3730\_NT\_T2 showing $Q$-statistic comparison (left) and tumor phylogeny (right). 
    \textbf{(d)} Distribution of cell-state heritability Z-scores (CAR-$Q$) across tumors. Red dashed lines indicate significance thresholds ($Z=\pm1.96$). 
    \textbf{(e)} Gene-level performance heritability concordance. Left: correlation of test statistics for genes stratified by Moran’s I significance (p-val $< 0.05$). Right: cumulative recovery curves showing fraction of CAR-significant genes (p-val $< 0.05$) recovered when ranked by Moran’s I in decreasing order. 
    \textbf{(f)} Gene expression heritability (left) and sparsity (right) in tumor 3430\_NT\_T2. Red: genes significant by both methods; blue: significant only by CAR (p-val $< 0.05$). 
    \textbf{(g)} Log-normalized expression of discordant genes from \textbf{(f)}. Left: phylogeny with tracks showing cell-state annotation and log-normalized expression (inner to outer: \textit{Rgs4}, \textit{Col4a1}, \textit{Tmem37}, \textit{Cd200}). Right: expression distributions by cell state.
    \textbf{(h)} Bivariate co-expression analysis. Left: Z-scores (Moran-$R$ vs. CAR-$R$) of gene pairs with regression outliers highlighted. Right: heatmap of gene modules clustered by CAR-$R$; green boxes indicate outlier pairs. 
    \textbf{(i)} Spectral decomposition of a discordant gene pair (\textit{Spink5}, \textit{Gdpd2}). Left: graph spectral coefficients. Middle: eigenmode contribution to Moran's R showing cancellation between Mode 95 (red, positive) and Mode 605 (blue, negative). Right: phylogeny showing log-normalized expression of both genes and the two eigenmodes. \textbf{(j)} Proportion of method-specific outlier gene pairs across varying outlier thresholds ($\sigma$). 
    Boxplots in \textbf{b}, \textbf{d}, \textbf{e}, \textbf{g}, \textbf{j} show median (center line), interquartile range (box), and 1.5× interquartile range (whiskers).}
    \label{fig:2}
\end{figure}

Beyond physical space, our framework generalizes to single-cell data by assessing variability over similarity graphs\autocite{detomaso2021hotspot}. On such application is the characterization of expression heritability along a single-cell tumor phylogenetic tree. Here, low-frequency patterns correspond to stable gene programs inherited from a common ancestor and maintained across offspring, while high-frequency variation within the same branch implies plasticity\autocite{schiffman2024defining} (Figure \ref{fig:2}a). To uncover these patterns, we analyzed a lung cancer lineage tracing dataset characterized by irregular and sparse clones\autocite{yang2022lineage} (\nameref{sec:methods}). At the cell-state level, Moran’s I and the CAR kernel show high concordance, as rankings were driven by strong clustering signals detectable by both (Figure \ref{fig:2}b–c). Across tumors, we recapitulated known heritability patterns for each cell state (Figure \ref{fig:2}d). At the gene level, Moran’s I exhibited a high false-negative rate (Figure \ref{fig:2}e). Closer examination revealed that it failed to detect markers for rare subclones because sparse signals were obscured by high-frequency intra-clone variation (Figure \ref{fig:2}f–g). In contrast, the CAR kernel, which is by design robust to such local fluctuations, reliably identified these heritable programs. Finally, in bivariate co-expression analysis, the same spectral limitations of Moran’s R caused false negatives in detecting co-regulated modules (Figure \ref{fig:2}h–j).

In summary, we establish the theoretical cohesion for spatial pattern detection and identify kernel spectrum as the key statistical ingredient. We show that classic metrics like Moran’s I suffer from spectral cancellation—a flaw that, while subtle in spatial data, causes persistent signal loss and false negatives in graph-based analyses. To resolve this, we propose an efficient correction using the low-pass CAR kernel. Combined with our scalable open-source implementation, this work provides the mathematically sound foundation necessary for reliable pattern detection across the expanding landscape of spatial omics and beyond.

\begin{figure}[h]
    \centering
    \renewcommand{\thefigure}{S1} %
    \includegraphics[width=0.9\textwidth]{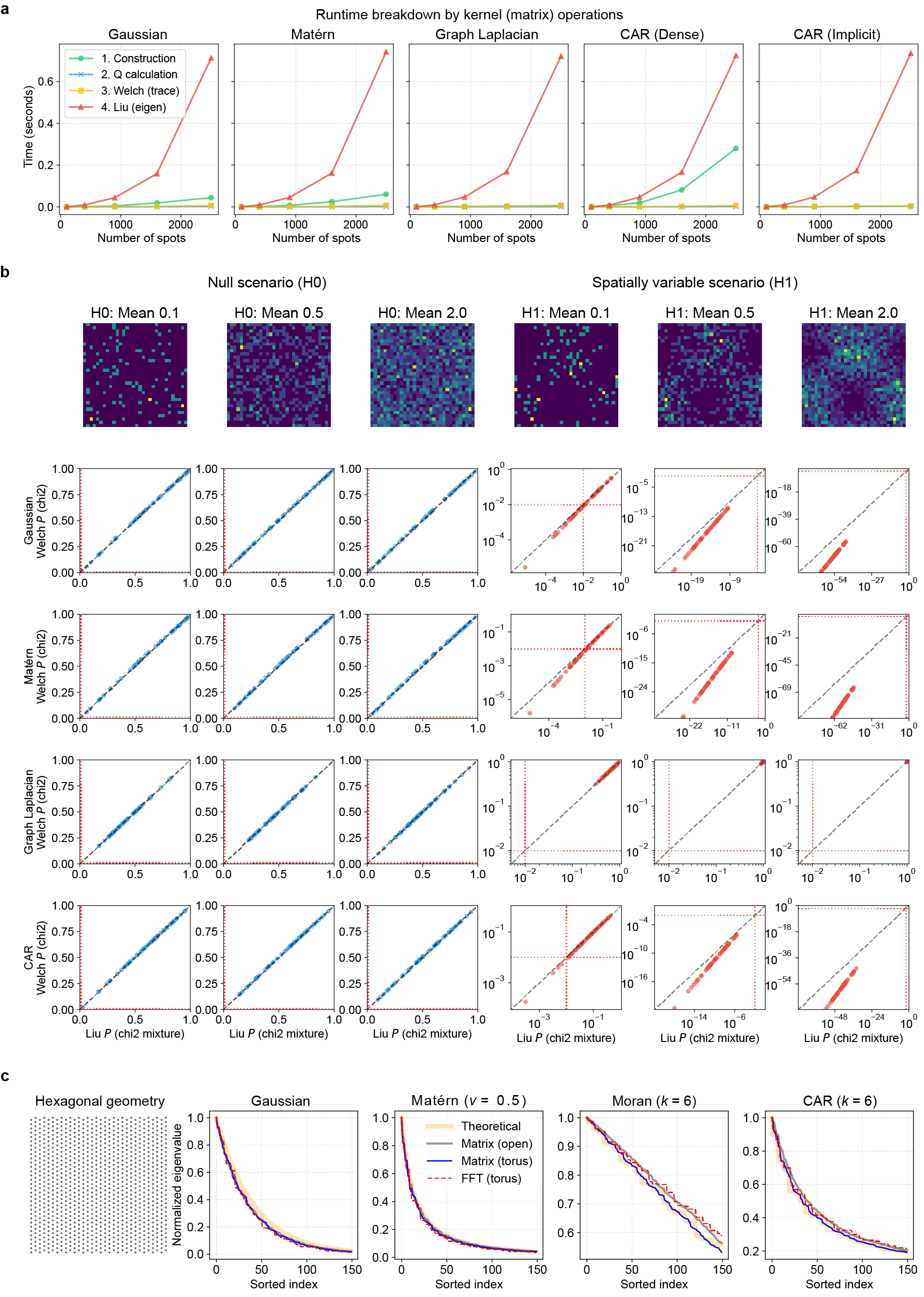}
    \caption{\textbf{Accuracy and efficiency of spectral approximation.} \textbf{(a)} Runtime breakdown of Q-test operations across different kernels. Exact eigen-decomposition for $p$-value computation (Liu method) represents the primary bottleneck. \textbf{(b)} Comparison of $p$-values derived from the fast Welch-Satterthwaite approximation (y-axis) and the Liu method (x-axis) under both null and spatially variable scenarios, each with 50 replicates. \textbf{(c)} FFT-based computation of kernel spectra on hexagonal grids. Open: open boundary condition; Torus: periodic boundary condition.}
    \label{fig:s1}
\end{figure}

\begin{figure}[h]
    \centering
    \renewcommand{\thefigure}{S2} %
    \includegraphics[width=\textwidth]{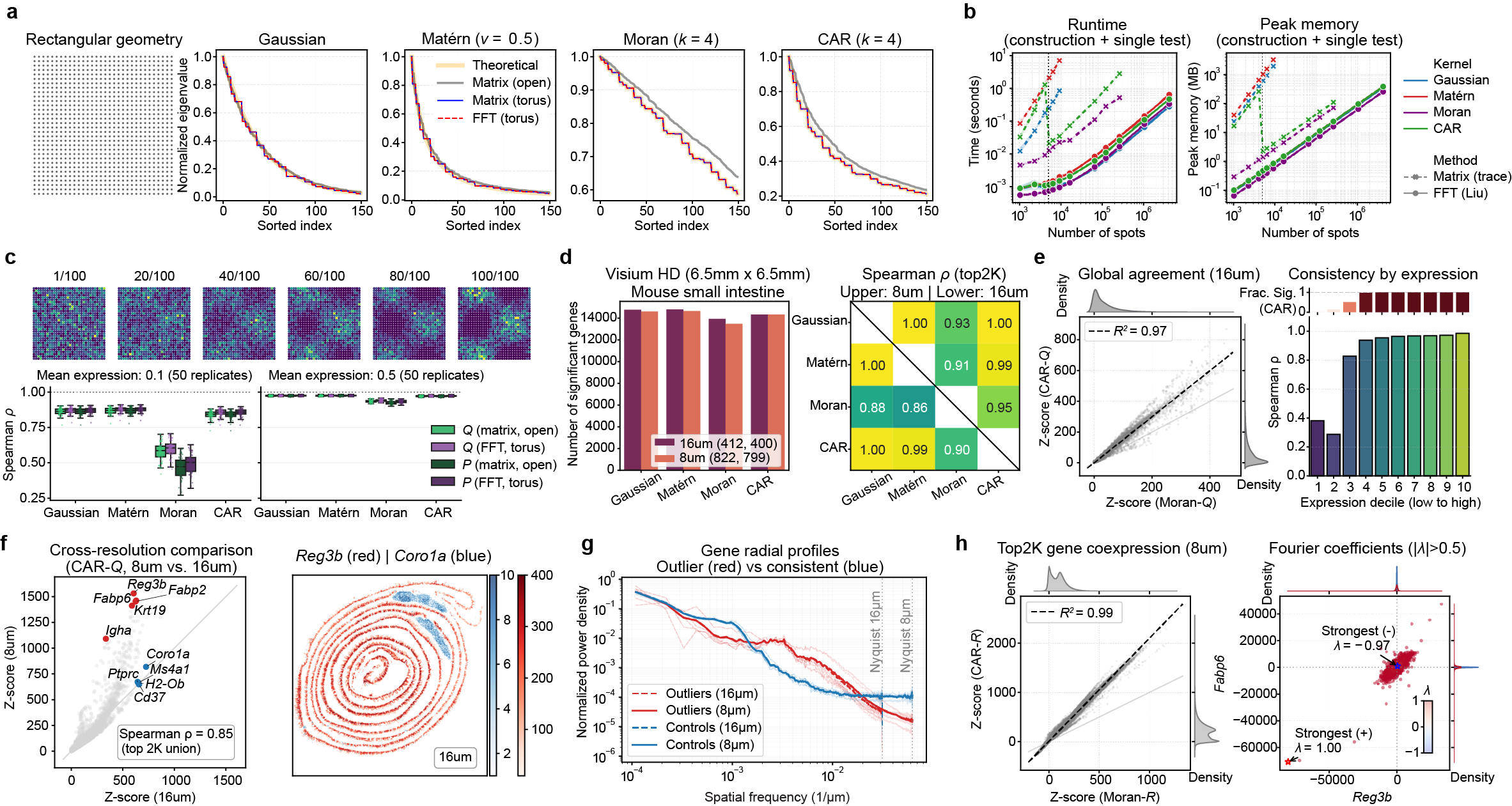}
    \caption{\textbf{Scaling spatial variability tests to millions of spots.} \textbf{(a)} Spectral analysis on rectangular grids. Normalized eigenvalues computed via FFT (torus boundary) closely approximate the exact matrix eigenvalues (open boundary) and theoretical values across all kernels. \textbf{(b)} Runtime and peak memory usage averaged over five replicates, performed on an M1 Macbook Air with 16GB RAM. By design, the CAR kernel switches to implicit mode for $n>5,000$. \textbf{(c)} Gene-ranking experiment results. A ground truth pattern was corrupted with varying noise levels to generate 100 genes, which were ranked using either the Q-statistic per kernel or its $p$-value (Welch approximation for matrix kernel, Liu method for FFT kernel). Boxplots show median (center line), interquartile range (box), and 1.5× interquartile range (whiskers). \textbf{(d)} Test consistency in a Visium HD mouse small intestine sample. Significance was called at p-adj $< 0.01$. For pairwise comparison, we extracted the top 2,000 genes ranked by each method and computed the Spearman correlation of the Q-statistic on the union gene set. \textbf{(e)} Detailed comparison between CAR and Moran’s I. (Left) Scatter plot of Z-scores (Q-statistic normalized by its mean and variance from the null) with a fitted linear regression model. (Right) Per-expression-level test agreement. Top bar shows the fraction of significant SV genes called by CAR (p-adj $< 0.01$) in each gene group. \textbf{(f)} Cross-resolution analysis (8 µm vs. 16 µm). Comparison of Z-scores highlights outliers like \textit{Reg3b} (red), which exhibits a ``Swiss roll" artifact distinct from immune-cluster signals like \textit{Coro1a} (blue). \textbf{(g)} Radial power density profiles comparing genes with different spatial patterns highlighted in \textbf{(f)}. \textbf{(h)} Spatial co-expression analysis using the R-statistic $\mathbf{z}^\top \mathbf{K} \mathbf{y}$. (Left) Scatter plot of Z-scores (R-statistic normalized by its mean and variance from the null) with a fitted linear regression model. (Right) Fourier coefficient analysis of two example genes, confirming that strong negative associations (high-frequency patterns, $\lambda \approx -1$) are rare in biological data, mitigating the theoretical limitations of Moran’s R.}
    \label{fig:s2}
\end{figure}

\section{Online Methods}
\label{sec:methods}

\subsection{Unification of spatially variable gene tests.}
Consider a feature vector $\mathbf{x} \in \mathbb{R}^n$ observed at locations $\mathbf{s} \in \mathcal{S}^n$. Under the null hypothesis of no spatial variability, we standardize the data to $\mathbf{z}$ such that $\mathbb{E}[\mathbf{z}]=\mathbf{0}$ and $\text{Var}(\mathbf{z})=\mathbf{I}$. We demonstrate that the test statistics for a broad class of spatial methods—including fixed effect models, Linear Mixed Models (LMM), Generalized Linear Mixed Models (GLMM), graph-based and other non-parametric dependence tests—are asymptotically equivalent to a general quadratic form $Q_n = \mathbf{z}^\top \mathbf{K} \mathbf{z}$ (Supplementary Note \ref{supnote:2}). This common structure reveals a shared theoretical limitation: these tests are consistent only against alternatives of \textit{mean dependence}, defined as $\mathbb{E}[X|S] \neq \mathbb{E}[X]$. Because they rely on a quadratic form of the first moments, they are theoretically blind to higher-order dependencies, such as pure changes in spatial variance (Theorem 1, Supplementary Note \ref{supnote:2}).

The Q-statistic framework allows us to systematically examine how different modeling assumptions influence test properties and final results. In fixed-effect models, $\mathbf{K}$ acts as a projection matrix onto spatial basis functions, whereas in random-effect models (LMMs), it represents the spatial covariance matrix. Generalized models (GLMs and GLMMs) fit within this framework by adjusting the standardized vector $\mathbf{z}$ according to the mean-variance relationship defined by the link function; here, non-Gaussian distributions introduce gene-specific scaling factors that typically increase sensitivity for highly expressed genes. For graph-based methods, $\mathbf{K}$ corresponds to the adjacency or Laplacian matrix. Finally, in dependence-based frameworks, the kernel either characterizes the space of spatial patterns (as in HSIC) or arises from density estimation in divergence metrics. The implications of these design choices are detailed in Supplementary Note \ref{supnote:3}.

\subsection{Kernel spectral analysis and consistency under the functional view.}
Under the single-observation-per-location constraint, we analyze the consistency of the Q-test by modeling observations $\mathbf{x}$ as samples from a deterministic function $f(\mathbf{s})$. In the large-sample limit ($n \to \infty$), the discrete statistic converges to the functional quadratic form $Q(f) = \langle f, \mathcal{T}f \rangle$, where $\mathcal{T}$ is the centered integral operator associated with kernel $k$. By the Spectral Theorem, this decomposes as $Q(f) = \sum \lambda_i |\langle f, \phi_i \rangle|^2$. For the test to be \textit{universally consistent}—capable of detecting any non-constant spatial pattern—$\mathcal{T}$ must be strictly positive definite on the subspace orthogonal to the constant function (Theorem 2, Supplementary Note \ref{supnote:3}).

Standard adjacency-based methods, such as Moran's I and \texttt{Hotspot}\autocite{detomaso2021hotspot}, fail this criterion because the adjacency kernel $\mathbf{W}$ possesses both positive eigenvalues (corresponding to clustering) and negative eigenvalues (corresponding to checkerboard patterns). Consequently, a pattern $f$ projecting onto both positive and negative eigenmodes can suffer from spectral cancellation, yielding $Q(f) \approx 0$ and making the test inconsistent. To rectify this, we utilize a kernel derived from the Conditional Autoregressive (CAR) model\autocite{su2023smoother}, $\mathbf{K} = (\mathbf{I} - \rho \tilde{\mathbf{W}})^{-1}$. This transformation maps the eigenvalues $\mu_i$ of the normalized adjacency $\tilde{\mathbf{W}}$ to $(1-\rho\mu_i)^{-1}$. For $\rho \in (0,1)$, the mapping ensures a strictly positive spectrum, thereby guaranteeing consistency against all functional (mean-dependent) spatial patterns.

\subsection{Scalable implementation of Q-tests.}
Given a single-gene expression vector $\mathbf{x}$ and spatial coordinates $\mathbf{s}$ (or a predefined kernel matrix $\mathbf{K}$), we compute the statistic $Q_{\text{obs}}$ and its standardized Z-score $Z_{\text{obs}} =(Q_{\text{obs}} - \text{tr}(\mathbf{K}))/\sqrt{2\text{tr}(\mathbf{K}^2)}$ (for cross-kernel comparison). The $p$-value is derived by approximating the null distribution (a weighted sum of $\chi^2$ variables) via moment matching, using either the Welch-Satterthwaite method, which fits a Gamma distribution using the first two moments, or Liu's method\autocite{liu2009new}, which utilizes four moments for higher accuracy but requires spectral information. Since naive matrix forming ($O(N^2)$) and factorization ($O(N^3)$) are prohibitive for large datasets, we implement two scalable strategies based on spatial geometry.

For arbitrary geometries, we exploit the sparsity of the CAR precision matrix (the Laplacian).
Instead of explicit inversion, we compute matrix-vector products $\mathbf{K}\mathbf{z}$ using sparse linear solvers. Trace terms (e.g., $\text{tr}(\mathbf{K}^2)$) are estimated via Hutchinson's stochastic estimator \autocite{hutchinson1989stochastic} using random probe vectors, avoiding dense matrix storage entirely.
For data collected on regular grids, the kernel matrix is Block-Toeplitz. We compute matrix-vector products as 2D convolutions using the Fast Fourier Transform (FFT), $\mathbf{K}\mathbf{z} = \mathcal{F}^{-1}(\mathcal{F}(\mathbf{k}) \odot \mathcal{F}(\mathbf{z}))$. This reduces complexity to $O(N \log N)$ and yields the exact eigenvalues of $\mathbf{K}$, enabling the use of the more accurate Liu approximation without additional computational penalty.

\subsection{Co-expression analysis using the bivariate R-statistic.}
We extend the univariate Q-statistic to the bivariate R-statistic, defined by the bilinear form $R_{xy} = \mathbf{x}^\top \mathbf{K} \mathbf{y}$, to quantify spatial co-variation\autocite{li2023spatialdm} between centered, standardized features $\mathbf{x}$ and $\mathbf{y}$. Geometrically, $R_{xy}$ represents the inner product in the Reproducing Kernel Hilbert Space (RKHS) induced by $\mathbf{K}$. It generalizes Pearson correlation (recovering it when $\mathbf{K}=\mathbf{I}$) by weighting feature alignment according to spatial proximity. Like the Q-statistic, $R_{xy}$ is sensitive to the spectral properties of $\mathbf{K}$. Indefinite kernels, such as standard adjacency matrices containing negative eigenvalues, can cause signal cancellation in the RKHS. This may yield $R_{xy} \approx 0$ even for spatially dependent features; in extreme cases, indefiniteness allows a signal to have zero correlation with itself. To ensure consistency, we employ strictly positive definite kernels, such as the CAR kernel. 

Under the null hypothesis of spatial independence, $R_{xy}$ is asymptotically normal with mean zero and variance $\text{tr}(\mathbf{K}^2)$, facilitating efficient hypothesis testing via the Central Limit Theorem (CLT). To enable genome-wide analysis, we implement a memory-efficient block-wise computation strategy. We compute the term $\mathbf{K}\mathbf{y}$ in chunks using the scalable methods described previously: implicit sparse solvers for general geometries, and FFT-based convolution for regular grids. For the latter, the operation simplifies to element-wise multiplication in the frequency domain, $\hat{\mathbf{k}} \odot \mathcal{F}(\mathbf{y})$, where $\hat{\mathbf{k}}$ represents the eigenvalues of the Block-Toeplitz kernel.

\subsection{Empirical evaluation on spatial and single-cell lineage-tracing data.}
To assess test robustness, we conducted a gene-ranking experiment by simulating 100 genes from a negative binomial model (dispersion $\phi=0.1$; mean expression $\mu \in \{0.1, 0.5\}$, 50 replicates each) with a ground truth spatial pattern corrupted by varying noise levels. Genes were ranked by the Q-statistic per kernel or its associated $p$-value, computed via CLT for Moran's I, the Welch approximation for other matrix kernels, or the Liu method for other FFT-based kernels.

For real spatial data, we analyzed a Visium HD mouse small intestine dataset\autocite{10x_visium_hd_intestine} preprocessed by \texttt{Space Ranger} v3.0, comprising 164,800 bins at 16 µm resolution, and 656,788 bins at 8 µm. The dataset was parsed and formatted using \texttt{SpatialData}\autocite{marconato2025spatialdata} (downloaded from \url{https://s3.embl.de/spatialdata/spatialdata-sandbox/visium_hd_3.0.0_io.zip}). Gene expression inputs were generated using the \texttt{rasterize\_bins} at the specified resolutions; genes with fewer than 10 total transcripts were excluded. We evaluated four kernels under the periodic boundary condition: Gaussian (bandwidth $\sigma=2.0$), Matérn ($\nu=1.5, \sigma=2.0$), Moran’s I, and CAR. For the latter two, adjacency was defined by first-order physical neighbors ($k=4$), with $\rho=0.9$ for CAR. $P$-values were computed via CLT for Moran's I and the Liu method for the remaining kernels using the full spectrum. Statistical significance was determined at $\alpha=0.01$ after Benjamini-Hochberg correction\autocite{benjamini1995controlling}. For cross-resolution comparisons, outlier and control groups were defined based on Z-score deviation from the identity line. Spatial co-expression analysis was performed on the top 2,000 spatially variable genes identified by the CAR kernel at 8 µm resolution, and R-statistics were computed using the same kernel configurations.

For tumor heritability and plasticity detection, we analyzed a single-cell dataset\autocite{yang2022lineage} from a metastatic lung cancer mouse model with lineage resolved experimentally via Cas9-based lineage barcodes. We obtained processed expression data, cell state annotations, and \texttt{Cassiopeia}\autocite{jones2020inference}-reconstructed phylogenetic trees from the original study at \url{https://zenodo.org/records/5847462#.YrFDKuzMI6A}. Single-cell gene counts were normalized by the median unique molecular identifier (UMI) count across cells and then log-scaled. Neighborhood graphs were constructed from the trees for each tumor individually using a hierarchical weighting scheme: weights of 1.0, 0.5, and 0.25 were assigned to first-, second-, and third-degree siblings (cells sharing immediate to great-grandparent ancestors), respectively. The resulting adjacency matrices were double-normalized to ensure unit degree per cell. For each tumor, we computed Moran’s I and CAR-$Q$ ($\rho=0.9$) for log-normalized gene expression and one-hot encoded cell states present in at least 10 cells. $P$-values were computed using a CLT null for Moran’s I and a Welch null approximation for CAR, with significance determined at $\alpha=0.05$ after Benjamini-Hochberg correction\autocite{benjamini1995controlling}. Graph co-expression analysis was restricted to the intersection of significant genes detected by both methods, and R-statistics were computed using the same kernel configurations. Tumors yielding fewer than 10 significant pairs were excluded from the co-expression analysis.

\section{Data Availability}
This paper analyzes existing, publicly available data without generating new primary data. Processed 10X Visium HD data of a mouse small intestine sample are available in SpatialData format via \url{https://s3.embl.de/spatialdata/spatialdata-sandbox/visium_hd_3.0.0_io.zip}. Processed single-cell lineage-tracing data of mouse metastatic lung cancer and inferred phylogenetic trees are available via Zenodo (\url{https://zenodo.org/records/5847462#.YrFDKuzMI6A}).

\section{Code Availability}
Spatial univariate Q-tests and bivariate R-tests are implemented as part of the \texttt{quadsv} Python package, available via GitHub at \url{https://github.com/JiayuSuPKU/EquivSVT} under BSD-3 License. Code and scripts to reproduce this paper are available via the same repository.

\section{Acknowledgements}
This work was funded by the National Institutes of Health, National Cancer Institute (grant nos. R35CA253126, U01CA243073, P01CA285250 to R.R.), the Edward P. Evans Center for MDS at Columbia University (to J.S. and R.R.) and the National Science Foundation (grant no. CAREER DBI2146398 to D.A.K.). D.Y. is supported by a Damon Runyon Dale Frey Award, an NCI Transition Career Development Award 1K22CA289207, an NIH Director’s New Innovator Award 1DP2OD037078, and a Lung Cancer Research Foundation Leading Edge Award. Any opinions, findings and conclusions or recommendations expressed in this material are those of the authors and do not necessarily reflect the views of the NSF or the NIH.

\section{Author Contributions}
J.S. conceived the project, implemented Q-tests, conducted theoretical analysis and applications on real data. J.H.F. contributed to the theoretical analysis. H.W. and D.Y. contributed to the lineage-tracing analysis. D.A.K. and R.R. supervised the project. J.S. took lead in writing the article with input from all authors. All authors read and approved the final paper.

\section{Competing interests}
R.R. is a founder of Genotwin and a member of the Scientific Advisory Board of DiaTech and Flahy. D.Y. is a co-founder of DEM Biopharma. None of these activities are related to the work described in this paper. The other authors declare no competing interests.

\clearpage

\printbibliography[title={References}]
\end{refsection}

\newpage
\appendix

\renewcommand{\thesection}{\arabic{section}}
\titleformat{\section}
  {\normalfont\large\bfseries} %
  {Supplementary Note~\thesection} %
  {1em} %
  {}

\begin{refsection}
\section{Clarification on kernel terminology}\label{supnote:1}

The term ``kernel" is notoriously overloaded in computational sciences, carrying distinct definitions across statistics, functional analysis, and machine learning. In this work, we draw upon concepts from these varied fields. To facilitate understanding for readers from diverse backgrounds, we provide this note to clarify our notation and the specific mathematical objects to which we refer.

\subsection{The Kernel Function and Kernel Matrix}
In its most fundamental form, a \textbf{kernel function} $k(s, s')$ is a bivariate function mapping pairs of inputs from a space $\mathcal{S}$ (e.g., spatial locations) to a real number. We assume the kernel is symmetric, such that $k(s, s') = k(s', s)$. The \textbf{kernel matrix}, denoted as $\mathbf{K} \in \mathbb{R}^{n \times n}$, is the discrete evaluation of this function on a finite set of $n$ locations $\{s_1, \dots, s_n\}$, where $\mathbf{K}_{ij} = k(s_i, s_j)$.

In machine learning contexts (e.g., Gaussian Processes), kernels are typically required to be positive semi-definite (PSD). However, in the statistical theory of U-statistics and V-statistics, which underpins parts of this work, the kernel function $k$ \textit{need not} be PSD. In that context, it serves primarily as a symmetric weighting function quantifying the relationship between $s$ and $s'$.

\subsection{The Kernel Operator and its spectrum}
The kernel matrix $\mathbf{K}$ can be viewed as a discrete approximation of a continuous linear operator called the \textbf{kernel operator} $\boldsymbol{\mathcal{K}}: L^2(\mathcal{S}) \to L^2(\mathcal{S})$, which is induced by $k$ via the integral
\[
(\boldsymbol{\mathcal{K}}f)(s) = \int_\mathcal{S} k(s, s')f(s') \, ds'.
\]
As an operator, the kernel matrix $\mathbf{K}$ acts on a vector $\mathbf{x} \in \mathbb{R}^n$ to produce another vector $\mathbf{K}\mathbf{x} \in \mathbb{R}^n$. Similarly, the kernel operator $\boldsymbol{\mathcal{K}}$ acts on a function $f$ to produce another function $\boldsymbol{\mathcal{K}}f$. Here, the finite observation vector $\mathbf{x}=f(\mathbf{s})=\{f(s_1), \dots, f(s_n)\}$ represents the function $f$ evaluated at the sample points. If the sampled spatial locations $s_i$ are drawn uniformly from the metric space $\mathcal{S}$, the matrix-vector product converges to the integral operation $n \to \infty$.

\textbf{The kernel spectrum} refers to the eigendecomposition of the kernel operator or matrix:
\begin{itemize}
\item (Continuous) The eigenvalues $\lambda_j$ and eigenfunctions $\phi_j$ of $\boldsymbol{\mathcal{K}}$ satisfy $\int k(s, s')\phi_j(s') ds' = \lambda_j \phi_j(s)$.
\item (Discrete) The eigenvalues $\hat{\lambda}_j$ and eigenvectors $\mathbf{v}_j$ of $\mathbf{K}$ satisfy $\mathbf{K}\mathbf{v}_j = \hat{\lambda}_j \mathbf{v}_j$.
\end{itemize}
Under appropriate regularity conditions, the discrete eigenvalues and eigenvectors converge to their continuous counterparts. Specifically, $\frac{1}{n}\hat{\lambda}_j \to \lambda_j$, and the eigenvectors $\mathbf{v}_j$ converge to the eigenfunctions $\phi_j$ evaluated at the sampled spatial locations.

\textbf{Remark:} Not all limits of kernel matrices result in an \textit{global} integral kernel operator defined across $\mathcal{S}$. A prime example is the Graph Laplacian. Consider a kernel defined by proximity $\mathbf{W}_{ij} = \mathbb{I}(\|s_i - s_j\| < r)$ where $r$ is a given distance cutoff. The nature of the limit operator depends entirely on the behavior of the radius $r$ relative to the sample size $n$.

\begin{itemize}
\item If $r$ remains fixed as $n \to \infty$, the discrete matrix operation converges to a bounded, linear integral operator $\mathcal{K}_r$. For a function $f \in L^2(\mathcal{S})$, this operator averages values within a fixed neighborhood \[(\mathcal{K}_rf)(x)=\int_{B_r(x):=\|x - y\| < r} f(y) dy,\] producing smoothed results semi-globally depending on the size of $r$.

\item If $r \to 0$ as $n \to \infty$ (at a sufficiently slow rate to maintain connectivity), the operator localizes. The normalized Graph Laplacian $\mathbf{L}_{rw} = \mathbf{I} - \mathbf{D}^{-1}\mathbf{W}$ converges point-wise to the Laplace-Beltrami operator $\Delta$, \[\lim_{r \to 0} \frac{1}{r^2} \left( f(x) - \frac{\int_{B_r(x)} f(y) \, dy}{\text{Vol}(B_r(x))} \right) = - C_d \Delta f(x),\] where $C_d$ is a dimension-dependent constant. Here, the limit is not an integral operator, but the second-order \textit{local} differential operator $\Delta f = \sum_i \partial^2 f / \partial x_i^2$.

\end{itemize}
While $\Delta$ is not an integral operator, spectral theory still applies. The Spectral Theorem requires only that the operator be compact and self-adjoint (or have a compact resolvent), allowing us to analyze the spectrum of the Laplacian just as we do for integral kernel operators.

\subsection{Reproducing Kernel Hilbert Space (RKHS)}
When a kernel function $k$ is symmetric and PSD, it uniquely determines a function space known as a \textbf{Reproducing Kernel Hilbert Space (RKHS)}, denoted by $\mathcal{H}_k$. An RKHS can be formulated using many equivalent definitions. One standard way is as a Hilbert space of functions where the evaluation functional $\delta_x: f \mapsto f(x)$ is bounded for all $x \in \mathcal{X}$. By the Riesz Representation Theorem, the boundedness implies the existence of a representer $k(\cdot, x) \in \mathcal{H}_k$ satisfying the \textit{reproducing property}
\[
f(x) = \langle f, k(\cdot, x) \rangle_{\mathcal{H}_k}, \quad \forall f \in \mathcal{H}_k, \, \forall x \in \mathcal{X}.
\]
This property allows us to analyze complex, potentially infinite-dimensional functions $f \in \mathcal{H}_k$ implicitly via the inner product and evaluations of the kernel on finite data.

Furthermore, if the domain $\mathcal{X}$ is a compact metric space and $k$ is continuous, Mercer’s Theorem guarantees that $k$ admits a spectral decomposition in terms of the eigenfunctions of its associated integral kernel operator $\boldsymbol{\mathcal{K}}$. Specifically,
\[
k(x, x') = \sum_{j=1}^\infty \lambda_j \phi_j(x) \phi_j(x'),
\]
where $\{\lambda_j\}_{j=1}^\infty$ are non-negative eigenvalues and $\{\phi_j\}_{j=1}^\infty$ form an orthonormal basis of $L^2(\mathcal{X})$. This establishes that any feature map $\Phi: \mathcal{X} \to \mathcal{H}_k$ essentially maps data into a high-dimensional Euclidean space defined by these eigenfunctions.

Another central concept in kernel hypothesis testing is the \textbf{kernel mean embedding}. We can represent a probability distribution $\mathbb{P}$ over $\mathcal{X}$ as an element $\mu_{\mathbb{P}} \in \mathcal{H}_k$, defined such that $\mathbb{E}_{X \sim \mathbb{P}}[f(X)] = \langle f, \mu_{\mathbb{P}} \rangle_{\mathcal{H}_k}$. Explicitly, this embedding is the expectation of the feature map
\[
\mu_{\mathbb{P}} = \mathbb{E}_{X \sim \mathbb{P}}[k(\cdot, X)].
\]
Given a finite sample $\{x_i\}_{i=1}^n$, we estimate this embedding empirically as $\hat{\mu}_{\mathbb{P}} = \frac{1}{n}\sum_{i=1}^n k(\cdot, x_i)$. As will be discussed in later sections, kernel dependence measures (such as the Hilbert-Schmidt Independence Criterion) operate by quantifying the distance between the mean embedding of the joint distribution and the product of the marginal embeddings within the RKHS. It is also connected to Mutual Information via kernel density estimators (Supplementary Note \ref{supnote:2.mi_approx}).

\subsection{Kernels in Convolutional Neural Networks (CNNs)}\label{supnote:1.cnn}
In image processing, a ``kernel" typically refers to a discrete convolution filter, $\mathbf{A}$, consisting of learnable weights of size $k \times k$. This filter slides across an input image $\mathbf{X}$ of size $n \times n$ to extract local features. Mathematically, the operation can be expressed as a matrix-vector product $\mathbf{K}_{\text{conv}} \cdot \text{vec}(\mathbf{X})$, where $\mathbf{K}_{\text{conv}} \in \mathbb{R}^{n^2 \times n^2}$ is a doubly block Toeplitz matrix generated by $\mathbf{A}$. Although $\mathbf{K}_{\text{conv}}$ is rarely constructed explicitly due to its size, its spectrum is fully determined by the filter $\mathbf{A}$. Under periodic boundary conditions (where $\mathbf{K}_{\text{conv}}$ becomes block circulant), its eigenvalues correspond to the 2D Fourier transform coefficients of the zero-padded filter $\tilde{\mathbf{A}}$ of size $n \times n$,
\[
\text{eig}(\mathbf{K}_{\text{conv}}) = \mathcal{F}(\tilde{\mathbf{A}}).
\]
This connection allows us to analyze CNN filters using the same spectral tools applied to other pattern detection methods, with two key distinctions. First, $\mathbf{K}_{\text{conv}}$ is symmetric only if $\mathbf{A}$ is point symmetric (i.e., $A[i, j] = A[-i, -j]$). Second, for $\mathbf{K}_{\text{conv}}$ to be PSD, the small filter must satisfy $\mathbf{A} = \mathbf{B} * \text{flip}(\mathbf{B})$, the convolutional equivalent of the Cholesky decomposition $\mathbf{A} = \mathbf{B}\mathbf{B}^\top$. When these conditions hold, a learned CNN kernel defines a proper statistic for global (image-level) pattern detection by the quadratic form $Q=\text{vec}(\mathbf{X})^\top \mathbf{K}_{\text{conv}} \text{vec}(\mathbf{X})$.

\newpage

\section{Equivalence between spatial variability tests}\label{supnote:2}

In this section, we demonstrate that virtually all spatial variability tests—despite originating from distinct statistical traditions such as autocorrelation, mixed models, and kernel methods—can be unified under a general quadratic form test statistic $Q_n$. We then prove a common limitation for tests based on the Q-statistic: they are consistent only with respect to mean-shift dependencies (Theorem \ref{theorem:q_mean_dependence}).

\subsection{Problem formulation}

We begin by formalizing spatial variability through the lens of statistical dependency.

\begin{definition}[Spatial Variability]\label{def:sv}
    Let $X$ be a random variable representing a feature of interest (e.g., gene expression) and $S$ be a random variable representing spatial coordinates. We say $X$ exhibits spatial variability if $X$ and $S$ are statistically dependent. That is, the joint density does not factorize into the product of marginals,
    \[
    f_{X, S}(x, s) \neq f_X(x)f_S(s).
    \]
\end{definition}

In experimental settings, we observe a dataset with $n$ pairs $\{(x_i, s_i)\}_{i=1}^n$. Testing for spatial variability is equivalent to testing whether the distribution of $X$ given $S$ varies across some physical space $\mathcal{S}$. We conceptualize the data generation process as first sampling a location $s$ from $\mathcal{S} \subseteq \mathbb{R}^3$, followed by drawing $x$ from the conditional distribution $X|S=s$. Independence thus implies that the conditional distribution $X|S$ is invariant across $\mathcal{S}$ (i.e., $P(X|S=s) = P(X)$ for all $s$), yielding the following result:

\begin{proposition}\label{prop:iid}
    If $X$ and $S$ are statistically independent, then the observed values $\{x_i\}_{i=1}^n$ are independent and identically distributed (i.i.d.) samples from the marginal distribution $f_X(x)$.
\end{proposition}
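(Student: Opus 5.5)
The plan is to make the sampling model explicit and then verify the i.i.d. property directly from the joint density of the sample. As described above, the data $\{(x_i,s_i)\}_{i=1}^n$ arise from $n$ independent draws of the pair $(X,S)$ from the joint law $f_{X,S}$: first $s_i \sim f_S$, then $x_i \sim f_{X\mid S}(\cdot \mid s_i)$. I take this as the standing assumption: the pairs $(x_i,s_i)$ are mutually independent across $i$ and each has density $f_{X,S}$. The proposition then reduces to showing that, under independence of $X$ and $S$, the marginal law of the vector $(x_1,\dots,x_n)$ factorizes into $\prod_i f_X(x_i)$.

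First, write the joint density of the full sample as $\prod_{i=1}^n f_{X,S}(x_i,s_i) = \prod_{i=1}^n f_{X\mid S}(x_i\mid s_i) f_S(s_i)$. Next, invoke independence in the form $f_{X,S}(x,s)=f_X(x)f_S(s)$ from Definition~\ref{def:sv}, or equivalently $f_{X\mid S}(x\mid s)=f_X(x)$ for $f_S$-almost every $s$. Substituting gives $\prod_i f_X(x_i)\,\prod_i f_S(s_i)$. Integrating out $s_1,\dots,s_n$ over $\mathcal{S}^n$ (Fubini, since all terms are nonnegative) leaves $\prod_i f_X(x_i)$. This product form shows that $x_1,\dots,x_n$ are mutually independent and each is distributed according to $f_X$.

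I would add a remark that the same conclusion holds \emph{conditionally} on the observed locations. Conditioning on $(s_1,\dots,s_n)$ gives density $\prod_i f_{X\mid S}(x_i\mid s_i)=\prod_i f_X(x_i)$. This is the form actually needed later, where locations are treated as fixed and only $\mathbf{z}$ is random under the null.

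The only delicate point is measure-theoretic. The conditional density is defined only almost everywhere, and $X$ may lack a Lebesgue density (count data). I would handle this by phrasing the argument with distributions rather than densities. Independence means $P(X\in A, S\in B)=P(X\in A)P(S\in B)$. For rectangles $A_1\times\cdots\times A_n$, the sample probability is $\prod_i P(X\in A_i)$, and a $\pi$-$\lambda$ argument extends this to all Borel sets. With that, the proof is otherwise routine.
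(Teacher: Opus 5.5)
Your proof is correct, but your main route differs from the paper's in a way worth noticing. The paper's justification, given in the sentence just before the proposition, is conditional. With $s$ drawn first and $x$ drawn from $X\mid S=s$, independence gives $P(X\mid S=s)=P(X)$ for all $s$, so every $x_i$ is drawn from $f_X$ regardless of its location. That is exactly your closing remark.

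Your headline argument instead integrates the locations out of $\prod_i f_{X,S}(x_i,s_i)$, and there the hypothesis is idle. Since $\int f_{X,S}(x,s)\,ds=f_X(x)$ always, under i.i.d.\ pairs the $x_i$ are i.i.d.\ from $f_X$ even when $X$ and $S$ are dependent. Your rectangle computation likewise only uses $B=\mathcal{S}$. This is the phenomenon behind the paper's remark that marginal i.i.d.-ness does not imply independence.

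So the marginal route only recovers what random sampling already gives. The conditional route yields the statement actually used downstream, where $\mathbf{K}$ is built from the observed $\mathbf{s}$ and only $\mathbf{z}$ is random (e.g.\ $\mathbb{E}[\mathbf{z}\mathbf{z}^\top]=\mathbf{I}$ in the null moments). It also covers fixed designs such as grids, where the $s_i$ are not sampled at all. There you need $P(X\mid S=s)=P(X)$ at every observed site, as the paper asserts, not merely $f_S$-almost everywhere.

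You should therefore promote the conditional version to the main proof. Then run your $\pi$-$\lambda$ argument on the conditional law of $(x_1,\dots,x_n)$ given $(s_1,\dots,s_n)$, which is $\prod_i P_{X\mid S=s_i}=P_X^{\otimes n}$.
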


\textbf{Remark:} Proposition \ref{prop:iid} allows us to derive the general null distribution for spatial statistics, which \textit{does not} depend on the form of dependence or patterns under the alternative. However, the converse does not hold; marginal i.i.d. properties do not guarantee that $X|S$ is independent. Furthermore, for spatial omics data, we typically only observe a single realization $x_i$ at each unique location $s_i$, precluding direct estimation of the conditional density $X|S=s_i$.

Let $\mathbf{s}=\{s_1, \dots, s_n\} \in \mathcal{S}^n$ be the spatial coordinates of $n$ samples, and $\mathbf{x} =\{x_1, \dots, x_n\} \in \mathcal{X}^n$ be the vector of observations (e.g., the expression of a single gene) at the $n$ sampled locations. While each $x_i \in \mathcal{X}$ can be multivariate (e.g., isoform usage ratios within a gene)\autocite{su2026mapping}, we assume without loss of generality that $X$ takes a scalar value and $\mathcal{X} \subseteq\mathbb{R}$. Under the null hypothesis that $\mathbf{x}$ contains i.i.d. entries, $\mathbb{E}[\mathbf{x}] = \mu\mathbf{1}$ and $\text{Var}(\mathbf{x}) = \sigma^2\mathbf{I}$.  We also assume access to $\mu$ and $\sigma^2$ (or their sample estimators), such that we can compute the standardized data $\mathbf{z} =\frac{\mathbf{x}-\mu}{\sigma}$. Under the null, $\mathbb{E}[\mathbf{z}] = \mathbf{0}$ and $\text{Var}(\mathbf{z}) = \mathbf{I}$.

\begin{definition}[Q-statistic]
    Given standardized data $\mathbf{z} \in\mathbb{R}^n$ and a symmetric kernel matrix $\mathbf{K} \in \mathbb{R}^{n \times n}$, we define the Q-statistic as the quadratic form
    \[
    Q_n = \mathbf{z}^\top \mathbf{K} \mathbf{z}.
    \]
\end{definition}

For the remainder of this note, we will demonstrate how seemingly different spatially variable gene (SVG) detection methods all reduce to this same quadratic form. Table \ref{tab:sv_tests} summarizes some of the results. Even though different SVG tests were developed for distinct alternative hypotheses, the equivalence we established here will allow us to systematically compare them, examine test consistency under Definition \ref{def:sv} and improve test design (Supplementary Note \ref{supnote:3}).

\begin{table}[h]
\centering
\caption{Equivalence of spatial variability tests to $Q_n = \mathbf{z}^\top \mathbf{K} \mathbf{z}$ under the null}
\label{tab:sv_tests}
\begin{tabular}{lll}
\toprule
\textbf{Method/model$^1$} & \textbf{Alternative hypothesis$^2$} & \textbf{Effective} $\mathbf{K}$\\
\midrule
Moran's I and other adjacency-based & spatial autocorrelation $\neq 0$ & $\mathbf{W}$ (spatial weights) \\
Linear fixed effect & $\boldsymbol{\beta} \neq \mathbf{0}$ & $\mathbf{L}(\mathbf{L}^\top\mathbf{L})^{-1}\mathbf{L}^\top$ (projection) \\
Linear random effect (LMM) & $\sigma^2_s > 0$ (variance comp.) & $\mathbf{K}_{\text{cov}}$ (spatial covariance) \\
Generalized fixed effect (GLM)$^3$ & $\boldsymbol{\beta} \neq \mathbf{0}$ & $\mathbf{L}(\mathbf{L}^\top\mathbf{L})^{-1}\mathbf{L}^\top$ \\
Generalized random effect (GLMM)$^3$ & $\sigma^2_s > 0$ & $\mathbf{K}_{\text{cov}}$\\
HSIC (linear data kernel) & $\mathbb{E}[X \mid S] \neq \mathbb{E}[X]$ & $\mathbf{K}$ (spatial kernel)\\
$\chi^2$-divergence (linear data kernel) & $\mathbb{E}[X \mid S] \neq \mathbb{E}[X]$ & $k(\cdot, \cdot)$ (density estimator) \\
\bottomrule
\end{tabular}
{\raggedright 
$^1$Due to space limits, references to representative methods are provided in the corresponding sections.\\
$^2$None of these alternatives cover all spatial variability forms defined in Definition \ref{def:sv}.\\
$^3$For non-Gaussian data, $\mathbf{z}$ (and thus $Q_n$) is scaled by a global factor depending on the mean of $\mathbf{x}$.

\par}
\end{table}

\subsection{Asymptotic behavior of the quadratic form Q-statistic}

The quadratic form naturally occurs in various statistical problems, and its asymptotical behavior has been studied extensively. Under the null hypothesis of no spatial variability, the mean and variance of $Q_n$ are given by,
\begin{align}\label{eq:null_q}
    \mathbb{E}[Q_n] &= \mathbb{E}[\text{tr}(\mathbf{z}^\top \mathbf{K} \mathbf{z})] = \text{tr}(\mathbf{K} \mathbb{E}[\mathbf{z}\mathbf{z}^\top]) = \text{tr}(\mathbf{K})\\
    \text{Var}(Q_n) &= \sum_i\sum_j\sum_k\sum_lK_{ij}K_{kl}\mathbb{E}[z_iz_jz_kz_l]-(\text{tr}(\mathbf{K}))^2=(\mu_4 - 3)\sum_{i=1}^n K_{ii}^2 + 2\text{tr}(\mathbf{K}^2), \nonumber
\end{align}
where $\mu_4 = \mathbb{E}[z_i^4]$ is the kurtosis of the distribution of $Z$. If the data is Gaussian, $\mu_4=3$, simplifying the variance to $\text{Var}(Q_n)=2\text{tr}(\mathbf{K}^2)$. Note that the size of $\mathbf{K}$ and thus its trace will vary based on sample size $n$. For simplicity, we omit the subscript index $n$ in $\mathbf{K}_n$ and $Q_n$ unless needed for clarity. In hypothesis testing, we assume $\mathbf{K}$ is generated from spatial coordinates via a symmetric function, such that $\mathbf{K}_{ij}=k(s_i, s_j)$. Defining $y_i=(z_i, s_i)$, we can rewrite the statistic as,
\[
Q_n = \sum_{i=1}^n \sum_{j=1}^n z_i z_j k(s_i, s_j) \eqqcolon \sum_{i=1}^n \sum_{j=1}^n H(y_i, y_j).
\]
Here, $H(y_i, y_j)=z_i z_j k(s_i, s_j)=z_j z_i k(s_j, s_i)=H(y_j, y_i)$ is a symmetric kernel function defined over the joint data $y$. That is, $n^{-2}Q_n$ corresponds to a degenerate V-statistic \autocite{mises1947asymptotic, serfling2009approximation} whose asymptotic distribution converges to a weighted sum of independent chi-square variables, 
\begin{equation}\label{eq:v_stats_null}
    \frac{1}{n}Q_n \xrightarrow[]{d} \sum_k \lambda_kZ_k^2,
\end{equation}
where $\{\lambda_k\}$ are the eigenvalues of the integral operator $\boldsymbol{\mathcal{H}}: (\boldsymbol{\mathcal{H}}f)(y)=\int H(y, y')f(y')dy'$. Since $\boldsymbol{\mathcal{H}}$ is the tensor product of $\boldsymbol{\mathcal{K}}: (\boldsymbol{\mathcal{K}}f)(s)=\int k(s, s')f(s')ds'$ and the linear kernel integral operator $(\boldsymbol{\mathcal{L}}f)(z)=\int zz'f(z')dz'$ (with eigenvalues zero and $1=\int z^2dz$), it shares the same eigenvalues with $\boldsymbol{\mathcal{K}}$. For finite samples, $\{\lambda_k\}_{k=1}^n$ are the spectrum of the kernel matrix $\mathbf{K}$, and the null distribution is commonly approximated using the Welch-Satterthwaite method (matching moments to a single scaled $\chi^2$ distribution). We refer to the testing procedure using the statistic $Q$ and the null moments in \eqref{eq:null_q} (or other null approximations) as a \textbf{Q-test}.

What \eqref{eq:null_q} does not tell us, however, is the behavior of $Q$ under the alternative hypothesis that $X$ is spatially variable. As we will see in later sections, the consistency (i.e., whether the test almost always rejects the null under \textit{any} alternative hypothesis as sample size increases) and statistical power of Q-tests depend critically on the spectral design of $\mathbf{K}$.

\subsection{Moran's I as a Q-test}
Moran's I, initially introduced by \textcite{moran1948interpretation} and popularized by \textcite{cliff1981spatial}, remains the canonical and arguably most important measure of spatial autocorrelation. Given observations $\mathbf{z} \in \mathbb{R}^n$, it is defined as
\[
I = \frac{n}{\sum_{i,j} w_{ij}} \frac{\mathbf{z}^\top \mathbf{H} \mathbf{W} \mathbf{H} \mathbf{z}}{\mathbf{z}^\top \mathbf{H} \mathbf{z}},
\]
where $\mathbf{W}$ is the spatial weight matrix and $\mathbf{H} = \mathbf{I} - \frac{1}{n}\mathbf{1}\mathbf{1}^\top$ is the centering matrix. For standardized data, $\mathbf{z}^\top \mathbf{H} \mathbf{z} \approx n$ and $\mathbf{H}\mathbf{z} = \mathbf{z}$. Thus, we can reformulate Moran's I as a quadratic form governed by the weight matrix
\begin{equation}
    I = \frac1{\sum_{i,j} w_{ij}} \mathbf{z}^\top \mathbf{W} \mathbf{z} \propto Q_{\text{Moran}} := \mathbf{z}^\top \mathbf{W} \mathbf{z}.
\end{equation}
Here, we assume $\mathbf{W}$ is symmetric (e.g., from mutual $k$-nearest-neighbor graphs). Moran's I can also be defined for asymmetric $\mathbf{W}'$, which is technically not a kernel matrix by Supplementary Note \ref{supnote:1}. In practice, it is common to symmetrize the weight matrix via $\mathbf{W}=\frac{1}{2}(\mathbf{W}'+\mathbf{W}'^\top)$ to improve robustness to outliers and density variations.
Many variants of Moran's I have been developed for SVG detection, such as \texttt{Hotspot} \autocite{detomaso2021hotspot} and \texttt{MERINGUE} \autocite{miller2021characterizing}. These methods typically differ in their strategies for constructing $\mathbf{W}$. However, as long as the kernel spectrum remains indefinite (containing both positive and negative eigenvalues), the theoretical limitations of Moran's I apply to all these methods, as will be demonstrated in Supplementary Note \ref{supnote:3}.

Beyond its role as a descriptive statistic, Moran's I is intrinsically linked to parametric inference. \textcite{burridge1980cliff} first showed that the test based on Moran's I is equivalent to the score test for a Simultaneous Autoregressive (SAR) spatial random process model. Below, we extend this finding and show that this quadratic form also emerges naturally from mixed models with more general covariance structures.

\subsection{Linear Model tests as Q-tests}
\subsubsection{Linear fixed effects models}
Consider a linear regression framework where spatial dependence is captured through a set of fixed spatial covariates. Let $\mathbf{L} \in \mathbb{R}^{n \times d}$ denote a design matrix constructed from a spatial feature map $\phi:\mathcal{S}\rightarrow\mathbb{R}^d$, such that the $i$-th row corresponds to $\mathbf{L}_i = \phi(\mathbf{s}_i)^\top$. These features may represent basis functions (e.g., splines), geometric descriptors (e.g., distance to landmarks), or, in the context of cell-type-specific testing, estimated cellular abundance maps. The model is defined as
\[
\mathbf{z} = \mathbf{L}\boldsymbol{\beta} + \boldsymbol{\epsilon}, \quad \boldsymbol{\epsilon} \sim \mathcal{N}(\mathbf{0}, \sigma^2\mathbf{I}).
\]
$\sigma$ is a nuisance parameter typically estimated separately via maximum likelihood estimation (MLE) under the null (for standardized data $\hat{\sigma}\approx 1$). To derive the test statistic, we utilize the score test, which requires the gradient of the log-likelihood and the Fisher information matrix evaluated at $\boldsymbol{\beta} = \mathbf{0}$. The log-likelihood of the model is $\ell(\boldsymbol{\beta}) \propto -\frac{1}{2\sigma^2}(\mathbf{z} - \mathbf{L}\boldsymbol{\beta})^\top (\mathbf{z} - \mathbf{L}\boldsymbol{\beta})$. The score vector (gradient) with respect to $\boldsymbol{\beta}$, evaluated at the null, is

\[
\mathbf{U}(\mathbf{0}) = \left. \frac{\partial \ell}{\partial \boldsymbol{\beta}} \right|_{\boldsymbol{\beta}=\mathbf{0}} = \frac{1}{\sigma^2} \mathbf{L}^\top \mathbf{z}.
\]
The expected Fisher information matrix for $\boldsymbol{\beta}$ is
\[
\mathcal{I}(\mathbf{0}) = \mathbb{E}\left[ -\frac{\partial^2 \ell}{\partial \boldsymbol{\beta} \partial \boldsymbol{\beta}^\top} \right] = \frac{1}{\sigma^2} \mathbf{L}^\top \mathbf{L}.
\]

The score statistic is given by the quadratic form $S_{\text{fixed}} = \mathbf{U}(\mathbf{0})^\top [\mathcal{I}(\mathbf{0})]^{-1} \mathbf{U}(\mathbf{0})$. Substituting the terms above and simplifying (assuming $\sigma^2 \approx 1$ for standardized data), we obtain
\begin{equation}\label{eq:score_fixed}
    S_{\text{fixed}} = \mathbf{z}^\top \mathbf{L}(\mathbf{L}^\top\mathbf{L})^{-1}\mathbf{L}^\top \mathbf{z} =: \mathbf{z}^\top \mathbf{K}_{\text{proj}} \mathbf{z}.
\end{equation}
Here, $\mathbf{K}_{\text{proj}}$ is the orthogonal projection matrix (hat matrix) onto the column space of $\mathbf{L}$. The score statistic follows $\chi_d^2$ under the null, corresponding precisely to a Q-test with $\mathbf{K}=\mathbf{K}_{\text{proj}}$ (which has $d$ eigenvalues of 1). Geometrically, this statistic measures the magnitude of the signal that can be explained by the spatial basis $\mathbf{L}$. In the context of RKHS theory, if we define a linear kernel $\mathbf{K}_{\text{lin}} = \mathbf{L}\mathbf{L}^\top$, the matrix $\mathbf{K}_{\text{proj}}$ can be interpreted as a ``whitened" kernel matrix, where directions in the feature space are normalized to have unit variance.

The linear fixed effect regression model is most popular in cell-type-specific spatial variability testing, which we will discuss in depth in Section \ref{supnote:2.ctsv}.

\subsubsection{Linear random effects models (LMM)}\label{supnote:2.lmm}
Following the pioneering work of \texttt{SpatialDE} \autocite{svensson2018spatialde}, we consider a Linear Mixed Model (LMM) where spatial structure is modeled as a random effect rather than a fixed covariate,
\begin{equation}\label{eq:lmm}
    \mathbf{z} \sim \mathcal{N}(\mathbf{0}, \sigma^2\mathbf{I} + \theta \mathbf{K}_{\text{cov}}).
\end{equation}
Here, $\mathbf{K}_{\text{cov}}$ is a fixed, positive-definite spatial covariance matrix (e.g., a Gaussian kernel based on spatial locations), $\sigma^2$ is the noise variance (nuisance parameter, $\hat{\sigma}^2 \approx 1$ for standardized data), and $\theta \geq 0$ is the variance component governing spatial variability.

To test for spatial structure ($H_0: \theta=0$ vs. $H_1: \theta > 0$), we derive the score statistic. The log-likelihood is $\ell(\theta) \propto -\frac{1}{2}\ln|\boldsymbol{\Sigma}_\theta| - \frac{1}{2}\mathbf{z}^\top \boldsymbol{\Sigma}_\theta^{-1} \mathbf{z}$. Using the derivative of the covariance matrix $\frac{\partial \boldsymbol{\Sigma}_\theta}{\partial \theta} = \mathbf{K}_{\text{cov}}$ and the matrix identity $\frac{\partial \boldsymbol{\Sigma}^{-1}}{\partial \theta} = -\boldsymbol{\Sigma}^{-1} \frac{\partial \boldsymbol{\Sigma}}{\partial \theta} \boldsymbol{\Sigma}^{-1}$, the score evaluated at the null ($\theta=0, \boldsymbol{\Sigma}_0 = \sigma^2\mathbf{I}$) is
\[
U(0) = -\frac{1}{2}\text{tr}(\boldsymbol{\Sigma}_0^{-1}\mathbf{K}_{\text{cov}}) + \frac{1}{2}\mathbf{z}^\top \boldsymbol{\Sigma}_0^{-1} \mathbf{K}_{\text{cov}} \boldsymbol{\Sigma}_0^{-1} \mathbf{z} = \frac{1}{2\sigma^4} \left( \mathbf{z}^\top \mathbf{K}_{\text{cov}} \mathbf{z} - \sigma^2 \text{tr}(\mathbf{K}_{\text{cov}}) \right).
\]
The Fisher information at the null is $\mathcal{I}(0) = \frac{1}{2}\text{tr}(\boldsymbol{\Sigma}_0^{-1}\mathbf{K}_{\text{cov}}\boldsymbol{\Sigma}_0^{-1}\mathbf{K}_{\text{cov}}) = \frac{1}{2\sigma^4}\text{tr}(\mathbf{K}_{\text{cov}}^2)$. The standard score statistic $S = U(0)^2 / \mathcal{I}(0)$, assuming standardized data ($\sigma^2 \approx 1$), becomes
\begin{equation}\label{eq:score_lmm}
    S_{\text{LMM}} = \frac{1}{2\text{tr}(\mathbf{K}_{\text{cov}}^2)} \left( \mathbf{z}^\top \mathbf{K}_{\text{cov}} \mathbf{z} - \text{tr}(\mathbf{K}_{\text{cov}}) \right)^2,
\end{equation}
which is equivalent to a Q-test using the spatial covariance matrix as the kernel. The expression $Q_{\text{LMM}}=\mathbf{z}^\top \mathbf{K}_{\text{cov}} \mathbf{z}$ provides a natural interpretation of the Q-statistic. Under a Gaussian random field prior $\mathbf{z} \sim \mathcal{N}(\mathbf{0}, \mathbf{K}_{\text{cov}})$, the negative log-likelihood is proportional to $\mathbf{z}^\top \mathbf{K}_{\text{cov}}^{-1} \mathbf{z}$. A large $Q$ implies that $\mathbf{z}$ aligns with the dominant modes of $\mathbf{K}_{\text{cov}}$, effectively minimizing the negative log-likelihood and indicating that the signal is highly consistent with the spatial prior.

\textbf{Remark:} \texttt{SpatialDE} \autocite{svensson2018spatialde} uses a likelihood ratio (LR) test with $\chi_1^2$ null for hypothesis testing, which is asymptotically equivalent to \eqref{eq:score_lmm} under the null and local alternatives (e.g., $\theta \approx0$). A theoretical nuance arises because the null hypothesis $\theta = 0$ lies on the boundary of the parameter space ($\theta \geq 0$). Consequently, the asymptotic distribution of the LR statistic is not the standard $\chi^2_1$, but rather a mixture $0.5\chi^2_0 + 0.5\chi^2_1$, a result established by \textcite{chernoff1954distribution}. Furthermore, LMM (and its generalized version) is also called Gaussian Process (GP) regression, and more scalable implementations have since been developed, such as \texttt{SOMDE} \autocite{hao2021somde} and \texttt{nnSVG} \autocite{weber2023nnsvg}, both of which also use LR for hypothesis testing.

Finally, we note that different model parameterizations of the covariance in \eqref{eq:lmm} can yield identical test statistics. This is best illustrated by the SAR model studied by \textcite{burridge1980cliff},
\[
\mathbf{z}=\rho\mathbf{W}\mathbf{z}+\epsilon \; \Longrightarrow \; \mathbf{z} \sim \mathcal{N}(\mathbf{0}, \sigma^2[(\mathbf{I} - \rho \mathbf{W})^\top(\mathbf{I} - \rho \mathbf{W})]^{-1}),
\]
where $\rho \in [0, 1]$ is called the spatial autoregressive parameter and $\mathbf{W}$ is a row-normalized adjacency matrix with zero diagonal. Testing for spatial dependence ($H_0: \rho = 0$) involves the derivative of the log-likelihood with respect to $\rho$. At $\rho=0$, the derivative of the log-determinant term vanishes (as $\text{tr}(\mathbf{W})=0$), and the derivative of the quadratic term is determined by $\frac{\partial}{\partial \rho} [(\mathbf{I} - \rho \mathbf{W})^\top (\mathbf{I} - \rho \mathbf{W})]|_{\rho=0} = -(\mathbf{W} + \mathbf{W}^\top)$.
Thus, the score is proportional to $\mathbf{z}^\top (\mathbf{W} + \mathbf{W}^\top) \mathbf{z} = 2\mathbf{z}^\top \mathbf{W} \mathbf{z}$, recovering the numerator of Moran's I. This result shows that Moran's I is the score statistic for the SAR model, equivalent to an LMM test with an improper kernel $\mathbf{K}_{\text{cov}}=\mathbf{W}$.

\subsection{Generalized Linear Model tests as Q-tests}
In domains such as single-cell and spatial transcriptomics, data are often non-Gaussian. It is therefore standard practice to employ Generalized Linear Models (GLMs) under the assumption that explicit parametric modeling captures the data-generating process more accurately, thereby enhancing statistical power. For the specific task of spatial variability testing, our following analysis reveals that the power gain is primarily attributable to a gene-specific scaling of the Q-statistic. Compared to Gaussian-based Q-tests, this scaling implicitly prioritizes genes with higher expression levels.

\subsubsection{Generalized linear models (GLMs)} \label{supnote:2.glm}

Consider a GLM where the response vector $\mathbf{x}$ (of size $n$) follows a distribution from the exponential family (e.g., Poisson as used in \texttt{C-SIDE} \autocite{cable2022cell}). We assume a canonical link function $g(\cdot)$, relating the mean $\boldsymbol{\mu} = \mathbb{E}[\mathbf{x}]$ to the linear predictor by
\[
\boldsymbol{\eta} = g(\boldsymbol{\mu}) = b\mathbf{1} + \mathbf{L}\boldsymbol{\beta},
\]
where $b$ is the intercept and $\mathbf{L}$ is the matrix of fixed spatial effects. We test the null hypothesis of no spatial effect $H_0: \boldsymbol{\beta} = \mathbf{0}$. Under $H_0$, the MLE mean estimator is constant, $\hat{\mu}_0 = g^{-1}(\hat{b}) = \bar{\mathbf{x}}$. The score vector with respect to $\boldsymbol{\beta}$ is thus defined as
\[
U(\boldsymbol{\beta})|_{\boldsymbol{\beta}=0}= \mathbf{L}^\top\mathbf{W}_0(\boldsymbol{\eta}-\hat{\boldsymbol{\eta}})=\mathbf{L}^\top\mathbf{W}_0(\mathbf{x}-\hat{\mu}_0).
\]
Here, $\mathbf{W} = \text{diag}\left(\frac{1}{V(\mu_i) [g'(\mu_i)]^2}\right)$ is the GLM working weights matrix where $V(\cdot)$ is the variance function. Under the null, $\mathbf{W}_0=w_0\mathbf{I}$. The resulting score statistic is given by
\[
S_{\text{GLM}} = (\mathbf{x} - \hat{\mu}_0)^\top \mathbf{W}_0 \mathbf{L} (\mathbf{L}^\top \mathbf{W}_0 \mathbf{L})^{-1} \mathbf{L}^\top \mathbf{W}_0 (\mathbf{x} - \hat{\mu}_0),
\]
which follows $\chi_d^2$ under the null hypothesis. To elucidate the relationship with linear models, we defined the standardized residuals $\mathbf{z}_0 = \mathbf{W}_0^{1/2}(\mathbf{x} - \hat{\boldsymbol{\mu}}_0) \equiv \frac{\mathbf{x}-\mu_0}{v_0}$. Here, $v_0=w_0^{-1/2}:=\sqrt{\text{Var}(\hat{\mu}_0)}\,g'(\hat{\mu}_0)$ can be understood as the standard deviation of the working variable under the null. Substituting this into the score equation yields
\begin{equation}\label{eq:score_glm}
S_{\text{GLM}} = \mathbf{z}_0^\top \mathbf{K}_{\text{proj}} \mathbf{z}_0 =\frac{\hat{\sigma}^2}{v_0^2}S_{\text{fixed}},
\end{equation}
where $\mathbf{K}_{\text{proj}} = \mathbf{L}(\mathbf{L}^\top\mathbf{L})^{-1}\mathbf{L}^\top$ is the projection matrix associated with the spatial features, and $S_{\text{fixed}}$ is the statistic derived from a standard linear model. 

\eqref{eq:score_glm} highlights that a GLM-based Q-test on raw counts is structurally identical to a linear test on standardized data, up to a scalar factor $\frac{\hat{\sigma}^2}{v_0^2}$. If the parametric assumptions hold, the sample variance $\hat{\sigma}^2$ converges to the model-implied variance. For common distributions we have
\begin{itemize}
\item \textbf{Gaussian:} $v_0^2 = \hat{\sigma}^2$, yielding a ratio of $1$. The GLM recovers the standard linear test.
\item \textbf{Poisson:} $v_0^2 = \hat{\mu}_0 (1/\hat{\mu}_0)^2 = \hat{\mu}_0^{-1}$. The ratio converges to $\hat{\mu}_0 / \hat{\mu}_0^{-1} = \hat{\mu}_0^2$ for large sample sizes.
\item \textbf{Negative Binomial:} With dispersion $r$, $v_0^2 = 1/\text{Var}(\hat{\mu}_0) = (\hat{\mu}_0 + \hat{\mu}_0^2/r)^{-1}$. The ratio converges to $(\hat{\mu}_0 + \hat{\mu}_0^2/r)^2$ for large sample sizes.
\end{itemize}

In essence, GLM-based Q-tests ``inflate" the test statistic by gene expression levels while maintaining the same null structure as linear regression models. Genes with higher expression exhibit lower ``effective variance" ($v_0^2$), resulting in a stronger statistical signal. This prioritization procedure is analogous to workflows in highly variable gene (HVG) identification for single-cell analysis, where data is often standardized using only the technical (mean-dependent) variance component, deliberately preserving biological over-dispersion to highlight relevant features.

For hypothesis testing, \texttt{C-SIDE} \autocite{cable2022cell} uses the Wald test, explicitly estimating $\boldsymbol{\beta}$ and computing significance via the Fisher Information at $\hat{\boldsymbol{\beta}}$.

\textbf{Remark (score, LR and Wald tests):} The score test is popular for variance-component testing because it is computationally efficient; It requires fitting only the null model once, even when evaluating high-dimensional covariates (e.g., genotypes \autocite{wu2011rare}). In contrast, LR and Wald tests necessitate estimating coefficients $\hat{\boldsymbol{\beta}}$ under the alternative. While $\hat{\boldsymbol{\beta}}$ is necessary to interpret effect magnitude or direction (e.g., how spatial location may affect a gene's expression), it is computationally superfluous for the task of \textit{detection} (whether a gene is spatially variable or not). Given that score, LR, and Wald tests are asymptotically equivalent for local alternatives, the score test captures the same statistical signal without the overhead of fitting the generative model with covariates. General test agreement outside of $\boldsymbol{\beta} \approx 0$, as well as the quality of $\hat{\boldsymbol{\beta}}$, depends on the log-likelihood surface (which is usually well-behaved for common parametric families).

\subsubsection{Generalized linear mixed models (GLMMs)} 

Finally, consider the GLMMs used in \texttt{SPARK}\autocite{sun2020statistical} (Poisson)  or \texttt{GPcount}\autocite{bintayyash2021non} (Negative Binomial),
\[
g(\mathbb{E}[\mathbf{x}]) = b \mathbf{1} + \boldsymbol{\epsilon}, \quad \boldsymbol{\epsilon} \sim \mathcal{N}(\mathbf{0}, \boldsymbol{\Sigma}_\theta)),
\]
where the covariance is defined as $\boldsymbol{\Sigma}_\theta = \tau^2(\mathbf{I} + \theta \mathbf{K})$. Likelihood-based tests for $H_0: \theta=0$ rely on the marginal likelihood, which requires integrating out the random effects $\boldsymbol{\epsilon}$. Using Laplace approximation, the marginal becomes $l(\theta)=\ln\int f_g(\mathbf{x}|\boldsymbol{\epsilon})f_\theta(\boldsymbol{\epsilon})d\boldsymbol{\epsilon} \approx \ln f_g(\mathbf{x}|\hat{\boldsymbol{\epsilon}}_\theta) + \ln f_\theta(\hat{\boldsymbol{\epsilon}}_\theta)$ where $\hat{\boldsymbol{\epsilon}}_\theta = \operatorname*{argmax}_{\boldsymbol{\epsilon}} [f_g(\mathbf{x}|\boldsymbol{\epsilon})f_\theta(\boldsymbol{\epsilon})]$ is the posterior mode for a fixed $\theta$. The total gradient of $l(\theta)$ now simplifies because the term involving the partial derivative with respect to the maximizer $\hat{\boldsymbol{\epsilon}}_\theta$ vanishes,
\[
\nabla_\theta l(\theta) \approx \frac{\partial l}{\partial \theta}+ \underbrace{\frac{\partial l}{\partial\hat{\boldsymbol{\epsilon}}(\theta)}}_{=0}\frac{\partial\hat{\boldsymbol{\epsilon}}(\theta)}{\partial \theta} = \frac{\partial}{\partial \theta} \ln f_\theta(\hat{\boldsymbol{\epsilon}}_\theta).
\]

Substituting the Gaussian log-density $\ln f_\theta(\boldsymbol{\epsilon}) \propto -\frac{1}{2}\ln|\boldsymbol{\Sigma}_\theta| - \frac{1}{2}\boldsymbol{\epsilon}^\top \boldsymbol{\Sigma}_\theta^{-1} \boldsymbol{\epsilon}$ and noting that $\frac{\partial \boldsymbol{\Sigma}_\theta}{\partial \theta}\big|_{\theta=0} = \tau^2 \mathbf{K}$, the gradient at the null becomes

\[
\nabla_\theta l(\theta)|_{\theta=0}=
-\frac{1}{2}\text{tr}\left(\Sigma_0^{-1}(\tau^2\mathbf{K})\right) + \frac{1}{2}\hat{\boldsymbol{\epsilon}}_0^\top \Sigma_0^{-1} (\tau^2\mathbf{K}) \Sigma_0^{-1} \hat{\boldsymbol{\epsilon}}_0 = \frac{1}{2}(\frac{1}{\tau^2}\hat{\boldsymbol{\epsilon}}_0^\top \mathbf{K} \hat{\boldsymbol{\epsilon}}_0 - \text{tr}(\mathbf{K} ))
\]
since $\boldsymbol{\Sigma}_0 = \tau^2 \mathbf{I}$. Similar to LMM's \eqref{eq:score_lmm}, here the score test also corresponds to a Q-statistic
\begin{equation}\label{eq:score_glmm}
Q_{\text{GLMM}}=\tilde{\mathbf{z}}^\top \mathbf{K} \tilde{\mathbf{z}},
\end{equation}
where $\tilde{\mathbf{z}}=\hat{\boldsymbol{\epsilon}}_0/\tau$ is the standardized posterior mode of the random effects under the null, which has zero mean and unit variance. Unlike \eqref{eq:score_lmm} and \eqref{eq:score_glm}, however, here $\tilde{\mathbf{z}}$ needs to be computed by maximizing the posterior. To further simply the form, we note a direct connection between $\tilde{\mathbf{z}}$ and the standard GLM working residuals $\tilde{\mathbf{z}}_0 = \operatorname*{argmax}_{\boldsymbol{\epsilon}}[\ln f_g(\mathbf{x} \mid b + \boldsymbol{\epsilon})]$. The GLMM mode $\tilde{\mathbf{z}}$ maximizes the penalized log-likelihood $\mathcal{Q}(\boldsymbol{\epsilon}) = \ln f_g(\mathbf{x} \mid b + \boldsymbol{\epsilon}) - \frac{1}{2\tau^2} |\boldsymbol{\epsilon}|^2$. A second-order Taylor expansion of the likelihood term around $\tilde{\mathbf{z}}_0$ gives
\[
    \ln f_g(\mathbf{x} \mid b + \boldsymbol{\epsilon}) \approx \text{const}-\frac{1}{2}(\boldsymbol{\epsilon} - \tilde{\mathbf{z}}_0)^\top \mathcal{J} (\boldsymbol{\epsilon} - \tilde{\mathbf{z}}_0),
\]
where $\mathcal{J} = -\nabla_{\boldsymbol{\epsilon}}^2 \ln f_g(\mathbf{x} \mid b + \tilde{\mathbf{z}}_0)$ is the observed information matrix of the GLM. Under the null hypothesis of i.i.d. $\mathbf{x}$ entries, $\mathcal{J}$ is diagonal with entries $\mathcal{J}_{ii}=\frac{1}{\text{Var}(x)[g'(\mu)]^2}$. Maximizing the approximate quadratic form thus yields the relationship
\begin{equation}\label{eq:glmm_z_approx}
    \tilde{\mathbf{z}} = \operatorname*{argmax}_{\boldsymbol{\epsilon}} \mathcal{Q}(\boldsymbol{\epsilon}) \approx\left(\frac{1}{\tau^2}\mathbf{I} + \mathcal{J}\right)^{-1}\mathcal{J}\tilde{\mathbf{z}}_0 =\frac{nv_0}{nv_0+\tau^{-2}}\frac{\mathbf{x}-\mu_0}{v_0}=\frac{\mathbf{x}-\mu_0}{v_1},
\end{equation}
where $v_1=v_0+\frac{1}{n\tau^2}$ represents another variance estimator adjusted for the random effect prior.

\textbf{Remark:} The data transformation (from $\mathbf{x}$ to $\mathbf{z}$) in \eqref{eq:score_glm} and \eqref{eq:glmm_z_approx} is \textit{global} because the observations $\mathbf{x}$ are i.i.d. under the null hypothesis. Computationally, the scalings $v_0$ and $v_1$ can be calculated directly from nuisance parameters (the sample mean $\bar{\mathbf{x}}$, dispersion $r$ etc.). In a more general setting with covariates, $\mathcal{J}$ is not a scaled identity matrix, making the transformation spot-specific. This distinction is important for cell-type-specific spatial variability testing in Section \ref{supnote:2.ctsv}.

\subsection{HSIC-based tests as Q-tests}
Q-statistic can be rigorously understood as a special case of the Hilbert-Schmidt Independence Criterion (HSIC)\autocite{gretton2005measuring,gretton2007kernel}. Using the convergence result of HSIC and general RHKS theory, we prove that all Q-tests can only detect mean-shift spatial variability (Theorem \ref{theorem:q_mean_dependence}).

\subsubsection{HSIC background}
To motivate the kernel-based perspective, consider the simple case where we wish to test if a gene $X$ is linearly dependent on spatial coordinates $S \in \mathbb{R}^d$. A natural statistic is the sum of squared covariances between $X$ and each spatial coordinate. For standardized data $\mathbf{z}$, this is given by $|\mathbf{S}^\top \mathbf{z}|^2 = \mathbf{z}^\top \mathbf{S}\mathbf{S}^\top \mathbf{z}$, which is a Q-statistic where the kernel $\mathbf{K} = \mathbf{S}\mathbf{S}^\top$ is the linear kernel on spatial locations.

\textcite{gretton2005measuring} generalizes the notion of covariance to non-linear dependencies by mapping variables into RKHS. Let $\mathcal{F}$ and $\mathcal{G}$ be RKHSs on $\mathcal{X}$ and $\mathcal{S}$ with kernels $l(\cdot, \cdot)$ and $k(\cdot, \cdot)$, respectively. The population HSIC is defined as the squared Hilbert-Schmidt norm of the cross-covariance operator $\Sigma_{XS}$,
\[
\text{HSIC}(P_{XS}, \mathcal{F}, \mathcal{G}) := |\Sigma_{XS}|_{\text{HS}}^2 = \left| \mathbb{E}_{XS}[l(X, \cdot) \otimes k(S, \cdot)] - \mathbb{E}_X[l(X, \cdot)] \otimes \mathbb{E}_S[k(S, \cdot)] \right|_{\mathcal{F} \otimes \mathcal{G}}^2.
\]

Intuitively, HSIC measures the distance between the kernel mean embedding of the joint distribution and the product of its marginals. If the kernels are characteristic (universal), $\text{HSIC}=0$ if and only if $X$ and $S$ are independent (i.e., no spatial variability).

\subsubsection{Q-test is a kernel dependence test with a linear data kernel}

In the context of spatial transcriptomics, \texttt{SPARK-X} \autocite{zhu2021spark} and \texttt{SPLISOSM} \autocite{su2026mapping} utilize the empirical estimator of HSIC. To maintain computational efficiency and interpretability, these methods typically employ a \textit{linear} kernel for the gene expression domain ($l(x, x') = xx'$), while retaining a general spatial kernel $k(s, s')$ to capture complex spatial patterns. We have previously shown that any low-rank spatial kernel, including \texttt{SPARK-X}, leads to substantial power loss \autocite{su2026mapping}. Below, we expand our analysis to discuss the implications of using a linear data kernel on test consistency.

The HSIC-based test statistic for SVG is defined as
\[
S_{\text{HSIC}} = \frac{1}{(n-1)^2} \text{tr}(\mathbf{K} \mathbf{H} \mathbf{L} \mathbf{H}),
\]
where $\mathbf{K}$ is the spatial kernel matrix $K_{ij} = k(s_i, s_j)$, $\mathbf{L} = \mathbf{x}\mathbf{x}^\top$ is the linear kernel matrix for expression, and $\mathbf{H} = \mathbf{I} - \frac{1}{n}\mathbf{1}\mathbf{1}^\top$ is the centering matrix. Substituting $\mathbf{L}$ into the equation reveals the equivalence to the Q-statistic as
\begin{equation}\label{eq:score_hsic}
S_{\text{HSIC}} \propto \text{tr}(\mathbf{K} \mathbf{H} \mathbf{x}\mathbf{x}^\top \mathbf{H}) = \mathbf{x}^\top \mathbf{H} \mathbf{K} \mathbf{H} \mathbf{x} \propto \mathbf{z}^\top \mathbf{K} \mathbf{z} = Q_{\text{HSIC}}.
\end{equation}

This derivation provides a second interpretation of the Q-statistic, complementary to the LMM perspective (Section \ref{supnote:2.lmm}, where $Q_n$ measures the fit to a spatial prior). The HSIC perspective interprets $Q$ as a measure of \textit{generalized covariance}. By Mercer's Theorem, the positive semi-definite kernel $\mathbf{K}$ can be decomposed as $\mathbf{K} = \boldsymbol{\Phi}\boldsymbol{\Phi}^\top$, where $\boldsymbol{\Phi} \in \mathbb{R}^{n \times m}$ represents the spatial coordinates transformed into a high-dimensional feature space (where $m$ may be infinite). The Q-statistic essentially measures the norm of $\boldsymbol{\Phi}^\top \mathbf{z}$, an extension of the standard Pearson correlation.

\subsubsection{Linear data kernel restricts Q-test consistency to mean dependence}

The kernel perspective also allows us to apply RKHS theory to analyze test properties. \textcite{gretton2007kernel} showed that the empirical HSIC in \eqref{eq:score_hsic} is a consistent estimator of the population HSIC. Consequently, a standard HSIC test using two universal kernels (which the linear kernel is not) is consistent against \textit{all} forms of dependence. However, the use of linear expression kernel limits the sensitivity of the test, leading to the following consistency result:

\begin{theorem}[Q-statistic measures mean dependence]\label{theorem:q_mean_dependence}
Let $\mathcal{H}_S$ be a RKHS on the spatial domain $\mathcal{S}$ associated with a characteristic kernel $k(\cdot, \cdot)$. Let $\mathcal{H}_X$ be the RKHS on the expression domain $\mathcal{X} \subseteq \mathbb{R}$ associated with the linear kernel $l(x, x') = \langle x, x' \rangle$. We define the population Q-statistic, $Q$, as the HSIC between $S$ and $X$ under these kernels
\begin{equation}
Q(P_{SX}) := \text{HSIC}(P_{SX}, \mathcal{H}_S, \mathcal{H}_X) = |C_{SX}|_{\text{HS}}^2.
\end{equation}
Then, $Q(P_{SX}) = 0$ if and only if $X$ is mean independent of $S$; that is,
\begin{equation}\label{eq:mean_ind}
\mathbb{E}[X \mid S] = \mathbb{E}[X] \quad \text{almost surely.}
\end{equation}
\end{theorem}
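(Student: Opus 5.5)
The plan is to compute the cross-covariance operator explicitly when one of the two kernels is linear. Then $Q(P_{SX})$ becomes the squared RKHS norm of a single element of $\mathcal{H}_S$, namely a kernel mean embedding. The characteristic property of $k$ will then make that embedding vanish exactly when $\mathbb{E}[X\mid S]$ is constant.

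\textbf{Step 1: reduce the operator to an element of $\mathcal{H}_S$.} The linear kernel on $\mathcal{X}\subseteq\mathbb{R}$ has a one-dimensional RKHS, $\mathcal{H}_X=\{x\mapsto ax\}$, with feature map $l(x,\cdot)=x\,e$ where $e$ is the unit-norm element $x\mapsto x$. The tensor product $\mathcal{H}_S\otimes\mathcal{H}_X$ is therefore isometric to $\mathcal{H}_S$. Assume $\mathbb{E}[X^2]<\infty$ and $\mathbb{E}\sqrt{k(S,S)}<\infty$, so that the Bochner integrals exist. Then
\[
C_{SX} = \mathbb{E}[k(S,\cdot)\,X] - \mathbb{E}[k(S,\cdot)]\,\mathbb{E}[X] \;\cong\; g := \mathbb{E}\big[(X-\mathbb{E}X)\,k(S,\cdot)\big]\in\mathcal{H}_S,
\]
and hence $Q(P_{SX})=\|g\|_{\mathcal{H}_S}^2$.

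\textbf{Step 2: condition on $S$.} By the tower property,
\[
g=\mathbb{E}\big[m(S)\,k(S,\cdot)\big], \qquad m(S):=\mathbb{E}[X\mid S]-\mathbb{E}[X].
\]
This direction is immediate: if $m=0$ almost surely, then $g=0$ and so $Q=0$.

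\textbf{Step 3: the converse, $g=0\Rightarrow m=0$ a.s.} This is the main obstacle. Write $m=m^+-m^-$ and set $c=\mathbb{E}[m^+(S)]=\mathbb{E}[m^-(S)]$; these are equal because $\mathbb{E}[m(S)]=0$. If $c=0$ we are done. Otherwise define the two probability measures
\[
dP^\pm = \frac{m^\pm(s)}{c}\,dP_S(s).
\]
The condition $g=0$ says exactly that their kernel mean embeddings coincide, $\mu_{P^+}=\mu_{P^-}$. Since $k$ is characteristic, $P^+=P^-$. But $P^+$ and $P^-$ are supported on the disjoint sets $\{m>0\}$ and $\{m<0\}$, so they cannot be equal. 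This contradiction forces $c=0$, and hence $m=0$ $P_S$-a.s.

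The remaining technical point is to justify that $P^\pm$ lie in the class of measures on which "characteristic" is assumed; for a bounded kernel this is all Borel probability measures. Once that is checked, the two directions together give the equivalence with \eqref{eq:mean_ind}.

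An alternative route for Step 3 is to use the reproducing property directly. For every $f\in\mathcal{H}_S$ we have $\langle f,g\rangle=\mathbb{E}[m(S)f(S)]$, so $g=0$ implies $\mathbb{E}[m f]=0$ for all $f\in\mathcal{H}_S$. Since $\mathcal{H}_S+\mathbb{R}$ is dense in $L^2(P_S)$ for characteristic kernels, and $m\perp 1$ already, this again gives $m=0$. I would present the signed-measure argument and mention the density argument as a remark.
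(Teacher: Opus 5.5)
Your proof is correct and follows the same route as the paper, which likewise reduces $Q(P_{SX})=0$ via the tower property to $\mathbb{E}[m(S)f(S)]=0$ for all $f\in\mathcal{H}_S$ and then appeals to the characteristic property of $k$ (your density remark is essentially the paper's argument). Your Step~3 is more careful than the paper's one-line appeal to injectivity because it explicitly uses $\mathbb{E}[m(S)]=0$, without which that implication can fail for a characteristic kernel; one small repair is that the Bochner integral $\mathbb{E}[X\,k(S,\cdot)]$ needs $\mathbb{E}[|X|\sqrt{k(S,S)}]<\infty$, which follows by Cauchy--Schwarz from $\mathbb{E}[X^2]<\infty$ and $\mathbb{E}[k(S,S)]<\infty$ (or from a bounded kernel, as you later assume), but not from $\mathbb{E}\sqrt{k(S,S)}<\infty$ when $X$ and $S$ are dependent.
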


\begin{proof}
The squared Hilbert-Schmidt norm of the cross-covariance operator $C_{SX}$ vanishes if and only if the operator is null. That is, for all $f \in \mathcal{H}_S$ and $g \in \mathcal{H}_X$, the covariance is
\[
\langle f, C_{SX} g \rangle_{\mathcal{H}_S} = \text{Cov}[f(S), g(X)] = \mathbb{E}[f(S)g(X)] - \mathbb{E}[f(S)]\mathbb{E}[g(X)]=0.
\]
Since $\mathcal{H}_X$ is generated by a linear kernel, any function $g \in \mathcal{H}_X$ is a linear functional of the form $g(x) = \langle v, x \rangle$ for some vector $v \in \mathbb{R}^d$. Substituting this into the covariance condition yields
\begin{equation}\label{eq:th1_e1}
    \mathbb{E}[f(S) \langle v, X \rangle] = \mathbb{E}[f(S)] \mathbb{E}[\langle v, X \rangle] \quad \forall f \in \mathcal{H}_S, \forall v \in \mathbb{R}^d.
\end{equation}
By the law of total expectation, we can rewrite the left side as
\[
    \text{LHS} = \mathbb{E}_S [ f(S) \, \mathbb{E}_X[ \langle v, X \rangle \mid S ] ] = \mathbb{E}[ f(S) ] \, \langle v, \mathbb{E}[X \mid S] \rangle.
\]
Substituting it back to \eqref{eq:th1_e1} and using the linearity of the inner product, the condition becomes
\[
    \mathbb{E}_S \left[ f(S) \left( \langle v, \mathbb{E}[X \mid S] \rangle - \langle v, \mathbb{E}[X] \rangle \right) \right] = 0 \quad \forall f \in \mathcal{H}_S, \forall v \in \mathbb{R}^d.
\]

Since $k$ is a characteristic kernel, the embedding $\mu: P \to \int k(\cdot, s) dP(s)$ is injective. This implies that if $\mathbb{E}[f(S) h(S)] = 0$ for all $f \in \mathcal{H}_S$, then $h(S) = 0$ almost surely. Therefore, we must have
\[
    \langle v, \mathbb{E}[X \mid S] - \mathbb{E}[X] \rangle = 0 \quad \text{a.s.}
\]
Since this must hold for any arbitrary vector $v$, it follows that $\mathbb{E}[X \mid S] = \mathbb{E}[X]$ almost surely.
\end{proof}

Mean independence is a strictly weaker condition than the statistical independence ($P(X \mid S) = P(X)$) required by Definition \ref{def:sv}. Consequently, Theorem \ref{theorem:q_mean_dependence} implies that Q-tests are insensitive to ``higher-order'' spatial variability, such as pure spatial heteroscedasticity where the mean is constant but the variance changes (e.g., $\mathbb{E}[X|S]=\mu$ but $\text{Var}(X|S)=g(S)$). 

\textbf{Remark:} This limitation must not be confused with the structure of random effects models where spatial dependence is modeled via a covariance matrix (\eqref{eq:lmm}). While the spatial random effect has a prior expectation of zero, its realization $\phi(s)$ acts as a spatially varying intercept, effectively shifting $\mathbb{E}[X \mid S = s]$. The residual noise variance $\sigma$ (from the white noise) remains homoscedastic.

\begin{corollary}[Q-test consistency]\label{corollary:q_test_consistency}
   The Q-test is consistent against the specific class of alternatives defined by spatial mean dependence. That is, if the alternative hypothesis is
    \[
        H_1: \mathbb{E}[X \mid S] \neq \mathbb{E}[X] \quad \text{on a set of non-zero measure},
    \]
    then the power of the test approaches 1 as $n \to \infty$.
\end{corollary}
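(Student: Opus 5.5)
The argument combines Theorem \ref{theorem:q_mean_dependence} with the consistency of the empirical HSIC estimator and the null asymptotics in \eqref{eq:v_stats_null}. The idea is that under $H_0$ the statistic $n^{-1}Q_n$ is $O_p(1)$, while under $H_1$ it grows linearly in $n$.

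First I fix the test. It rejects when $n^{-1}Q_n > c_{n,\alpha}$, where $c_{n,\alpha}$ is the $(1-\alpha)$ quantile of the null law, either $\sum_k \lambda_k Z_k^2$ or its Welch/Liu approximation. I will assume the kernel is bounded, $\sup_s k(s,s)<\infty$, and that the eigenvalues of $n^{-1}\mathbf{K}$ are uniformly summable, which is the regime of Supplementary Note \ref{supnote:1}. Under these assumptions the null mean $n^{-1}\text{tr}(\mathbf{K})$ and the null variance $2n^{-2}\text{tr}(\mathbf{K}^2)$ from \eqref{eq:null_q} stay bounded. Chebyshev's inequality then gives $c_{n,\alpha}=O(1)$ uniformly in $n$.

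Second, I analyse the statistic under $H_1$. By \eqref{eq:score_hsic}, $n^{-2}Q_n$ equals the empirical HSIC with a linear data kernel, up to the factor $(n-1)^2/n^2$ and the standardization by $\hat\sigma^2$. I take $\hat\sigma^2 \to \text{Var}(X)\in(0,\infty)$ almost surely, which needs a finite second moment, and a finite fourth moment for the V-statistic law of large numbers. The $(x_i,s_i)$ are i.i.d. draws from $P_{SX}$ by the sampling scheme, so the law of large numbers for V-statistics (the \textcite{gretton2007kernel} consistency result cited above) gives $n^{-2}Q_n \to Q(P_{SX})/\text{Var}(X)$ almost surely.

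Third, I show the limit is positive under $H_1$. If $\mathbb{E}[X\mid S]\neq \mathbb{E}[X]$ on a set of positive measure, Theorem \ref{theorem:q_mean_dependence} gives $Q(P_{SX})>0$. Hence $n^{-1}Q_n = n\cdot n^{-2}Q_n \to \infty$ almost surely. Since $c_{n,\alpha}=O(1)$, we get $P(n^{-1}Q_n>c_{n,\alpha})\to 1$.

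The main obstacle is the first step: controlling the critical value uniformly in $n$ while $\mathbf{K}$ changes with $n$. The fallback is to bypass the spectral limit and use only the moment bounds $\text{tr}(\mathbf{K})\le n\sup_s k(s,s)$ and $\text{tr}(\mathbf{K}^2)\le n^2\sup_{s,s'} k(s,s')^2$. These give $n^{-2}Q_n = o_p(1)$ under the null. One can then state consistency in the equivalent form that the test based on $n^{-2}Q_n$ with any vanishing threshold has power tending to 1. Independently of this, the argument relies on Theorem \ref{theorem:q_mean_dependence} being applicable, which requires $k$ to be characteristic. For non-characteristic kernels such as Moran's adjacency, the limit may vanish despite mean dependence, which is exactly the failure mode treated in Supplementary Note \ref{supnote:3}.
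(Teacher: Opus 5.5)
Your proposal is correct and follows the same route as the paper: Theorem~\ref{theorem:q_mean_dependence} gives $Q(P_{SX})>0$ under $H_1$, and the HSIC law of large numbers sends $n^{-2}Q_n$ to a positive limit; the null scaling $n^{-1}Q_n=O_p(1)$ then forces power to $1$, with your Chebyshev bound on $c_{n,\alpha}$ serving as a more careful version of the paper's appeal to the weighted-$\chi^2$ null limit. Note that the bounds in your fallback, $n^{-1}\text{tr}(\mathbf{K})\le\sup_s k(s,s)$ and $n^{-2}\text{tr}(\mathbf{K}^2)\le\sup_{s,s'}k(s,s')^2$, already give $c_{n,\alpha}=O(1)$, so the ``main obstacle'' disappears and the spectral summability assumption is unnecessary.
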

\begin{proof}
    Under the alternative $H_1$, Theorem \ref{theorem:q_mean_dependence} establishes that the population statistic $Q(P_{SX}) > 0$. The Q-test consistency is thus a special case of the general HSIC test consistency shown in \textcite{gretton2007kernel}, using the fact that the empirical HSIC estimator converges in probability to the population HSIC (in our case, $|\frac{1}{n^2}Q_n - Q(P_{SX})| \xrightarrow{d} 0$) as $n \to \infty$ under $H_1$, and that $\frac{1}{n}Q_n$ converges to a constant distribution (weighted $\chi^2$ mixture, a property of the V-statistic) under $H_0$.
\end{proof}

We note that the blindness of the Q-test to higher-order dependence can be rectified by replacing the linear expression kernel with a characteristic one (e.g., Gaussian), albeit at a higher computational cost. However, as will be discussed in Supplementary Note \ref{supnote:3}, this is often unnecessary for spatial omics data. Furthermore, the linear kernel is not the only transformation that compresses information about $X$; for instance, a popular expression binarization procedure, adopted by many SVG detection tools like \texttt{BinSpect} \autocite{dries2021giotto}, similarly reduces the full distribution $P(X)$ to its mean $P(X=1)$.

\subsection{Other dependence-based spatial variability tests}
In this section, we review the remaining SVG detection methods categorized as ``dependence-based" by \textcite{yan2025categorization} that were not previously discussed (e.g., those not based on Moran's I or non-HSIC variants). \textbf{This is not an exhaustive list of all published SVG detection tools.} Rather, we hope that some of the underlying principles analyzed here can be generalized to a broader class of methods, even beyond the scope of spatial pattern detection.

\subsubsection{Relationship between Mutual Information (MI) and HSIC}\label{supnote:2.mi_approx}
HSIC is not the only metric that directly evaluates Definition \ref{def:sv}. One common alternative is Mutual Information (MI), formally the Kullback-Leibler (KL) divergence between the joint density and the product of marginals. Estimating MI requires evaluating the log-ratio of densities, a task that is notoriously difficult in high dimensions. HSIC offers a computationally tractable alternative that can be interpreted through the lens of Kernel Density Estimators (KDEs).

Let $\mathcal{H}_S$ and $\mathcal{H}_X$ be RKHSs with kernels $k$ and $l$, respectively. We construct KDEs for the marginals and the joint distribution using these kernels as
\[
\hat{f}_X(x) = \frac{1}{n} \sum_{i=1}^n l(x, x_i), \quad \hat{f}_S(s) = \frac{1}{n} \sum_{i=1}^n k(s, s_i), \quad \hat{f}_{XS}(x,s) = \frac{1}{n} \sum_{i=1}^n l(x, x_i)k(s, s_i).
\]
The empirical HSIC statistic is exactly the squared norm of the difference between the joint KDE and the product of the marginal KDEs within the product RKHS $\mathcal{H}_X \otimes \mathcal{H}_S$
\begin{equation}\label{eq:hsic_kde}
\text{HSIC}_n = \left\| \hat{f}_{XS} - \hat{f}_X \hat{f}_S \right\|_{\mathcal{H}_X \otimes \mathcal{H}_S}^2.
\end{equation}

To see the connection between MI and \eqref{eq:hsic_kde}, consider the density ratio $r(x,s):=\frac{\hat{f}_{XS}}{\hat{f}_X \hat{f}_S}(x, s)$. Using the Taylor expansion of $\log(r(x,s))$ at $r\approx 1$ (representing almost independence), the KL divergence can be approximated by the Pearson $\chi^2$-divergence, yielding
\begin{equation}\label{eq:mi_approx}
I(X;S)\approx \frac{1}{2}\int \int \frac{(\hat{f}_{XS} - \hat{f}_X \hat{f}_S)^2(x,s)}{\hat{f}_X(x) \hat{f}_S(s)}dxds.
\end{equation}

\eqref{eq:hsic_kde} and \eqref{eq:mi_approx} differ in weighting. HSIC measures the squared difference directly, treating high-density and low-density regions with equal weight, whereas KL penalizes deviations in low-probability regions more heavily (due to the log).

\textbf{Remark:} In the specific case of the Q-statistic, where $\mathcal{H}_X$ uses the linear kernel $l(x, x') = \langle x, x' \rangle$, $\hat{f}_X$ is not a valid probability density but rather the mean embedding. Consequently, the distance $|\hat{f}{XS} - \hat{f}_X \hat{f}_S|^2$ collapses from a comparison of densities to a comparison of conditional expectations. This is another explanation on why using the linear kernel reduces the test from detecting general dependence (like KL) to detecting only mean dependence (Theorem \ref{theorem:q_mean_dependence}).

\subsubsection{BinSpect (Dries et al., 2021)}
\texttt{BinSpect} \autocite{dries2021giotto} from the Giotto toolbox uses a binarized representation of gene expression: $\mathbf{x} \in \{0,1\}^n$. Given a spatial network (here represented by its adjacency matrix $\mathbf{W}$), \texttt{BinSpect} tallies up the edges that join up pairs of $0$-$0$, $0$-$1$, $1$-$0$, and $1$-$1$ vertices, constructs the $(2 \times 2)$-contingency table, and reports the associated odds ratio and $p$-value from Fisher's exact test. The entries of this contingency table can be expressed in vector notation as
\begin{align*}
n_{11} &= \mathbf{x}^\top \mathbf{W} \mathbf{x}, & n_{10} &= \mathbf{x}^\top \mathbf{W} (\mathbf{1}-\mathbf{x}), \\
n_{01} &= (\mathbf{1}-\mathbf{x})^\top \mathbf{W} \mathbf{x}, & n_{00} &= (\mathbf{1}-\mathbf{x})^\top \mathbf{W} (\mathbf{1}-\mathbf{x}).
\end{align*}

The odds ratio $\frac{n_{00}n_{11}}{n_{01}n_{10}}$ is closely related to the $\chi^2$-statistic, which serves as a common approximation to Fisher's exact test. Conditional on the margin totals, the $\chi^2$-statistic is \[\chi^2 = \frac{n(n_{00}n_{11}-n_{01}n_{10})^2}{(n_{00} + n_{01})(n_{10} + n_{11})(n_{00} + n_{10})(n_{10} + n_{11})},\] so that the odds ratio is $1$ iff $\chi^2 = 0$. 
Furthermore, by substituting in the expression for the entries of the contingency table, $\Delta \coloneqq n_{00} n_{11} - n_{01} n_{10}$ can be rewritten as a quadratic form involving the graph structure. Let $m = \mathbf{1}^\top \mathbf{W} \mathbf{1}$ be the total weight (or number of edges) and $\mathbf{k} = \mathbf{W}\mathbf{1}$ be the degree vector. We have
\begin{align*}
\Delta &= (\mathbf{x}^\top \mathbf{W} \mathbf{x})((\mathbf{1}-\mathbf{x})^\top \mathbf{W} (\mathbf{1}-\mathbf{x})) - (\mathbf{x}^\top \mathbf{W} (\mathbf{1}-\mathbf{x}))((\mathbf{1}-\mathbf{x})^\top \mathbf{W} \mathbf{x}) \\
&= (\mathbf{x}^\top \mathbf{W} \mathbf{x})(\mathbf{1}^\top \mathbf{W} \mathbf{1}) - (\mathbf{x}^\top \mathbf{W} \mathbf{1})(\mathbf{1}^\top \mathbf{W} \mathbf{x}) \\
&= m (\mathbf{x}^\top \mathbf{W} \mathbf{x}) - (\mathbf{x}^\top \mathbf{k})(\mathbf{k}^\top \mathbf{x}) \\
&= \mathbf{x}^\top \left( m \mathbf{W} - \mathbf{k} \mathbf{k}^\top \right) \mathbf{x}.
\end{align*}

The matrix $\mathbf{M} = m \mathbf{W} - \mathbf{k} \mathbf{k}^\top$ is a scaled version of the \emph{modularity matrix} $\mathbf{B} = \mathbf{W} - \frac{1}{m}\mathbf{k} \mathbf{k}^\top$ used in community detection. Thus, the \texttt{BinSpect} statistic is equivalent to calculating the modularity of the spatial graph given the partition induced by the binarized gene expression
\[
    Q_{\text{BinSpect}} = \mathbf{x}^\top \mathbf{M} \mathbf{x} \equiv (\mathbf{x}-\bar{\mathbf{x}})^\top \mathbf{M} (\mathbf{x}-\bar{\mathbf{x}}).
\]

Note that the data $\mathbf{x}$ is binarized separately for each gene using the same criterion (e.g., $k$-means with 2 clusters, or rank thresholding at 30\%). The mean $\bar{\mathbf{x}}$ and the missing variance scalar in $Q_{\text{BinSpect}}$ are thus almost constant across genes and do not affect gene ranking.

\textbf{Remark:} The spectrum of the modularity matrix has been well-studied. In particular: (i) $\mathbf{1}$ is in the null space (eigenvalue 0), (ii) eigenvalues are interlaced with those of $\mathbf{W}$, (iii) the sum of eigenvalues is negative, and (iv) the number of positive eigenvalues is related to the number of communities.  Consequently, we expect that cancellation will occur (to a similar degree as methods that use the weight matrix $\mathbf{W}$ directly, like Moran's I).

\textbf{Remark:} The denominator in the $\chi^2$ is a kind of normalization by the variance of $\mathbf{x}$. To wit, assuming that $\mathbf{W}$ is symmetric, the denominator is 
\begin{equation*}
\frac{(\mathbf{k}^\top \mathbf{x})^2 (\mathbf{k}^\top (\mathbf{1} - \mathbf{x}))^2}{n} = \frac{(m\mu)^2 (m(1-\mu))^2}{n} = \frac{m^4}{n} (\mu(1-\mu))^2,
\end{equation*}
\noindent where $\mu = \frac{\mathbf{k}^\top \mathbf{x}}{m}$ is the degree-weighted mean of $\mathbf{x}$. Furthermore, for a binary vector, the degree-weighted variance is $\mu(1-\mu)$.

\subsubsection{trendsceek (Edsgard et al., 2018)}
\texttt{trendsceek} \autocite{edsgard2018identification} uses various two-point statistics for marked point processes to detect spatial expression trends. These statistics compare the dependency between expression values ($\mathbf{x} = [x_1, \dots, x_n]^\top$) at two locations ($s_i, s_j$) conditioned on the distance $d(s_i, s_j) = r$. Specifically, they propose:
\begin{enumerate}[(a)]
    \item Stoyan's mark-correlation: \[\rho(r) = \frac{\mathbb{E} [x_i x_j \mid d(s_i, s_j) = r]}{\mathbb{E}[x_i x_j]}\]
    \item Mark-variogram: \[\gamma(r) = \frac{\mathbb{E} [(x_i - x_j)^2 \mid d(s_i, s_j) = r]}{2}.\]
    \item Mean-mark: \[E(r) = \mathbb{E} [x_i \mid d(s_i, s_j) = r]\]
    \item Variance-mark: \[V(r) = \mathbb{E}[(x_i - E(r))^2 \mid d(s_i, s_j) = r] = \mathbb{E}[x_i^2 \mid d(s_i, s_j) = r] - E(r)^2\]
\end{enumerate}

For each of these statistics $Q \in \{\rho, E, V, \gamma\}$, they compare the observed value of these statistics to the expected value under the null distribution obtained by permuting expression values \[\Delta_Q(r) = \abs{Q(r) - Q^{\text{exp}}(r)}.\] 

Estimation of these spatial statistics from data requires density estimation, as implemented in \texttt{spatstat} \autocite{baddeley2014package}. To cast them into our quadratic form framework, we define a distance-based spatial weight matrix $\mathbf{W}^{(r)}$. In the simplest case (box kernel), $\mathbf{W}^{(r)}_{ij} = \frac{1}{|\mathcal{P}_r|}$ if $d(s_i, s_j) \approx r$ and $0$ otherwise, where $|\mathcal{P}_r|$ is the number of pairs at distance $r$. More generally, $\mathbf{W}^{(r)}$ represents a kernel smoother centered at $r$. For simplicity, let us fix $r > 0$, assume there are ample pairs in the set $\mathcal{P}_r$, and that $\mathbf{W}^{(r)}$ is symmetric. The statistics are estimated as follows:

\begin{enumerate}[(a)]
    \item The estimate of mark-correlation is:
    \[ \hat{\rho}(r) = \frac{\sum_{i,j} \mathbf{W}^{(r)}_{ij} x_i x_j}{\frac{1}{n^2} \sum_{i,j} x_i x_j} = \frac{\mathbf{x}^\top \mathbf{W}^{(r)} \mathbf{x}}{\mathbf{x}^\top \mathbf{B} \mathbf{x}}. \]
    This is a Rayleigh quotient with $\mathbf{B} = \frac{1}{n^2}\mathbf{1}\mathbf{1}^\top$. The deviation from the expected mean is thus
    \[ \Delta_\rho(r) = \left| \frac{\mathbf{x}^\top \mathbf{W}^{(r)} \mathbf{x}}{\mathbf{x}^\top \mathbf{B} \mathbf{x}} - 1 \right| \propto \left| \mathbf{x}^\top (\mathbf{W}^{(r)} - \mathbf{B}) \mathbf{x} \right|. \]
    In \textcite{edsgard2018identification}, a ranking procedure is proposed in which the expression values are permuted several times and the statistic is recomputed for each permuted dataset, from which a $p$-value is computed from ranking the statistic on actual data against the statistics on permuted data. Note that the denominator in the Rayleigh quotient above is permutation-invariant, and hence does not affect the rankings produced by the method.
    
    \item With $\mathbf{W} = \mathbf{W}^{(r)}$, the estimated mark-variogram is
    \[ \hat{\gamma}(r) = \frac{1}{2} \sum_{i,j} \mathbf{W}_{ij} (x_i - x_j)^2 = \mathbf{x}^\top \mathbf{L}_{\mathbf{W}} \mathbf{x}, \]
    where $\mathbf{L}_{\mathbf{W}} = \mathbf{D}_{\mathbf{W}} - \mathbf{W}$ is the Laplacian matrix associated with $\mathbf{W}$. The expected statistic is $\mathbf{x}^\top \mathbf{L}_{\mathbf{B}} \mathbf{x}$, where $\mathbf{L}_{\mathbf{B}}$ is the Laplacian associated with $\mathbf{B}$. Hence the deviation is
    \[ \Delta_\gamma(r) = \left| \mathbf{x}^\top (\mathbf{L}_{\mathbf{W}} - \mathbf{L}_{\mathbf{B}}) \mathbf{x} \right|. \]
    
    \item With $\mathbf{W} = \mathbf{W}^{(r)}$, the estimated mean-mark is
    \[ \hat{E}(r) = \sum_{i,j} \mathbf{W}_{ij} x_i = \mathbf{1}^\top \mathbf{W} \mathbf{x}. \]
    Hence the expected deviation is $\Delta_E(r) = \left| \mathbf{1}^\top (\mathbf{W} - \mathbf{B}) \mathbf{x} \right|$. While this is not a quadratic form, its square is
    \[ \Delta_E(r)^2 = \mathbf{x}^\top \boldsymbol{\delta} \boldsymbol{\delta}^\top \mathbf{x}, \]
    where $\boldsymbol{\delta} = \mathbf{W}^\top \mathbf{1} - \mathbf{B}^\top \mathbf{1}$ is the difference of the degree vectors.

    \item The estimated variance-mark is
    \[ \hat{V}(r) = \sum_{i,j} \mathbf{W}_{ij} x_i^2 - \left( \sum_{i,j} \mathbf{W}_{ij} x_i \right)^2 = \mathbf{x}^\top ( \mathbf{D}_{\mathbf{W}} - \mathbf{k} \mathbf{k}^\top) \mathbf{x}, \]
    where $\mathbf{k} = \mathbf{W}^\top\mathbf{1}$ is the column degree vector. Denoting the kernel $\mathbf{K}_{\mathbf{W}} = \mathbf{D}_{\mathbf{W}} - \mathbf{k} \mathbf{k}^\top$ and similarly $\mathbf{K}_{\mathbf{B}}$ for $\mathbf{B}$, we have
    \[ \Delta_V(r) = \left| \mathbf{x}^\top (\mathbf{K}_{\mathbf{W}} - \mathbf{K}_{\mathbf{B}}) \mathbf{x} \right|. \]
\end{enumerate}

In summary, all four metrics in \texttt{trendsceek} reduce to comparing a specific quadratic form $\mathbf{x}^\top \mathbf{K}^{(r)} \mathbf{x}$ against a null quadratic form derived from the global distribution. The kernels $\mathbf{K}^{(r)}$ for mark-variogram, mean-mark and variance-mark (but not mark-correlation) are double-centered, meaning that the test statistics do not depend on the mean of $\mathbf{x}$, thus connecting to the standardized data $\mathbf{z}$. 

Indeed, from a kernel design perspective these four statistics (for fixed $r$) uses well-known constructions: weight matrices, Laplacian matrices, and mean-field approximations thereof.  What is particularly interesting about these spatial statistics is that the weight matrices generalize the usual weight matrix $\mathbf{W}^{(0)}$ which is peaked for nearby points ($r = 0$) and instead consider a family of such matrices that are sensitive to points at a specific distance $r \in [0, \infty)$. However, once $r > 0$, we see that $\mathbf{W}^{(r)}$ has trace zero, and as a consequence has both positive and negative eigenvalues (if not identically zero). The Laplacian $\mathbf{L}^{(r)}$ fares better: its eigenvalues are all non-negative and it avoids the risk of cancellation.

\subsubsection{singleCellHaystack (Vandenbon et al., 2020)}\label{supnote:2.singleCellHaystack}
\texttt{singleCellHaystack}\autocite{vandenbon2020clustering} uses an information-theoretic measure to quantify the independence of cell spatial and expression distributions. More precisely, they consider a distribution $Q$ of all cells in the space, along with distributions $P_0, P_1$ of cells in space with undetectable or detected expression of a gene ($x_i \in \{0,1\}$), and define the divergence of the gene to be: \[D_{KL} = \KL{P_0}{Q} + \KL{P_1}{Q},\] where $\operatorname{KL}$ is the Kullback-Leibler divergence. The distributions $Q, P_0, P_1$ are estimated from finite samples $\{(x_i, s_i)\}$ as mixtures and can be parametrized by their mixture weights $\mathbf{y} \in [0,1]^n$ (normalized to have sum one): given a spatial grid with grid points $p$, the density is $f_\mathbf{y}(p) = \sum_i y_i h(s_i,p) \eqqcolon \mathbf{y}^\top \mathbf{h}(p)$. With this notation, the density estimate of $Q$ is given by $f_\frac{\mathbf{1}}{n}$; the density estimate of $P_0$ (resp.\ $P_1$) is given by $f_\frac{\mathbf{1} - \mathbf{x}}{n - \norm{\mathbf{x}}_1}$ (resp.\ $f_\frac{\mathbf{x}}{\norm{\mathbf{x}}_1}$). (In the original method, $h(s, p) \propto e^{-d(s,p)^2/2}$, but the precise form doesn't matter for our result.)

From the discussion in Section \ref{supnote:2.mi_approx}, we can approximate the KL divergence, and hence $D_{KL}$, by quadratic forms.  More specifically, if $\mathbf{y}$ and $\mathbf{y}'$ are two sets of mixture weights, then 
\begin{equation*}
    \KL{f_{\mathbf{y}'}}{f_\mathbf{y}} \approx (\mathbf{y}' - \mathbf{y})^\top \mathbf{K} (\mathbf{y}' - \mathbf{y}) + O(\norm{\mathbf{y}' - \mathbf{y}}_2^3),
\end{equation*}
\noindent where $\mathbf{K}_{ij} = \frac{1}{2} \sum_p \frac{h(s_i, p) h(s_j, p)}{\mathbf{y}^\top \mathbf{h}(p)}$ is related to the Fisher information metric.

Substituting in the mixture weights, we have
\begin{align*}
    \KL{P_0}{Q} &\approx \mathbf{z}^{(0)\top} \mathbf{K} \mathbf{z}^{(0)}, \\
    \KL{P_1}{Q} &\approx \mathbf{z}^{(1)\top} \mathbf{K} \mathbf{z}^{(1)}, \\
    D_{KL} &\approx \mathbf{z}^{(0)\top} \mathbf{K} \mathbf{z}^{(0)} + \mathbf{z}^{(1)\top} \mathbf{K} \mathbf{z}^{(1)}, 
\end{align*}
\noindent where
\begin{align*}
    \mathbf{z}^{(0)} &= \frac{\mathbf{1} - \mathbf{x}}{n - \norm{\mathbf{x}}_1} - \frac{\mathbf{1}}{n}, \\
    \mathbf{z}^{(1)} &= \frac{\mathbf{x}}{\norm{\mathbf{x}}_1} - \frac{\mathbf{1}}{n}, \\
    \mathbf{K}_{ij} &= \frac{1}{2} \cdot \sum_p\frac{h(s_i, p) h(s_j, p)}{\sum_k h(s_k, p)}.
\end{align*}

Furthermore, observe that $\mathbf{z}^{(0)} = -\frac{\norm{\mathbf{x}}_1}{n-\norm{\mathbf{x}}_1} \mathbf{z}^{(1)}$, so that we may simplify:
\begin{equation*}
    D_{KL} \approx \left(1 + \left(\frac{\norm{x}_1}{n-\norm{x}_1}\right)^2\right) \mathbf{z}^{(1)\top} \mathbf{K} \mathbf{z}^{(1)}
\end{equation*}

Up to scaling, $\mathbf{K}$ is a Nystr\"{o}m-like low-rank approximation based on landmarks of (a weighted version of) the kernel $h$, and therefore shares some of the same spectral properties of $h(s_i, s_j)$. For example, if $h$ is the Gaussian kernel as in the original method, then $\mathbf{K}$ approximates the Gaussian/RBF kernel. The matrix $\mathbf{K}$ is positive semidefinite, and has no zero eigenvalues provided the rectangular matrix with entries $h(s_i, p)$ has full rank, which is true for most choices of density kernel $h$ and when the grid is sufficiently large ($\geq n$ points).  However, by default singleCellHaystack uses only 100 landmark points as the grid, which is typically fewer than the number of cells, and hence introduces ``blind spots'', though hopefully innocuous by choice of grid construction.

\subsection{Q-tests for cell-type-specific spatial variability testing}\label{supnote:2.ctsv}

In many biological contexts, spatial patterns may be driven by confounding factors, such as cell-type distribution. For instance, a gene may appear spatially variable simply because it is a marker for a cell type that is spatially clustered. To distinguish intrinsic spatial regulation from confounding effects, we introduce conditional spatial variability.

\begin{definition}[Conditional Spatial Variability]
    Let $C$ be a confounding random variable (e.g., cell-type). We say $X$ exhibits conditional spatial variability given $C$ if $X$ and $S$ are conditionally dependent given $C$. Formally, the null hypothesis of conditional independence ($X \perp \!\!\! \perp S \mid C$) implies,
    \[
    f_{X|S, C}(x|s, c) = f_{X|C}(x|c), \quad \forall x, s, c.
    \]
    Conversely, $X$ is conditionally spatially variable if this equality fails for a set of $c$ with non-zero probability.
\end{definition}

For completeness, we discuss tests designed to detect spatial variability within a given cell type. Note as resolution of spatial transcriptomics platforms keeps improving, at near single-cell level, the cell type indicator $C$ is almost binary, and conditional spatial variability $X \perp \!\!\! \perp S \mid C$ can be simply determined by running the general spatial variability test on selected cells of the designed cell type ($c_i=1$).

As far as we are aware, current existing approaches to identify cell-type-specific spatially variable genes are given explicitly as parametric models that incorporate spatial variability and cell-level variability (i.e., cell type proportions). A prototypical method of this form is \texttt{C-SIDE}\autocite{cable2022cell}, which has already been discussed in section \ref{supnote:2.glm}, but we spell out the model in greater detail here to appreciate how cell types are modeled. Suppose we have $K$ cell types $k \in \{1, 2, \ldots, K\}$. Then, 
\begin{align*}
    x_i \mid \lambda_i &\sim \operatorname{Poisson}(n_i \lambda_i) \\
    \log \lambda_i &= \log \left(\sum_{k=1}^K \pi_{ik} \mu_{ik}\right) + \epsilon_i \\
    \log \mu_{ik} &= \alpha_{k0} + \sum_{l=1}^L \alpha_{kl} h_l(s_i)
\end{align*}
\noindent where $\{(x_i, s_i)\}$ is the spatially resolved transcriptomic data as before, $n_i$ is a library size factor at spot $s_i$, $\lambda_i$ is the size-normalized expression rate, $\pi_{ik}$ is the proportion of cell type $k$ at spot $s_i$, $\mu_{ik}$ is the cell-type-specific expression rate, and $\epsilon_i \sim N(0, \sigma^2)$ is a noise term. Furthermore, the cell-type-specific expression rate $\mu_{ik}$ is decomposed into a spatially invariant term $\alpha_{k0}$, and contributions $\alpha_{kl}$ that depend on covariates $h_l(s_i)$ encoding transformations of spatial coordinates. 

Ostensibly, this theoretical framework permits simultaneous inference of cell-type proportions and (cell-type-specific) spatial variability. However, practically these methods treat (spatial) cell-type deconvolution as a separate step (e.g., using \texttt{RTCD}\autocite{cable2022robust}) and infers the cell-type-specific weights $\alpha_{kl}$ using the estimated proportions $\hat{\pi}_{ik}$ as plugin estimators. In this case, the problem reduces entirely to the case of generalized linear models discussed above. One drawback of this two-step approach is that by considering cell-type proportions as fixed instead of inferred with uncertainty, the variance of the estimator of cell-type-specific weights may become biased.

We can dissect the resulting Q-test into a contribution from the cell-type proportions and a contribution from the spatial structure. The model is complex, so we resort to quasilikelihoods and quasiscores and approximate liberally. Under the null hypothesis $H_0: \alpha_{kl} = 0$ for all $k \in \{1, \ldots, K\}$ and $l \in \{1, \ldots, L\}$, and conditioning on $\{n_i\}$, $\{\pi_{ik}\}$, $\{\alpha_{k0}\}$, and $\sigma^2$, we have
\begin{align*}
    m_i &\coloneqq \mathbb{E}[x_i \mid H_0] = n_i \sum_k \pi_{ik} e^{\alpha_{k0}} \\
    v_i &\coloneqq \operatorname{Var}(x_i \mid H_0) = m_i + m_i^2 (e^{\sigma^2} - 1)
\end{align*}

If instead $\alpha_{kl}$ are merely small, then linearizing near $\mathbf{\alpha} \approx 0$ we find that 
\begin{equation*}
    \mathbb{E}[x_i] \approx m_i + m_i \sum_{k,l} w_{ik} \alpha_{kl} h_l(s_i),
\end{equation*}
\noindent where $w_{ik} \coloneqq \frac{\pi_{ik} e^{\alpha_{k0}}}{\sum_k \pi_{ik} e^{\alpha_{k0}}}$ is a kind of soft assignment of spot $s_i$ to cell types.

Define the $(N \times KL)$-matrix $\mathbf{\Phi}$ by $\mathbf{\Phi}_{i,k,l} = m_i w_{ik} h_l(s_i)$. We have $\mathbb{E}[\mathbf{x}] \approx \mathbf{m} + \mathbf{\Phi} \mathbf{\alpha}$. Setting  $\mathbf{V} = \operatorname{diag}(\mathbf{v})$, the quasiscore is $\mathbf{U} = \mathbf{\Phi}^\top \mathbf{V}^{-1} (\mathbf{x} - \mathbf{m})$. With $\mathbf{z} = \mathbf{V}^{-1/2} (\mathbf{x} - \mathbf{m})$, the associated kernel score test statistic is \[T = \mathbf{U}^\top \mathbf{U} = \mathbf{z}^\top \mathbf{V}^{-1/2} \mathbf{\Phi} \mathbf{\Phi}^\top \mathbf{V}^{-1/2} \mathbf{z},\] where the kernel $\mathbf{K} = \mathbf{V}^{-1/2} \mathbf{\Phi} \mathbf{\Phi}^\top \mathbf{V}^{-1/2}$ is given by $\mathbf{K}_{ij} = \frac{m_i m_j}{\sqrt{v_i v_j}} \sum_{k,l} w_{ik} w_{jk} h_l(s_i) h_l(s_j)$. In particular, if we set $\mathbf{\Pi}_{ik} = m_i w_{ik}$ and $\mathbf{H}_{il} = h_l(s_i)$, then \[\mathbf{K} = \mathbf{V}^{-1/2} (\mathbf{\Pi}\mathbf{\Pi}^\top \circ \mathbf{H} \mathbf{H}^\top) \mathbf{V}^{-1/2},\] where $\circ$ is the Hadamard product. The term $\mathbf{\Pi}\mathbf{\Pi}^\top$ is the contribution from cell type proportions, which is modifying the spatial kernel $\mathbf{H} \mathbf{H}^\top$.  

For instance, if there is no variability in spot-level covariates (library sizes, cell-type proportions, etc.), then $\mathbf{\Pi} \mathbf{\Pi}^\top$ is a scalar multiple of $\mathbf{1} \mathbf{1}^\top$, so $\mathbf{\Pi} \mathbf{\Pi}^\top \circ \mathbf{H} \mathbf{H}^\top$ is a scalar multiple of $\mathbf{H} \mathbf{H}^\top$, and this cell-type-specific spatial variability reduces to an ordinary spatial variability. Similarly, if each spot consists of one (known) cell type ($\pi_{ik} = 1$ if spot $s_i$ consists entirely of cell type $k$, and is $0$ otherwise), then $\mathbf{\Pi} \mathbf{\Pi}^\top$ is a block diagonal matrix, with each block corresponding to a particular cell type, and the test statistic above decomposes as a (weighted) sum of the  statistics that test in each cell type individually. 

Other methods can be treated similarly. They differ from \texttt{C-SIDE} by assuming a different emission probability distribution (e.g., zero-inflated negative binomial instead of Poisson for \texttt{CTSV} \autocite{yu2022ctsv}), different functional forms relating $\lambda_i$ to $\mu_{ik}$, or different ways by which how spatial information is encoded (e.g., \texttt{spVC} \autocite{yu2024spvc} uses learned splines). Observe that the parametric form of \texttt{C-SIDE} from above results in the spatial information ``entering'' as a linear kernel. Recent approaches such as \texttt{STANCE} \autocite{su2025stance}, \texttt{Celina} \autocite{shang2025celina}, and \texttt{MMM} \autocite{wang2025mmm} use more general (e.g., distance-based) kernels  $\mathbf{K}'$ directly instead of linear kernels.  

Spectrally, with notation as above, $\mathbf{\Pi} \mathbf{\Pi}^\top$ is positive-semidefinite; thus, if $\mathbf{K}'$ is positive-semidefinite, then so is $\mathbf{\Pi} \mathbf{\Pi}^\top \circ \mathbf{K}'$ by Schur's product theorem. If moreover $\mathbf{K}'$ is positive-definite, then $\mathbf{\Pi} \mathbf{\Pi}^\top \circ \mathbf{K}'$ is also positive-definite by Oppenheim's inequality and the fact that $(\mathbf{\Pi} \mathbf{\Pi}^\top)_{ii} > 0$ for all $i$. Therefore, despite the diversity, the underlying behavior of these cell-type-specific spatially variability methods in large part mirrors that of the non-cell-type-specific methods discussed before, by introducing a cell-type proportion induced weighting of the ordinary spatial structure kernels that preserves their spectral properties.

\newpage

\section{Functional and Spectral Analysis of Q-tests}\label{supnote:3}
Theorem \ref{theorem:q_mean_dependence} suggests that Q-tests—and SVG detection methods that reduce to them—are limited to detecting a single class of spatial variability: mean dependence. This limitation arises because the expression distribution $P(X|S)$ is effectively reduced to its mean. However, in applications such as spatial transcriptomics, this distributional information is \textit{absent} because we observe only a single realization $(x_i, s_i)$ drawn from $X \mid S=s_i$ at each location. This constraint blurs the line between mean independence and statistical independence. For example, given fixed samples from $n$ unique locations, it is empirically impossible to distinguish between a random process $x_i \sim P(X)$ (no spatial variability) and a deterministic function $x_i=f(s_i)$ (spatial variability). 

Motivated by this observation, this note adopts a functional perspective, treating the ``signal" as a deterministic element of a Hilbert space. Using the spectrum theory of kernel operators, we derive conditions for test consistency regarding mean dependence (i.e., the functional alternative, Theorem \ref{theorem:q_functional_consistency}) and discuss design principles to improve the power of spatial pattern detection.

\subsection{A functional view of spatial patterns}

We begin by formally defining spatial patterns as functions.
\begin{definition}[Spatial Pattern]
    Let $\mathcal{S}$ be a compact spatial domain. A spatial pattern is a deterministic function $f \in L^2(\mathcal{S})$, the Hilbert space of square-integrable functions ($\int_\mathcal{S}|f(s)|^2ds<\infty$) equipped with the inner product $\langle f, g \rangle = \int_\mathcal{S} f(s)g(s)ds$.
\end{definition}

The deterministic nature of this definition stands in contrast to the spatial random process perspective (Definition \ref{def:sv}). For any finite realization of $(X, S)$, even if $X \perp \!\!\! \perp S$, there exists infinite many deterministic patterns $f$ such that $x_i=f(s_i)$. 

For spatial variability testing, the null hypothesis corresponds to $f(s) = c$ (constant almost everywhere). The alternative hypothesis encompasses all non-constant functions in $L^2(\mathcal{S})$, aligning with the mean dependence condition described in Corollary \ref{corollary:q_test_consistency}.

\begin{definition}[Spatial Kernel Operators]
    Let $k(\cdot, \cdot) \in L^2(\mathcal{S} \times \mathcal{S})$ be a symmetric function.
    \begin{enumerate}
        \item The spatial kernel operator $\boldsymbol{\mathcal{K}}: L^2(\mathcal{S}) \to L^2(\mathcal{S})$ is defined as $(\boldsymbol{\mathcal{K}}f)(s)=\int_\mathcal{S} k(s, s')f(s')ds'$.
        \item The centered kernel operator $\boldsymbol{\mathcal{T}}: L^2(\mathcal{S}) \to L^2(\mathcal{S})$ is defined as $\boldsymbol{\mathcal{T}} := \mathcal{C}\boldsymbol{\mathcal{K}}\mathcal{C}$, where $\mathcal{C}$ is the centering operator, $\mathcal{C}f = f - \frac{1}{|\mathcal{S}|}\int_\mathcal{S} f(s)ds$. 
    \end{enumerate}
\end{definition}

We adopt two standard assumptions: (1) $\mathcal{S}$ is a compact metric space (e.g., the unit circle $\mathcal{S}=S^1$) such that $L^2(\mathcal{S})$ is separable (possessing a countable orthogonal basis); (2) $\boldsymbol{\mathcal{K}}$ (and thus $\boldsymbol{\mathcal{T}}$) is Hilbert-Schmidt (i.e., it has a finite norm $\int \int |k|^2 < \infty$).

\begin{definition}[Functional Q-statistic]
    For a spatial pattern $f$, the Q-statistic is the inner product induced by the centered kernel operator $\boldsymbol{\mathcal{K}}$,
    \[
    Q(f) = \langle \mathcal{C}f, \boldsymbol{\mathcal{K}}\mathcal{C}f \rangle = \langle f, \boldsymbol{\mathcal{T}}f \rangle.
    \]
\end{definition}

As noted in Supplementary Note \ref{supnote:1}, the kernel matrix $\mathbf{K}\in\mathbb{R}^{n\times n}$ serves as the discrete approximation of $\boldsymbol{\mathcal{K}}$ based on $n$ finite samples. We can thus formally connect the functional $Q$ to the discrete counterpart $Q_n$ introduced in Supplementary Note \ref{supnote:2} and Table \ref{tab:sv_tests}. A technical caveat arises because elements of $L^2$ are equivalence classes of functions equal almost everywhere; evaluating $f \in L^2$ at a finite set of points (a set of measure zero) is therefore ambiguous. Since bounded continuous functions are dense in $L^2$, we assume without loss of generality that $f$ is continuous at the sampled locations $\mathbf{s}$.

\begin{definition}[Finite-sample Pattern, Kernel, and Q-statistic]
Given $n$ unique samples $s_i \in \mathcal{S}$, the finite spatial pattern derived from $f$ is the vector $\mathbf{x}=[f(s_1), \dots, f(s_n)]^\top \in \mathbb{R}^n$. The finite spatial kernel is the matrix $\mathbf{K}_n=[k(s_i, s_j)] \in \mathbb{R}^{n \times n}$. The discrete Q-statistic is defined as $Q_n=\mathbf{x}^\top \mathbf{H}\mathbf{K}_n\mathbf{H} \mathbf{x}$, where $\mathbf{H}$ is the centering matrix. Note that for standardized data $\mathbf{z}$, $Q_n \propto \mathbf{z}^\top \mathbf{K}_n\mathbf{z}$, and the scaling factor depends on the norm of $f$.
\end{definition}

It is clear from the definition that given $n$ samples, the space of all possible spatial patterns is of dimension $n$. As $n\to \infty$, we have the following convergence result:

\begin{proposition}[Q-statistic convergence]
    Let $(\mathcal{S}, d, \mu)$ be a compact metric measure space with probability measure $\mu$. Let the observation locations $\{s_i\}_{i=1}^n$ be i.i.d. samples drawn from $\mu$. The scaled discrete Q-statistic can be formulated as a V-statistic as before,
    \[
    \frac{1}{n^2} Q_n(\mathbf{x}) = \frac{1}{n^2} \sum_{i=1}^n \sum_{j=1}^n \tilde{x}_i k(s_i, s_j) \tilde{x}_j,
    \]
    where $\tilde{x}_i$ represents the empirically centered observations. By the Strong Law of Large Numbers for V-statistics, this quantity serves as a Monte Carlo approximation of the double integral defined by the operator $\boldsymbol{\mathcal{T}}$, converging almost surely to the functional Q-statistic
    \[
    \frac{1}{n^2}Q_n(\mathbf{x})=\frac{1}{n^2} \mathbf{x}^\top \mathbf{H}\mathbf{K}_n\mathbf{H} \mathbf{x} \xrightarrow{a.s.} \iint_{\mathcal{S}\times\mathcal{S}} \bar{f}(s) k(s, s') \bar{f}(s') d\mu(s) d\mu(s') = \langle f, \boldsymbol{\mathcal{T}}f\rangle = Q(f),
    \]
    where $\bar{f} = \mathcal{C}f$ is the centered function and $\mathbf{x}_i = f(s_i)$.
\end{proposition}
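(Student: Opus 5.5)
We expand the centered statistic algebraically and reduce it to a finite combination of V-statistics with non-degenerate limits. Write $x_i=f(s_i)$ and $\bar x_n=\frac1n\sum_i x_i$, so that $\tilde x_i=x_i-\bar x_n$. Expanding $\mathbf{x}^\top\mathbf{H}\mathbf{K}_n\mathbf{H}\mathbf{x}$ gives
\[
\frac{1}{n^2}Q_n=\frac{1}{n^2}\sum_{i,j}x_ik(s_i,s_j)x_j-2\bar x_n\,\frac{1}{n^2}\sum_{i,j}k(s_i,s_j)x_j+\bar x_n^2\,\frac{1}{n^2}\sum_{i,j}k(s_i,s_j).
\]
Each of the three double sums is a V-statistic in the i.i.d.\ locations $s_i\sim\mu$, with the symmetric kernels
\[
h_1(s,s')=f(s)k(s,s')f(s'),\qquad h_2(s,s')=\tfrac12\big(f(s)+f(s')\big)k(s,s'),\qquad h_3(s,s')=k(s,s').
\]
The scalar $\bar x_n$ converges a.s.\ to $m:=\int f\,d\mu$ by the ordinary SLLN, since $f\in L^2(\mu)\subset L^1(\mu)$.

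Next we invoke the SLLN for V-statistics (Hoeffding's SLLN for U-statistics plus control of the diagonal). Each $n^{-2}\sum_{i,j}h(s_i,s_j)$ converges a.s.\ to $\iint h\,d\mu\,d\mu$ provided two conditions hold. The first is $\iint|h|\,d\mu\,d\mu<\infty$, which gives U-statistic convergence. The second is $\int|h(s,s)|\,d\mu<\infty$, so that the diagonal contribution $n^{-2}\sum_i h(s_i,s_i)=n^{-1}\cdot(\text{SLLN average})\to0$. Off the diagonal, integrability follows from Cauchy--Schwarz: $\iint|f(s)k(s,s')f(s')|\le\|k\|_{L^2(\mu\otimes\mu)}\|f\|_{L^2}^2$, and similarly for $h_2,h_3$ using $\mu(\mathcal S)=1$. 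Combining the three limits by continuity gives
\[
\frac{1}{n^2}Q_n\xrightarrow{a.s.}\iint fkf-2m\iint kf+m^2\iint k=\iint\bar f(s)k(s,s')\bar f(s')\,d\mu\,d\mu.
\]
Here $\bar f=f-m$. This equals $\langle\mathcal Cf,\boldsymbol{\mathcal K}\mathcal Cf\rangle=\langle f,\boldsymbol{\mathcal T}f\rangle$, using self-adjointness of $\mathcal C$. The identification presumes the inner product and the centering in $\mathcal C$ are taken with respect to $\mu$; for uniform $\mu$ on $\mathcal S$ this matches the paper's normalization by $|\mathcal S|$.

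The main obstacle is the diagonal terms. A Hilbert--Schmidt kernel need not be defined on, or integrable along, the diagonal: $k(s,s)$ lives on a $\mu\otimes\mu$-null set and is ambiguous, exactly as the paper notes for evaluating $f$ at points. We therefore add a mild regularity assumption. One option is that $k$ is continuous, hence bounded on compact $\mathcal S$, and $f$ is bounded continuous, as the paper already assumes for $f$ at sample points. The other is just $\int|k(s,s)|(1+f(s)^2)\,d\mu<\infty$. Under either assumption the diagonal contributes $O(1/n)$ a.s.\ and the argument closes. For merely $L^2$ kernels we would instead state the result for the U-statistic version (diagonal removed), where Hoeffding's SLLN applies directly under the Cauchy--Schwarz integrability above.
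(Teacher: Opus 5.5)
Your proposal is correct and takes the same route as the paper, which simply invokes the strong law of large numbers for V-statistics. Your expansion into three genuine V-statistics weighted by powers of $\bar x_n$, together with the Cauchy--Schwarz integrability check, supplies details that one-line appeal skips: the sum in $\tilde x_i$ is not literally a V-statistic with a fixed kernel, and for a merely Hilbert--Schmidt $k$ your diagonal condition is genuinely needed, since $k(s,s)$ can be redefined on the $\mu\otimes\mu$-null diagonal so that $n^{-2}\sum_i \tilde x_i^2 k(s_i,s_i)$ does not tend to zero. The only nit is that with the paper's Lebesgue-based $\langle\cdot,\cdot\rangle$ and uniform $\mu$, the identification with $\langle f,\boldsymbol{\mathcal T}f\rangle$ holds only up to the positive constant $|\mathcal S|^{-2}$.
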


\subsection{Spectral properties of kernel operators}
The asymptotic null distribution of V-statistics, introduced in \eqref{eq:v_stats_null} of Supplementary Note \ref{supnote:2}, implies that the consistency of a Q-test is determined by the spectrum of its underlying kernel operator. In this section, we summarize key spectral properties relevant to our analysis.

\begin{lemma}[Spectral decomposition of $\boldsymbol{\mathcal{K}}$]\label{lemma:spectrum_k}
    By assumption, the operator $\boldsymbol{\mathcal{K}}$ is self-adjoint (since $k(\cdot, \cdot)$ is symmetric) and Hilbert-Schmidt, and consequently compact. By the Spectral Theorem for compact self-adjoint operators, there exists a countable orthonormal basis of eigenfunctions $\{\phi_j\}_{j=0}^\infty$ in $L^2(\mathcal{S})$ and a corresponding sequence of real eigenvalues $\{\mu_j\}_{j=0}^\infty$ such that
    \[
    \boldsymbol{\mathcal{K}}f = \sum_{j=0}^\infty \mu_j \langle f, \phi_j \rangle \phi_j.
    \]
    The eigenvalues satisfy $|\mu_0| \ge |\mu_1| \ge \dots \to 0$ as $j \to \infty$.
\end{lemma}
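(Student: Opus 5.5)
We will prove the lemma in three steps: self-adjointness, compactness via the Hilbert--Schmidt property, and the spectral theorem for compact self-adjoint operators. We then add a short completion argument so that the eigenfunctions form a basis of all of $L^2(\mathcal{S})$, which requires keeping the null space.

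\textbf{Self-adjointness.} Take $f,g\in L^2(\mathcal{S})$. Since $k\in L^2(\mathcal{S}\times\mathcal{S})$, Cauchy--Schwarz gives $\iint |k(s,s')f(s')g(s)|\,ds\,ds' \le \|k\|_{L^2}\|f\|\|g\|<\infty$. Fubini therefore applies, and symmetry $k(s,s')=k(s',s)$ yields $\langle \boldsymbol{\mathcal{K}}f,g\rangle=\langle f,\boldsymbol{\mathcal{K}}g\rangle$. The same estimate shows $\|\boldsymbol{\mathcal{K}}\|\le\|k\|_{L^2}$, so $\boldsymbol{\mathcal{K}}$ is bounded and real-valued on real functions.

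\textbf{Compactness.} Separability of $L^2(\mathcal{S})$ (assumption (1)) gives a countable orthonormal basis $\{e_i\}$. The products $\{e_i(s)e_j(s')\}$ then form an orthonormal basis of $L^2(\mathcal{S}\times\mathcal{S})$. Expanding $k$ in this basis shows $\sum_{i,j}|\langle \boldsymbol{\mathcal{K}}e_j,e_i\rangle|^2=\|k\|_{L^2}^2<\infty$, so $\boldsymbol{\mathcal{K}}$ is Hilbert--Schmidt. Truncate the expansion to finitely many indices to get finite-rank operators $\boldsymbol{\mathcal{K}}_N$. Their Hilbert--Schmidt distance to $\boldsymbol{\mathcal{K}}$, and hence their operator-norm distance, tends to zero. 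A norm limit of finite-rank operators is compact, so $\boldsymbol{\mathcal{K}}$ is compact.

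\textbf{Spectral decomposition.} Apply the spectral theorem for compact self-adjoint operators on a real Hilbert space. It gives at most countably many nonzero real eigenvalues, each of finite multiplicity, accumulating only at $0$. Their eigenspaces are mutually orthogonal, and $\overline{\operatorname{ran}\boldsymbol{\mathcal{K}}}=(\ker\boldsymbol{\mathcal{K}})^\perp$ is the closed span of these eigenvectors. This is the step needing the most care, since the standard statement only produces a basis of $(\ker\boldsymbol{\mathcal{K}})^\perp$. To get a basis of all of $L^2(\mathcal{S})$, append an orthonormal basis of $\ker\boldsymbol{\mathcal{K}}$; it is countable by separability. Assign the value $\mu_j=0$ to each appended vector. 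Then every $f$ decomposes as $f=\sum_j\langle f,\phi_j\rangle\phi_j$, and applying the bounded operator $\boldsymbol{\mathcal{K}}$ termwise gives $\boldsymbol{\mathcal{K}}f=\sum_j\mu_j\langle f,\phi_j\rangle\phi_j$.

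\textbf{Ordering.} List the nonzero eigenvalues, repeated according to multiplicity, in non-increasing order of $|\mu_j|$. This is possible because for each $\varepsilon>0$ only finitely many satisfy $|\mu_j|>\varepsilon$. Consequently $|\mu_j|\to0$ whenever there are infinitely many nonzero eigenvalues. The zero eigenvalues from the kernel go at the end. If there are finitely many nonzero eigenvalues, the sequence is eventually zero and the claim is trivial. If there are infinitely many nonzero eigenvalues and the kernel is also nontrivial, the stated indexing $j=0,1,\dots$ cannot place all zeros after all the nonzero ones. In that case I would state the expansion as a sum over the countable index set without requiring a single monotone enumeration. Alternatively, one can note that the null-space terms contribute nothing to $\boldsymbol{\mathcal{K}}f$, so the monotone ordering is only needed on the nonzero part.
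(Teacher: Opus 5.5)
Your proposal is correct and follows the same route as the paper. The paper simply notes that symmetry of $k$ gives self-adjointness and that the Hilbert--Schmidt assumption gives compactness, then invokes the spectral theorem for compact self-adjoint operators; you supply the Fubini, finite-rank-approximation and null-space-completion details it leaves implicit. Your ordering caveat is also right: if there are infinitely many nonzero eigenvalues and $\ker\boldsymbol{\mathcal{K}}\neq\{0\}$, no single $\mathbb{N}$-indexed basis of all of $L^2(\mathcal{S})$ can have $|\mu_j|$ non-increasing, so the monotone ordering should be read as applying only to the nonzero eigenvalues, which costs nothing since the null-space terms vanish in $\boldsymbol{\mathcal{K}}f$.
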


\begin{proposition}[Spectral decomposition of $\boldsymbol{\mathcal{T}}$]\label{prop:spectrum_centered}
    The centered operator $\boldsymbol{\mathcal{T}} = \mathcal{C}\boldsymbol{\mathcal{K}}\mathcal{C}$ restricts the action of $\boldsymbol{\mathcal{K}}$ to the subspace of zero-mean functions, $L^2_0(\mathcal{S}) = \{f \in L^2(\mathcal{S}) : \int_\mathcal{S} f = 0\}$. Its spectral properties are as follows:
    \begin{enumerate}
        \item \textbf{Nullspace:} The constant function $\mathbf{1}$ lies in the kernel of $\boldsymbol{\mathcal{T}}$, corresponding to an eigenvalue $\lambda_0 = 0$.
        \item \textbf{Eigenvalue Interlacing:} By the Courant-Fischer Min-Max Theorem, the eigenvalues $\{\lambda_j\}$ of the projected operator $\boldsymbol{\mathcal{T}}$ are interlaced with (specifically, are less than or equal to) the eigenvalues $\{\mu_j\}$ of the uncentered operator $\boldsymbol{\mathcal{K}}$.
        \item \textbf{Translation Invariance:} If the constant function $\mathbf{1}$ is an eigenfunction of $\boldsymbol{\mathcal{K}}$ (e.g., if $\boldsymbol{\mathcal{K}}$ is translation invariant on a periodic domain), then the spectrum of $\boldsymbol{\mathcal{T}}$ is identical to that of $\boldsymbol{\mathcal{K}}$, excluding the eigenvalue associated with the constant mode. The eigenfunctions of $\boldsymbol{\mathcal{T}}$ are simply the centered eigenfunctions of $\boldsymbol{\mathcal{K}}$.
    \end{enumerate}
\end{proposition}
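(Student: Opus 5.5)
We prove the three items of Proposition \ref{prop:spectrum_centered} in order, using Lemma \ref{lemma:spectrum_k} and the structure of the centering operator $\mathcal{C}$. First, observe that $\mathcal{C}$ is the orthogonal projection onto $L^2_0(\mathcal{S})$: it is linear and idempotent ($\mathcal{C}^2=\mathcal{C}$), it is self-adjoint (since $\langle \mathcal{C}f,g\rangle = \langle f,g\rangle - |\mathcal{S}|^{-1}\int f\int g$ is symmetric in $f,g$), and its kernel is the span of $\mathbf{1}$. Hence $\boldsymbol{\mathcal{T}}=\mathcal{C}\boldsymbol{\mathcal{K}}\mathcal{C}$ is self-adjoint and compact (a composition of a compact operator with bounded ones). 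The spectral theorem therefore applies to $\boldsymbol{\mathcal{T}}$ as well. Moreover, $L^2(\mathcal{S}) = \mathrm{span}\{\mathbf{1}\}\oplus L^2_0$ is an orthogonal decomposition that reduces $\boldsymbol{\mathcal{T}}$, with $\boldsymbol{\mathcal{T}}$ acting as the compression $\mathcal{C}\boldsymbol{\mathcal{K}}|_{L^2_0}$ on the second summand.

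\textbf{Item 1} is then immediate: $\mathcal{C}\mathbf{1}=0$ gives $\boldsymbol{\mathcal{T}}\mathbf{1}=0$.

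\textbf{Item 2.} We treat positive and negative eigenvalues separately, since $\boldsymbol{\mathcal{K}}$ may be indefinite. Order the positive eigenvalues decreasingly: $\mu_1^+\ge\mu_2^+\ge\dots$ for $\boldsymbol{\mathcal{K}}$ and $\lambda_1^+\ge\dots$ for $\boldsymbol{\mathcal{T}}$. Courant--Fischer gives
\[
\lambda_j^+ = \max_{\dim V=j}\ \min_{f\in V,\ \|f\|=1}\langle f,\boldsymbol{\mathcal{T}}f\rangle .
\]
A maximizing subspace can be taken inside $L^2_0$, where $\langle f,\boldsymbol{\mathcal{T}}f\rangle=\langle f,\boldsymbol{\mathcal{K}}f\rangle$, so the max over $L^2_0$ is at most the max over all of $L^2$. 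This yields $\lambda_j^+\le\mu_j^+$. Conversely, intersecting any $(j+1)$-dimensional subspace with $L^2_0$ leaves dimension at least $j$, which gives $\lambda_j^+\ge\mu_{j+1}^+$ (Cauchy interlacing for a codimension-one compression). The negative part follows symmetrically, giving $\mu_j^-\le\lambda_j^-\le\mu_{j+1}^-$ in the appropriate ordering. The main obstacle is bookkeeping. The statement's phrase ``less than or equal to'' must be read within this signed ordering, and the zero eigenvalue contributed by $\mathbf{1}$ must be accounted for. I would state the result precisely as a Cauchy interlacing inequality for a rank-one projection compression, rather than as a termwise inequality under the $|\mu_j|$ ordering of Lemma \ref{lemma:spectrum_k}.

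\textbf{Item 3.} Suppose $\boldsymbol{\mathcal{K}}\mathbf{1}=\mu\mathbf{1}$. Self-adjointness gives $\boldsymbol{\mathcal{K}}(L^2_0)\subseteq L^2_0$, since for $f\perp\mathbf{1}$ we have $\langle\boldsymbol{\mathcal{K}}f,\mathbf{1}\rangle=\mu\langle f,\mathbf{1}\rangle=0$. Hence $\boldsymbol{\mathcal{K}}$ commutes with $\mathcal{C}$, and $\boldsymbol{\mathcal{T}}=\boldsymbol{\mathcal{K}}\mathcal{C}$. Because the eigenspaces of $\boldsymbol{\mathcal{K}}$ are invariant, we may choose the orthonormal eigenbasis of Lemma \ref{lemma:spectrum_k} as $\{|\mathcal{S}|^{-1/2}\mathbf{1}\}$ together with an orthonormal eigenbasis of $L^2_0$. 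On the latter, $\boldsymbol{\mathcal{T}}\phi_j=\mu_j\phi_j$ (these $\phi_j$ are already centered, $\mathcal{C}\phi_j=\phi_j$), while the constant mode is sent to $0$. Thus $\mathrm{spec}(\boldsymbol{\mathcal{T}})$ equals $\mathrm{spec}(\boldsymbol{\mathcal{K}})$ with $\mu$ replaced by $0$.

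For the periodic translation-invariant example, $k(s,s')=\kappa(s-s')$, so $\boldsymbol{\mathcal{K}}\mathbf{1}=\int\kappa$ is constant and the hypothesis of item 3 holds.
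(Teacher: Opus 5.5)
Your proof is correct. For item 3 it rests on the same computation as the paper's: a mean-zero eigenfunction is fixed by $\mathcal{C}$, so $\boldsymbol{\mathcal{T}}\phi=\mathcal{C}\boldsymbol{\mathcal{K}}\phi=\mu\phi$.

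The paper, however, proves only item 3, and only one inclusion of it. It checks that eigenpairs of $\boldsymbol{\mathcal{K}}$ orthogonal to $\mathbf{1}$ survive in $\boldsymbol{\mathcal{T}}$, tacitly assuming that the eigenbasis of $\boldsymbol{\mathcal{K}}$ splits into $\mathbf{1}$ plus mean-zero eigenfunctions. Your observation $\langle\boldsymbol{\mathcal{K}}f,\mathbf{1}\rangle=\mu\langle f,\mathbf{1}\rangle$, which gives $\boldsymbol{\mathcal{K}}\mathcal{C}=\mathcal{C}\boldsymbol{\mathcal{K}}$ and $\boldsymbol{\mathcal{T}}=\boldsymbol{\mathcal{K}}\mathcal{C}$, is exactly what justifies that split. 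It is cleaner to phrase this as applying the spectral theorem to the compact self-adjoint restriction $\boldsymbol{\mathcal{K}}|_{L^2_0}$, rather than saying ``the eigenspaces are invariant.'' This step also gives you the full identification of the spectrum of $\boldsymbol{\mathcal{T}}$: it is the spectrum of $\boldsymbol{\mathcal{K}}$ with the constant-mode eigenvalue replaced by $0$, multiplicities included.

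For items 1 and 2 the paper gives no argument beyond citing Courant--Fischer, and your max--min proof fills this in. Your insistence on the signed Cauchy-interlacing form $\mu_{j+1}^+\le\lambda_j^+\le\mu_j^+$, $\mu_j^-\le\lambda_j^-\le\mu_{j+1}^-$ is a genuine correction to the statement, not just bookkeeping. The literal ``less than or equal to'' holds for positive semi-definite $\boldsymbol{\mathcal{K}}$. It fails for indefinite kernels, which are exactly the case the paper cares about. For example, on a two-point space, a kernel with eigenvalue $3$ on $(1,-1)$ and $-1$ on $\mathbf{1}$ has spectrum $\{3,-1\}$, while $\boldsymbol{\mathcal{T}}$ has spectrum $\{3,0\}$.
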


\begin{proof}
    We prove the third statement. Assume $\boldsymbol{\mathcal{K}}\mathbf{1} = \mu_c \mathbf{1}$. Let $(\mu, \phi)$ be an eigenpair of $\boldsymbol{\mathcal{K}}$ such that $\phi \perp \mathbf{1}$ (i.e., $\phi \in L^2_0(\mathcal{S})$). Since $\phi$ has zero mean, $\mathcal{C}\phi = \phi$. Then,
    \[
    \boldsymbol{\mathcal{T}}\phi = \mathcal{C}\boldsymbol{\mathcal{K}}\mathcal{C}\phi = \mathcal{C}\boldsymbol{\mathcal{K}}\phi = \mathcal{C}(\mu \phi) = \mu \mathcal{C}\phi = \mu \phi.
    \]
    Thus, every eigenpair of $\boldsymbol{\mathcal{K}}$ orthogonal to the constant function is preserved by $\boldsymbol{\mathcal{T}}$.
\end{proof}

Proposition \ref{prop:spectrum_centered} implies that centering introduces a necessary ``blind spot" at frequency zero. For translation invariant kernels on periodic domains (e.g., the torus), the analysis simplifies further via the Convolution Theorem.

\begin{lemma}[Spectrum of Stationary Kernels]\label{lemma:spectrum_convolution}
    Let $\boldsymbol{\mathcal{K}}$ be a spatial convolution operator defined on a periodic domain (e.g., the torus $\mathbb{T}^d$), such that $k(s, s') = \psi(s - s')$. The eigenfunctions of $\boldsymbol{\mathcal{K}}$ are the Fourier basis functions $\{\phi_\omega(s) = e^{i \omega \cdot s}\}_{\omega \in \mathbb{Z}^d}$. The eigenvalues are given by the Fourier coefficients of the kernel profile $\psi$,
    \[
        \lambda(\omega) = \int_{\mathcal{S}} \psi(u) e^{-i \omega \cdot u} du.
    \]
\end{lemma}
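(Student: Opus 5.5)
The plan is to verify directly that each Fourier character is an eigenfunction of $\boldsymbol{\mathcal{K}}$, and then to use completeness of the Fourier basis on $\mathbb{T}^d$ to conclude that these characters exhaust the spectral decomposition of Lemma \ref{lemma:spectrum_k}. Throughout I take $\mathbb{T}^d = \mathbb{R}^d / 2\pi\mathbb{Z}^d$ with Lebesgue measure. Subtraction $s - s'$ is understood modulo the period, and $\psi$ is extended periodically. Since $k \in L^2(\mathcal{S}\times\mathcal{S})$ and $k(s,s')=\psi(s-s')$, Fubini gives $\psi \in L^2(\mathbb{T}^d) \subset L^1(\mathbb{T}^d)$. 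Hence the integral defining $\lambda(\omega)$ converges absolutely.

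\textbf{Step 1 (eigen-relation).} Fix $\omega \in \mathbb{Z}^d$ and compute
\[
(\boldsymbol{\mathcal{K}}\phi_\omega)(s) = \int_{\mathbb{T}^d} \psi(s-s') e^{i\omega\cdot s'}\,ds'.
\]
I substitute $u = s - s'$. On the torus this map is a measure-preserving bijection, and periodicity of the integrand means the domain of integration does not move. This gives
\[
(\boldsymbol{\mathcal{K}}\phi_\omega)(s) = e^{i\omega\cdot s}\int_{\mathbb{T}^d}\psi(u)e^{-i\omega\cdot u}\,du = \lambda(\omega)\,\phi_\omega(s).
\]
The key point in this step is the translation invariance of Haar measure on $\mathbb{T}^d$. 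That invariance is exactly why the lemma needs the periodic domain.

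\textbf{Step 2 (completeness and realness).} The family $\{(2\pi)^{-d/2}\phi_\omega\}_{\omega\in\mathbb{Z}^d}$ is an orthonormal basis of $L^2(\mathbb{T}^d)$; this is standard Fourier completeness, for example via Stone--Weierstrass density of trigonometric polynomials in $C(\mathbb{T}^d)$. So $\boldsymbol{\mathcal{K}}$ is diagonalized by this basis: $\boldsymbol{\mathcal{K}}f=\sum_\omega \lambda(\omega)\hat f(\omega)\phi_\omega$ (with normalized coefficients). It has no eigenvalues other than the $\lambda(\omega)$, because an eigenfunction for any other value would be orthogonal to every basis element and hence zero.

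Symmetry of $k$ gives $\psi(u)=\psi(-u)$. Therefore $\lambda(\omega)=\lambda(-\omega)=\overline{\lambda(\omega)}$ when $\psi$ is real, so every eigenvalue is real, consistent with self-adjointness. To match the real orthonormal basis of Lemma \ref{lemma:spectrum_k}, I pair $\pm\omega$ and pass to $\cos(\omega\cdot s)$ and $\sin(\omega\cdot s)$, both of which share the eigenvalue $\lambda(\omega)$. By Parseval, $\sum_\omega|\lambda(\omega)|^2 = (2\pi)^d\|\psi\|^2_{L^2}<\infty$, which confirms the Hilbert--Schmidt property and the decay $\lambda(\omega)\to 0$.

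As a corollary, $\omega=0$ gives $\boldsymbol{\mathcal{K}}\mathbf{1}=\lambda(0)\mathbf{1}$. The hypothesis of Proposition \ref{prop:spectrum_centered}(3) therefore holds, and the spectrum of $\boldsymbol{\mathcal{T}}$ is $\{\lambda(\omega)\}_{\omega\neq 0}$ together with $0$.

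\textbf{Main obstacle.} The argument is mostly routine; the delicate points are measure-theoretic. The first is justifying the change of variables on the quotient space, which means showing that integrating a $2\pi$-periodic function over any translate of a fundamental domain gives the same value. The second is that Step 2 requires completeness of the Fourier basis, not just the eigen-relation from Step 1. I would cite both as standard facts rather than reprove them.
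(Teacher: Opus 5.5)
Your proof is correct and takes essentially the paper's route. The paper gives no written proof and only appeals to the Convolution Theorem, which is exactly your Step~1 identity $(\boldsymbol{\mathcal{K}}\phi_\omega)(s)=\lambda(\omega)\phi_\omega(s)$ obtained from translation invariance on $\mathbb{T}^d$, while your Step~2 (completeness of the Fourier system, realness of $\lambda(\omega)$ from $\psi(u)=\psi(-u)$, and the Parseval/Hilbert--Schmidt check) fills in details the paper leaves implicit, reading ``the eigenfunctions are the Fourier basis functions'' as ``the Fourier system is an eigenbasis'' given the $\pm\omega$ degeneracy.
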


\textbf{Remark:} The index $\omega$ above is discrete. On the unbounded domain $\mathcal{S}=\mathbb{R}^d$, the translation-invariant operator admits a continuous spectrum. The eigenfunctions are generalized plane waves $\{e^{i \xi \cdot s}\}_{\xi \in \mathbb{R}^d}$, and the eigenvalues are described by the continuous spectral density function $\hat{\psi}(\xi) = \int_{\mathbb{R}^d} \psi(u) e^{-i \xi \cdot u} du$. Our assumption of a compact domain $\mathcal{S}$ (e.g., a d-dimensional torus of side length $L$) imposes periodic boundary conditions that restrict the admissible eigenfunctions to those satisfying $f(s) = f(s + L \cdot e_j)$. This condition discretizes the frequency domain, permitting only wave vectors on the reciprocal lattice
\[
\omega \in \Lambda = \left\{ \frac{2\pi}{L} \mathbf{k} : \mathbf{k} \in \mathbb{Z}^d \right\}.
\]
Consequently, the spectrum becomes a countable sampling of the continuous spectral density at these fixed lattice points: $\lambda_\mathbf{k} = \hat{\psi}\left(\frac{2\pi}{L}\mathbf{k}\right)$. As the domain size $L \to \infty$, the lattice becomes denser, and the discrete spectrum converges to the continuous spectral density of $\mathbb{R}^d$.

\subsection{Consistency conditions of Q-tests on spatial patterns}
We now address the central question: under what conditions can a Q-test detect \textit{any} non-constant spatial pattern? We define a test as \textit{universally consistent} if the test statistic is non-zero for all non-constant inputs.

\begin{theorem}[Universal consistency of functional Q-test]
\label{theorem:q_functional_consistency}
    A Q-test based on the centered operator $\boldsymbol{\mathcal{T}}$ is consistent against all non-constant spatial patterns (i.e., $Q(f) \neq 0$ for all $f \in L^2_0(\mathcal{S}) \setminus \{0\}$) if and only if $\boldsymbol{\mathcal{T}}$ is strictly definite on the mean-zero subspace. Specifically, if $\{\lambda_j\}_{j=1}^\infty$ are the eigenvalues corresponding to non-constant eigenfunctions, we require either $\forall j, \lambda_j > 0$ (positive definite) or $\forall j, \lambda_j < 0$ (negative definite).
\end{theorem}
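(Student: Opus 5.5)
The plan is to diagonalize $\boldsymbol{\mathcal{T}}$ and reduce the statement to an elementary fact about quadratic forms with real weights. First I restrict $\boldsymbol{\mathcal{T}}$ to $L^2_0(\mathcal{S})$. By Proposition \ref{prop:spectrum_centered}, $\mathcal{C}$ is the orthogonal projection onto $L^2_0(\mathcal{S})$, and $\boldsymbol{\mathcal{T}}$ annihilates $\mathbf{1}$ and maps $L^2_0$ into itself. It is self-adjoint because $\mathcal{C}$ and $\boldsymbol{\mathcal{K}}$ are, and compact because $\boldsymbol{\mathcal{K}}$ is Hilbert--Schmidt. The argument of Lemma \ref{lemma:spectrum_k} therefore applies to $\boldsymbol{\mathcal{T}}|_{L^2_0}$. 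I obtain eigenfunctions $\{\phi_j\}_{j\ge1}$ and real eigenvalues $\lambda_j$ such that, for every $f\in L^2_0$,
\[
Q(f)=\langle f,\boldsymbol{\mathcal{T}}f\rangle=\sum_{j\ge1}\lambda_j\,|\langle f,\phi_j\rangle|^2 ,
\]
with the series converging absolutely since $\lambda_j$ is bounded.

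One subtlety must be handled here. The eigenfunctions of a compact self-adjoint operator span only $\overline{\mathrm{ran}\,\boldsymbol{\mathcal{T}}}$, and the remaining part of $L^2_0$ is $\ker\boldsymbol{\mathcal{T}}|_{L^2_0}$. To keep the statement's phrase ``eigenvalues corresponding to non-constant eigenfunctions'' honest, I complete $\{\phi_j\}$ to an orthonormal basis of $L^2_0$ using an orthonormal basis of this kernel, assigning the added vectors eigenvalue $0$. With this convention, the condition ``all $\lambda_j>0$'' includes the requirement that $\boldsymbol{\mathcal{T}}$ be injective on $L^2_0$.

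For sufficiency, suppose all $\lambda_j>0$ and $f\in L^2_0\setminus\{0\}$. Parseval gives some $j$ with $\langle f,\phi_j\rangle\neq0$. Every term in the series is nonnegative and that term is strictly positive, so $Q(f)>0$. The negative definite case follows by applying the same argument to $-\boldsymbol{\mathcal{T}}$.

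For necessity I argue by contrapositive, in two cases.

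\textbf{Blind spot.} If some $\lambda_j=0$, take $f=\phi_j$. Then $f$ is nonzero, mean-zero and hence non-constant, and $Q(f)=0$.

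\textbf{Cancellation.} Otherwise there exist indices $i,j$ with $\lambda_i>0>\lambda_j$. Set $f=a\phi_i+b\phi_j$, so that $Q(f)=\lambda_i a^2+\lambda_j b^2$. Choosing $a=\sqrt{-\lambda_j}$ and $b=\sqrt{\lambda_i}$ gives $Q(f)=0$ with $f\neq0$, by orthonormality.

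These two cases exhaust the failure of strict definiteness, which proves the equivalence. I would remark that the cancellation case is precisely the mechanism illustrated for Moran's I in Figure \ref{fig:1}e.

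The main obstacle is not the algebra but the link between the statement's two formulations: ``$Q(f)\neq0$ for all non-constant $f$'' and the consistency of the discrete test. I would treat the former as the definition of universal consistency, as the paper does just before the theorem. I would then justify it by the Q-statistic convergence proposition: $n^{-2}Q_n\to Q(f)\neq0$ almost surely, while under the null $n^{-1}Q_n$ is $O_p(1)$ by \eqref{eq:v_stats_null}. Hence $Q_n$ grows like $n^2$ under the alternative and eventually exceeds any null quantile, using a two-sided rejection region if $\boldsymbol{\mathcal{T}}$ is negative definite. Conversely, if $Q(f)=0$, the statistic is no longer separated at the $n^2$ scale and the test is not guaranteed to reject. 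Finally, I would check that the point evaluations implicit in $f(s_i)$ are legitimate, which follows from the continuity convention adopted earlier in the paper.
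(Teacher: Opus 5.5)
Your proof is correct and follows the paper's argument essentially line for line: you expand $Q(f)=\sum_j \lambda_j|\langle f,\phi_j\rangle|^2$ on $L^2_0(\mathcal{S})$, get sufficiency from completeness of the eigenbasis, and get necessity from the same two constructions (a zero-eigenvalue ``blind spot'' eigenfunction, and the cancelling combination $\sqrt{|\lambda_q|}\,\phi_p+\sqrt{\lambda_p}\,\phi_q$). Your explicit completion of the eigenfunctions by an orthonormal basis of $\ker\boldsymbol{\mathcal{T}}|_{L^2_0}$ with eigenvalue $0$ adds rigor that the paper leaves implicit when it asserts that $\{\phi_j\}$ is a basis, and your closing paragraph on the discrete test is not needed, since the statement itself defines consistency as $Q(f)\neq 0$.
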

\begin{proof}
    Decompose $f = c\mathbf{1} + g$ where $g \in L^2_0(\mathcal{S})$. Then $Q(f) = \langle g, \boldsymbol{\mathcal{T}}g \rangle = \sum_{j=1}^\infty \lambda_j |\langle g, \phi_j \rangle|^2$.
    
    \textit{Sufficiency:} If $\lambda_j$ are strictly positive (or strictly negative), then $Q(f)=0$ implies $\langle g, \phi_j \rangle = 0$ for all $j$. Since $\{\phi_j\}$ is a basis, $g=0$, implying $f$ is constant.

    \textit{Necessity:} The test fails if the operator exhibits either of two spectral pathologies:
    \begin{enumerate}
        \item \textbf{Blind Spots:} If $\lambda_k = 0$ for some $k \ge 1$, let $f = \phi_k$. Then $Q(f) = 0$ despite $f$ being non-constant.
        \item \textbf{Cancellation:} If there exist eigenvalues with opposite signs, say $\lambda_p > 0$ and $\lambda_q < 0$, let $f = \sqrt{|\lambda_q|}\phi_p + \sqrt{\lambda_p}\phi_q$. Then $Q(f) = \lambda_p |\lambda_q| + \lambda_q \lambda_p = 0$.
    \end{enumerate}
\end{proof}

In practice, we operate on discrete data vectors $\mathbf{x} \in \mathbb{R}^n$ rather than functions. While the spectrum of the operator governs asymptotic consistency, the algebraic properties of the kernel matrix $\mathbf{K}_n$ determine the test's power in finite samples. This distinction is particularly relevant for graph-based kernels, where an underlying continuous operator from $k(\cdot, \cdot)$ may not be explicitly defined.

\begin{corollary}[Finite-sample detectability, Theorem 1 of \textcite{su2026mapping}]\label{corollary:finite_sample_zero_power}
    Consider a sample of size $n$ and a centered kernel matrix $\tilde{\mathbf{K}} = \mathbf{H}\mathbf{K}_n\mathbf{H}$. The statistic $Q_n = \mathbf{x}^\top \tilde{\mathbf{K}} \mathbf{x}$ can detect all non-constant patterns $\mathbf{x} \in \mathbb{R}^n$ (where $\mathbf{x} \neq c\mathbf{1}$) if and only if $\tilde{\mathbf{K}}$ is strictly definite on the subspace $\mathbf{1}^\perp$.

    If $\tilde{\mathbf{K}}$ is indefinite or singular on this subspace, there exists a non-empty set of ``invisible" patterns $\mathcal{N} = \{\mathbf{x} \in \mathbf{1}^\perp \setminus \{0\} : Q_n=\mathbf{x}^\top \tilde{\mathbf{K}} \mathbf{x} = 0\}$ for which the test has zero power.
\end{corollary}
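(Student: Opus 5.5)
The plan is to mirror the proof of Theorem \ref{theorem:q_functional_consistency}, but in finite dimensions, where the Spectral Theorem reduces to the eigendecomposition of a real symmetric matrix. Since $\mathbf{K}_n$ is symmetric, so is $\tilde{\mathbf{K}} = \mathbf{H}\mathbf{K}_n\mathbf{H}$. Because $\mathbf{H}\mathbf{1}=\mathbf{0}$, we have $\tilde{\mathbf{K}}\mathbf{1}=\mathbf{0}$, and $\tilde{\mathbf{K}}$ maps $\mathbf{1}^\perp$ into itself (its range lies in the range of $\mathbf{H}$, which is $\mathbf{1}^\perp$). Hence the restriction of $\tilde{\mathbf{K}}$ to the $(n-1)$-dimensional invariant subspace $\mathbf{1}^\perp$ is a self-adjoint operator. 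Choose an orthonormal eigenbasis $\mathbf{v}_1,\dots,\mathbf{v}_{n-1}$ of $\mathbf{1}^\perp$ with eigenvalues $\lambda_1,\dots,\lambda_{n-1}$.

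Next, reduce an arbitrary non-constant $\mathbf{x}$ to $\mathbf{1}^\perp$. Write $\mathbf{x}=c\mathbf{1}+\mathbf{g}$ with $\mathbf{g}=\mathbf{H}\mathbf{x}\in\mathbf{1}^\perp$; here $\mathbf{g}\neq\mathbf{0}$ exactly when $\mathbf{x}\neq c\mathbf{1}$. Then
\[
Q_n=\mathbf{x}^\top\tilde{\mathbf{K}}\mathbf{x}=\mathbf{g}^\top\tilde{\mathbf{K}}\mathbf{g}=\sum_{j=1}^{n-1}\lambda_j(\mathbf{v}_j^\top\mathbf{g})^2 .
\]

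\emph{Sufficiency.} If all $\lambda_j>0$ (or all $\lambda_j<0$), then $Q_n=0$ forces every coefficient $\mathbf{v}_j^\top\mathbf{g}$ to vanish. Since the $\mathbf{v}_j$ span $\mathbf{1}^\perp$, this gives $\mathbf{g}=\mathbf{0}$, so $\mathbf{x}$ is constant. Equivalently, $Q_n\neq 0$ for every non-constant $\mathbf{x}$.

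\emph{Necessity.} Here we construct explicit elements of $\mathcal{N}$, which proves both the "only if" direction and the non-emptiness claim. There are two cases.
\begin{enumerate}
\item If some $\lambda_k=0$ (singular on $\mathbf{1}^\perp$), take $\mathbf{x}=\mathbf{v}_k$.
\item If there exist $\lambda_p>0>\lambda_q$ (indefinite), take $\mathbf{x}=\sqrt{|\lambda_q|}\,\mathbf{v}_p+\sqrt{\lambda_p}\,\mathbf{v}_q$. Then $Q_n=\lambda_p|\lambda_q|+\lambda_q\lambda_p=0$.
\end{enumerate}
In both cases $\mathbf{x}\in\mathbf{1}^\perp\setminus\{0\}$, because the $\mathbf{v}_j$ are orthonormal and the coefficients are nonzero. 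Since the $\lambda_j$ are real, "not strictly definite on $\mathbf{1}^\perp$" means exactly that one of these two cases occurs.

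The main point needing care is interpretation rather than computation. "Detect" must be read as $Q_n\neq 0$, matching the definition of universal consistency used for Theorem \ref{theorem:q_functional_consistency}. The zero-power claim then follows because a statistic that is identically zero on $\mathcal{N}$ can never exceed a nondegenerate null rejection threshold. I would also check the invariance of $\mathbf{1}^\perp$ under $\tilde{\mathbf{K}}$ explicitly; this is what justifies diagonalizing on the subspace rather than on all of $\mathbb{R}^n$.
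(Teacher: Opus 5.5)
Your proof is correct and is the finite-dimensional version of the paper's argument for Theorem~\ref{theorem:q_functional_consistency}; the paper states this corollary without a separate proof. Like that argument, you split $\mathbf{x}=c\mathbf{1}+\mathbf{g}$, expand $Q_n$ in an eigenbasis of $\tilde{\mathbf{K}}$ restricted to $\mathbf{1}^\perp$, get sufficiency from strict definiteness, and get necessity and non-emptiness of $\mathcal{N}$ from the same blind-spot and cancellation constructions. Your explicit check that $\mathbf{1}^\perp$ is $\tilde{\mathbf{K}}$-invariant is a detail the paper leaves implicit, and reading ``detect'' as $Q_n\neq 0$ is exactly the paper's notion of universal consistency, which is all its informal zero-power claim rests on.
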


\subsection{Spectral diagnoses of inconsistent SVG methods}

As a direct application of this spectral framework, we diagnose two popular but theoretically inconsistent spatial variability tests.

\textbf{Example 1: SPARK-X}\autocite{zhu2021spark} 
\texttt{SPARK-X} was developed as an extension to \texttt{SPARK}\autocite{sun2020statistical} to improve scalability and robustness on sparse data. It replaces the parametric Poisson assumption with a linear kernel framework, thereby eliminating the need for iterative model fitting and avoiding potential algorithm stability issues. As detailed in Supplementary Note \ref{supnote:2.glm}, our analysis of GLMs and GLMMs indicates that the mean-scaling factor $\hat{\mu}_0^2$ inherent in Poisson models disproportionately shrinks the test statistic for lowly expressed genes, resulting in substantial power loss. By adopting a nonparametric approach, \texttt{SPARK-X} effectively reverts the model to a Gaussian-like formulation, mitigating this scaling issue and restoring detection power for sparse features.

On spatial modeling, \texttt{SPARK-X} adopts a low-rank kernel structure for memory and runtime savings. The kernel is constructed as a sum of 11 predefined rank-2 kernels $k(s, s')=\sum_{i=1}^{11} \langle \phi_i(s),\phi_i(s') \rangle$, where $\phi_i:\mathbb{R}^2\to\mathbb{R}^2$. Consequently, the operator has finite rank (at most 22 non-zero eigenvalues) and possesses an infinite-dimensional null space (\textit{Blind Spots}). Any spatial pattern orthogonal to the 22 generating functions results in $Q_n = 0$. The test is therefore inconsistent, a limitation previously noted in terms of zero power by \textcite{su2026mapping}.

This diagnosis extends to other SVG detection methods that rely on fixed effects or low-rank kernel approximations. For instance, \texttt{C-SIDE}\autocite{cable2022cell} utilizes spatial covariates $\mathbf{L}\in \mathbb{R}^{n\times d}$ where typically $d \ll n$, resulting a substantial null space. Similarly, methods employing Nyström approximations in GP regression sacrifice consistency for computational speed. As discussed in Supplementary Note \ref{supnote:2.singleCellHaystack}, \texttt{singleCellHaystack}\autocite{vandenbon2020clustering} exhibits similar behavior; its default discretization of the spatial (or low-dimensional embedding) domain into a grid (e.g., 100 points by default) imposes a low-rank constraint that also creates blind spots, especially when the sample size $n \gg 100$.

\textbf{Example 2: Moran's I}\autocite{moran1948interpretation}
The spatial kernel is the adjacency matrix $\mathbf{K}_n = \mathbf{W}$ of a neighborhood graph. If the graph represents a binary indicator kernel $k(s, s')=\mathbb{I}(\|s-s'\|<r)$ on $\mathbb{R}^d$ given a fixed radius $r$, then the eigenvalue spectrum is given by the Hankel transform,
\[
    \mu(\omega) = \left(\frac{2\pi r}{\|\omega\|}\right)^{d/2} J_{d/2}(r \|\omega\|),
\]
where $J_{d/2}$ is the Bessel function of the first kind. Since the Bessel function oscillates around zero, the spectrum contains both positive values (clustering) and negative values (dispersion/checkerboard patterns), as well as roots where $\mu(\omega)=0$. This means that Moran's I suffers from the \textit{Cancellation} pathology defined in Theorem \ref{theorem:q_functional_consistency}. There exist composite patterns mixing clustering and dispersion that result in $I \approx 0$, making it an inconsistent test for general spatial variability (though it remains consistent for global spatial autocorrelation).

More generally, for an arbitrary symmetric graph adjacency matrix $\mathbf{W}$ without self-loops, the diagonal elements are zero ($W_{ii} = 0$) and $\text{Tr}(\mathbf{W}) = \sum \lambda_i = 0$. Unless $\mathbf{W} = \mathbf{0}$, the spectrum \textit{must} contain both positive and negative eigenvalues to sum to zero, thereby invoking Corollary \ref{corollary:finite_sample_zero_power} and resulting in power loss. This diagnosis extends to other graph-based approaches, including \texttt{Hotspot} \autocite{detomaso2021hotspot}, \texttt{MERINGUE} \autocite{miller2021characterizing}, and \texttt{BinSpect}\autocite{dries2021giotto}.

\subsection{Power analysis and kernel design}
While consistency is a binary property (a test is either consistent or it is not), statistical power is continuous. For a fixed sample size, the power of a Q-test to detect a spatial pattern $f$ is monotonically related to the magnitude of the statistic $|Q(f)|$. Following Lemma \ref{lemma:spectrum_convolution}, we can decompose both the kernel and the spatial pattern into their frequency components, such that $Q(f)=\sum_{\omega} |\hat{f}_\omega|^2 \lambda(\omega)$. In this representation, the kernel spectrum $\lambda(\omega)$ acts as a ``frequency filter,'' determining the sensitivity profile of the test.

\subsubsection{No kernel is universally optimal}
Among the class of positive definite kernels, no single kernel is universally optimal for detecting all spatial patterns. To see this, observe that scaling the kernel spectrum by a constant factor scales the test statistic and the null distribution identically (\eqref{eq:v_stats_null}), leaving the $p$-value unchanged. Therefore, under a fixed trace constraint (e.g., $\sum \lambda_i = C$), increasing the spectral weight $\lambda(\omega)$ at specific frequencies to prioritize certain patterns necessitates decreasing the weight at others. The choice of kernel is, hence, inherently a \textit{design choice}: different kernels prioritize different spatial frequencies. This explains the discrepancies often observed between consistent SVG detection methods; they are simply tuned to detect different types of spatial variation.

We compare three major kernel classes by their spectral response to spatial frequencies $|\omega|$.

\textbf{Example 3: Gaussian Kernel (soft low-pass filter).}
For the standard radial basis function $k(s, s') = \exp(-|s-s'|^2/2l^2)$, the spectrum is given by the Fourier transform of a Gaussian, which is itself a Gaussian
\[
\mu(\omega) \propto \exp(-l^2 \|\omega\|^2/2).
\] Since the spectral density is strictly positive for all $\omega$, the Gaussian kernel is universally consistent. However, it acts as a soft low-pass filter with exponential decay. The sensitivity is strictly governed by the bandwidth parameter $l$. A large $l$ results in rapid spectral decay, providing high power for global, low-frequency trends but near-zero power for local, high-frequency variations. Conversely, as $l \to 0$, the kernel approaches the identity matrix (white noise), distributing power equally but weakly across all frequencies.

\textbf{Example 4: Graph Laplacian Kernel (high-pass filter).}
The unnormalized Graph Laplacian is $\mathbf{L}_n = \mathbf{D} - \mathbf{W}$. In the continuous limit, this operator converges to the negative Laplacian $-\Delta$ (Supplementary Note \ref{supnote:1}). While technically not an integral kernel operator, we can still study $-\Delta$ using spectral theory. Its eigenvalues grow quadratically with frequency,
\[
\lambda(\omega) \propto \|\omega\|^2.
\]
Consequently, it acts as a high-pass filter. The corresponding Q-statistic $Q_n=\mathbf{z}^\top\mathbf{L}_n\mathbf{z}$ is proportional to the Laplacian score\autocite{he2005laplacian}, which was first introduced and generalized for SVG testing by \textcite{govek2019clustering}. It measures spatial ``roughness," assigning high power to high-frequency variation and low power to smooth, global trends. This is generally the inverse of the desired behavior for detecting spatial coherence (clustering), though it effectively identifies high-frequency textures (e.g., tumor plasticity).

The Laplacian kernel is spectrally coupled to the adjacency kernel of Moran's I. For the normalized adjacency matrix $\mathbf{\tilde{W}} = \mathbf{D}^{-1/2}\mathbf{W}\mathbf{D}^{-1/2}$, the eigenvalues relate to the normalized Laplacian $\mathbf{\tilde{L}}$ via $\lambda_i(\mathbf{\tilde{L}}) = 1 - \mu_i(\mathbf{\tilde{W}})$. Since $\mu_i(\mathbf{\tilde{W}}) \in [-1, 1]$, the Laplacian eigenvalues map to $\lambda_i(\mathbf{\tilde{L}}) \in [0, 2]$. While this linear shift resolves the indefiniteness of Moran's I (making the operator positive semi-definite), it amplifies the high-frequency ``checkerboard'' modes (where $\mu \approx -1 \implies \lambda \approx 2$) while suppressing the biologically relevant clustering modes (where $\mu \approx 1 \implies \lambda \approx 0$). Note for $\mathbf{\tilde{W}}$, $Q_{\text{Lap}}=\mathbf{z}^\top \mathbf{I}\mathbf{z} - \mathbf{z}^\top \mathbf{\tilde{W}} \mathbf{z} = n+Q_{\text{Moran}}$. This suggests that Moran's I can be viewed as a consistent (high-pass) pattern detector up to a shift correction.

\textbf{Remark:} Different spectral filters can be combined in arbitrary ways to generate new kernels and SVG tests. For example, \texttt{SpaGFT}\autocite{chang2024graph} introduces an exponential transformation $\tilde{\lambda}(\omega) = e^{-\lambda(\omega)}\propto e^{-\|\omega\|^2}$ on top of the Graph Laplacian spectrum, essentially reverting back to a Gaussian kernel.

\textbf{Example 5: Inverse Laplacian (polynomial low-pass filter).}
The Inverse Laplacian (a.k.a. CAR) kernel is defined as $\mathbf{K} = (\mathbf{I} - \rho\mathbf{\tilde{W}})^{-1}$, corresponding to the covariance of a conditional autoregressive (CAR) process with autocorrelation $\rho \in (0,1)$ \autocite{su2023smoother}. In the continuous limit, this kernel converges to the resolvent of the Laplacian operator, $(I - \alpha \Delta)^{-1}$. Its spectrum is given by,
\[
    \lambda(\omega) \propto \frac{1}{1 + \alpha \|\omega\|^2}.
\]
Unlike the exponential decay of the Gaussian kernel, the CAR kernel exhibits polynomial decay (heavy-tailed). Theoretically, this is equivalent to the Matérn kernel with smoothness parameter $\nu=0$ in $\mathbb{R}^2$ (or $\nu=0.5$ in $\mathbb{R}^1$) via the stochastic partial differential equation, as first shown by \textcite{lindgren2011explicit}. This heavy tail preserves significantly more power for intermediate and high-frequency patterns than the Gaussian kernel. Using the eigenvalue relation $\lambda_i(\mathbf{K}) = (1 - \rho \mu_i(\mathbf{\tilde{W}}))^{-1}$, we see that the CAR transforms the indefinite spectrum of $\mathbf{\tilde{W}}$ into a positive definite spectrum that correctly prioritizes low-frequency clustering modes ($\mu \approx 1$) while damping high-frequency checkerboard modes ($\mu \approx -1 \implies \lambda \approx 0.5$).

The limitation of Moran's I lies in its conflation of high and low frequencies through an oscillatory spectrum. Our analysis suggests that Moran's I can be split into two robust complementary components, the high-pass Graph Laplacian (for roughness) and the low-pass Inverse Laplacian (for smoothness). One computational advantage of this approach is sparsity. The Graph Laplacian retains the sparsity structure of the adjacency matrix, reducing the memory complexity of kernel evaluation from $O(N^2)$ for dense kernels to $O(kN)$ for $k$-mutual-nearest-neighbors.

\subsubsection{Data-driven kernels do not guarantee better performance}\label{supnote:3.data_driven_kernel}
Given that fixed kernels impose specific spectral priors, it is tempting to employ data-driven approaches to learn the kernel parameters or the kernel structure itself, as suggested by \textcite{svensson2018spatialde, yan2025categorization}. Nevertheless, it is also important to distinguish the objectives of \textit{model inference} and \textit{hypothesis testing}. In modeling (e.g., GP regression), hyperparameter tuning is essential to minimize prediction error. In hypothesis testing, however, optimizing the kernel to maximize the test statistic on the observed data constitutes a form of ``double dipping" or ``p-hacking," which can lead to inflated Type I error rates. We categorize these risks across a spectrum of model flexibility.

Because any random spatial pattern in a finite sample corresponds to infinitely many deterministic functions, it is theoretically possible to find specific kernel structure for each gene that makes any $\mathbf{x}$ statistically significant. Therefore, gene-specific optimization is prone to overfitting. For example, tuning the Gaussian bandwidth $l$ for each gene (as in \texttt{SpatialDE} \autocite{svensson2018spatialde}) allows the model to select a length scale that maximizes the significance of noise artifacts. Even when constrained to a single degree of freedom, such optimization may expose the kernel to the specific realization of noise, making it difficult to distinguish signal from spurious fluctuations.

Alternatively, one might optimize a set of global kernel hyperparameters across the entire dataset to avoid gene-specific overfitting. While this preserves valid error control, it biases the test toward the dominant sources of variation in the data. If the dataset is plagued by technical noise or broad structural trends, the learned kernel will prioritize these features. Biologically distinct but rare patterns, which do not align with the ``majority vote'' of the global parameters, may be masked. This is most common in deep learning methods such as Convolutional Neural Networks (CNNs) and Graph Convolutional Networks (GCNs). As discussed in Section \ref{supnote:1.cnn}, they effectively learn the kernel matrix $\mathbf{K}_{\text{conv}}$ itself, with a specific spectrum $\text{eig}(\mathbf{K}_{\text{conv}}) = \mathcal{F}(\tilde{\mathbf{A}})$. Just as a CNN learns to detect specific visual features, a data-driven kernel acts as a matched filter, maximizing sensitivity only to patterns prioritized during training.

A prominent example is \texttt{SpaGCN} \autocite{hu2021spagcn}, which trains a GCN using an iterative unsupervised clustering objective and identifies SVGs via differential expression analysis between the resulting clusters. From a kernel testing perspective, this process is twofold: (1) The GCN layer aggregates data via a learnable adjacency matrix $\mathbf{A}$, acting as a low-pass filter that smooths expression between neighbors. (2) The clustering step constructs a discrete kernel function $k(s, s') = \mathbb{I}(\text{domain}(s) = \text{domain}(s'))$ using the learned parameters (e.g., $\mathbf{A}$). Since expression $\mathbf{x}$ is used in training to learn $k(\cdot, \cdot)$, hypothesis testing here is inherently circular\autocite{zhang2019valid, song2025synthetic}. Specifically, \texttt{SpaGCN}'s training objective maximizes the separation of gene expression profiles with all genes $\{\mathbf{x}_g \}_g$, so its SVG test prioritizes genes that align with the majority and drive the clustering.

In summary, there is no ``unbiased'' data-driven kernel; there is only a kernel biased by the training data and the objective function. Therefore, for general-purpose SVG discovery, a fixed, heavy-tailed kernel (like the CAR or Matérn) often provides a more robust compromise than data-driven alternatives.

\newpage
\printbibliography[title={Supplementary References}]
    
\end{refsection}

\end{document}